\numberwithin{equation}{section}
\newtheorem{thm}{Theorem}[section]\crefname{thm}{Theorem}{Theorems}
\newtheorem{lem}[thm]{Lemma}\crefname{lem}{Lemma}{Lemmas}
\newtheorem{prop}[thm]{Proposition}\crefname{prop}{Proposition}{Propositions}
\crefname{cor}{Corollary}{Corollaries}
\theoremstyle{definition}
\newtheorem{dfn}[thm]{Definition}\crefname{def}{Definition}{Definitions}
\theoremstyle{remark}
\setlist[enumerate,1]{label=(\roman*)}
\DeclareMathOperator{\tr}{tr}
\DeclareMathOperator{\sgn}{sgn}
\DeclareMathOperator{\Span}{span}
\DeclareMathOperator{\per}{per}
\DeclareMathOperator{\maj}{maj}
\DeclareMathOperator{\chir}{chir}
\DeclareMathOperator{\MERA}{MERA}
\DeclareMathOperator{\IR}{IR}
\DeclareMathOperator{\UV}{UV}
\newcommand{\CC}{\mathbb C}
\newcommand{\ZZ}{\mathbb Z}
\newcommand{\RR}{\mathbb R}
\newcommand{\NN}{\mathbb N}
\newcommand{\HH}{\mathbb H}
\renewcommand{\SS}{\mathbb{S}^1}
\newcommand{\Htr}{\mathcal H}
\newcommand{\CAR}{\mathcal{A}_{\wedge}}
\newcommand{\CARsd}{\mathcal{A}_{\wedge}^{\mathrm{sd}}}
\newcommand{\fock}{\mathcal{F}_{\wedge}}
\newcommand{\ot}{\otimes}
\newcommand{\op}{\oplus}
\newcommand{\eps}{\varepsilon}
\newcommand{\id}{\mathbbm 1}
\newcommand{\one}{\mathbf 1}
\renewcommand{\d}{\ensuremath{\mathrm{d}}}
\newcommand{\cL}{\mathcal L}
\newcommand{\bigo}{\mathcal O}
\newcommand{\change}[1]{\ignorespaces}
\begin{document}

\title{Quantum circuit approximations and entanglement renormalization for the Dirac field in 1+1 dimensions}
\author[1]{Freek Witteveen}
\author[2]{Volkher Scholz}
\author[3]{Brian Swingle}
\author[1,4]{Michael Walter}
\affil[1]{Korteweg-de Vries Institute for Mathematics and QuSoft, University of Amsterdam}
\affil[2]{Department of Physics, Ghent University}
\affil[3]{Condensed Matter Theory Center, Maryland Center for Fundamental Physics, Joint Center for Quantum Information and Computer Science, and Department of Physics, University of Maryland}
\affil[4]{Institute for Theoretical Physics, Institute for Language, Logic, and Computation, University of Amsterdam}
\date{}
\maketitle
\begin{abstract}
The multiscale entanglement renormalization ansatz describes quantum many-body states by a hierarchical entanglement structure organized by length scale.
Numerically, it has been demonstrated to capture critical lattice models and the data of the corresponding conformal field theories with high accuracy.
However, a rigorous understanding of its success and precise relation to the continuum is still lacking.
To address this challenge, we provide an explicit construction of entanglement-renormalization quantum circuits that rigorously approximate correlation functions of the massless Dirac conformal field theory.
We directly target the continuum theory: discreteness is introduced by our choice of how to probe the system, not by any underlying short-distance lattice regulator.
To achieve this, we use multiresolution analysis from wavelet theory to obtain an approximation scheme and to implement entanglement renormalization in a natural way.
This could be a starting point for constructing quantum circuit approximations for more general conformal field theories.
\end{abstract}
\tableofcontents

\section{Introduction}

Quantum information theory is generally formulated in terms of discrete quantum bits and quantum circuits.
However, our most fundamental theories of nature are formulated as quantum field theories, and it is a physical and mathematical challenge to understand the role of quantum information in such continuum theories.
In this work, we bridge these two paradigms for the case of a free massless Dirac field in 1+1 dimensions and show how to rigorously represent its entanglement structure through a quantum circuit.

Quantum circuits are examples of tensor networks, which parameterize quantum many-body states with a relatively small number of parameters by restricting the allowed entanglement structure. Tensor networks have been very successful for studying discrete quantum systems~\cite{orus2014practical}.
Several approaches have been proposed to extend the notion of a quantum circuit, or more generally of a tensor network, to quantum field theories.
Roughly speaking there are two distinct routes: one is to define a variational class of continuum states, whereas the other is to consider a restricted set of observables and try to approximate correlation functions of these observables.

An example of the former is cMERA~\cite{haegeman2013entanglement}, which defines a class of states that arise from a real-space renormalization procedure.
In this case the `quantum circuit' that performs the entanglement renormalization is also continuous.
Another example is cMPS~\cite{verstraete2010continuous}, which can be interpreted as a path integral~\cite{brockt2012continuum}.
Both approaches have been successfully demonstrated numerically for free theories, and these classes of states have also been used as a basis for perturbation theory~\cite{cotler2019entanglement} and variational algorithms~\cite{haegeman2010applying} for 1+1 dimensional quantum field theories.
\change{In particular, cMPS can be used as a variational ansatz to study interacting field theories at very high precision, see for instance~\cite{verstraete2010continuous,ganahl2017continuous}.
Yet rigorous proofs have largely been elusive.}

In this paper, we follow the second route, by considering correlation functions of smeared operators.
These operators are discretized at an appropriate scale and an ordinary quantum circuit circuit is used to prepare a state with which to compute their correlation functions.
This means that the discreteness in our description arises not from the system itself, but in our choice of how to probe the system.

The circuits that we derive fit in the Multi-scale Entanglement Renormalization Ansatz (MERA)~\cite{vidal2007entanglement,vidal2008class}, a tensor network ansatz designed for systems with scale invariance that implements a kind of real-space renormalization.
A MERA tensor network prepares a quantum many-body state through a series of layers, each of which consists of isometries followed by local unitary transformations.
If we apply the circuit in reverse, the latter disentangle local degrees of freedom and the former coarse-grain the system by a factor of two.
For a scale-invariant theory, each of these layers can be taken identical, and it has been demonstrated numerically for some paradigmatic Hamiltonians that \change{for a variationally optimized scale-invariant MERA} the conformal data of the limiting theory, such as the scaling dimensions and operator product expansion (OPE) coefficients, can be extracted from the scaling superoperator corresponding to a single network layer~\cite{evenbly2013quantum}.

Tensor networks have to a large extent been developed as a method to efficiently simulate quantum systems on a classical computer.
However, evaluating correlation functions for a MERA tensor network can still be very costly, with the computational cost scaling as a high power of the number of parameters.
If one extends the MERA to a quantum circuit, it can be simulated efficiently on a quantum computer provided the complexity of each layer is not too large. It has been argued that the structure of entanglement renormalization may be relatively insensitive to small errors and that many models of physical interest have layers of low complexity, thus it may be a useful circuit model for quantum computers to simulate quantum systems at or away from criticality~\cite{kim2017robust}.
In this regard, our results provide additional evidence that tensor networks are a promising application of noisy quantum computers, as we now also have the possibility to address continuum theories.
\change{Of course, the free Dirac field in 1+1~dimensions is exactly solvable, and our analytically constructed circuits provide no computational speed-up.
However, while the exact solution for a free theory is straightforward in momentum space, the construction of entanglement renormalization circuits is more complicated in the free setting, and closely related to wavelet theory \cite{evenbly2016entanglement,haegeman2018rigorous,witteveen2020wavelet}.}

A final motivation to investigate tensor networks for conformal field theories is provided by the wish to study holography (a duality between two quantum theories, one in~$d$ dimensions and one in~$d+1$ dimensions).
The main example is provided by the AdS/CFT correspondence, a conjectural relation between quantum gravity on an AdS space with a conformal field theory on its conformal boundary~\cite{maldacena1999large}.
It has been remarked that entanglement renormalization has a structure reminiscent of this duality~\cite{swingle2012entanglement}, as the circuit reorganizes a critical one-dimensional system to a two-dimensional structure that is a discretization of AdS space, although the precise connection to holographic theories is still being developed~\cite{bao2015consistency,milsted2018geometric}.
Any MERA tensor network can be extended to a unitary quantum circuit by extending the isometries to unitaries with an auxiliary input, so that the MERA is recovered by applying the circuit to an appropriate product state.
Such extensions are not unique.
In contrast, our construction naturally yields a unitary quantum circuit that reorganizes the degrees of freedom of the Dirac theory in one higher dimension, by position and scale, cleanly separating positive and negative energy modes of the Dirac fermion.
Thus it can be seen as a circuit realization of a holographic mapping for an actual conformal field theory, complementing tensor network toy models of holographic mappings as proposed in~\cite{pastawski2015holographic,yang2016bidirectional,hayden2016holographic,nezami2016multipartite}.
\change{However, since the Dirac fermion is not an example of a conformal field theory with an AdS gravity dual, this should merely be seen as an analogy, without direct application to AdS/CFT.}

\subsection{Prior work}\label{sec:prior work}
\change{
The only rigorous results and constructions that are known for entanglement renormalization and MERA in one-dimensional critical systems are for discrete lattice systems and rely on wavelet theory.
Our work adds to this line of research by extending it to the continuum.
The idea to use wavelet theory for renormalization is very intuitive and in fact dates from the early phases of wavelet theory (see for instance~\cite{battle1999wavelets}).
In Refs.~\cite{qi2013exact,lee2017generalized} Haar wavelets were used as a fermionic holographic mapping, and in Ref.~\cite{singh2016holographic} Daubechies wavelets were used as a bosonic holographic mapping.
The connection to quantum circuits and entanglement renormalization was made in Refs.~\cite{evenbly2016entanglement,evenbly2018representation}.
For us Ref.~\cite{haegeman2018rigorous} is especially relevant. In this work a systematic method to construct circuits for lattice fermions was described based on the discrete wavelet transform of a pair of wavelets with special properties, a so-called approximate Hilbert pair.
It is a well-known that there is both a \emph{discrete} and a \emph{continuous} perspective on wavelet transforms, but from previous work it was unclear whether and how continuous wavelet theory relates to the quantum field theory limit of the lattice systems.
We show that the extension of the discrete to the continous wavelet transform precisely relates the entanglement renormalization circuits in a natural way to the Dirac fermion.
On a conceptual level, this extends the wavelet-MERA relation in a nontrivial way (as the continuous wavelet and scaling functions did not have an interpretation in this relation before).
This offers a new perspective on the relation between MERA and its continuum limit. For instance, it becomes very clear why the entanglement renormalization superoperator captures some scaling dimensions of the theory exactly in this construction.
On a more technical level, we extend the proof techniques of \cite{haegeman2018rigorous} to the periodic setting, provide bounds with slightly improved scaling, we show how natural properties of the operators (their smoothness and support) control the accuracy of the approximation and we apply the bounds to a wider class of operators.}

\subsection{Summary of results}
We now describe our main results.
The model that we consider is the free massless Dirac fermion in 1+1 dimensions, with action
\begin{align*}
  S(\Psi) = \frac{1}{2} \int \Psi^\dagger \gamma^0 \gamma^{\mu} \partial_{\mu}  \Psi \,  \d x \d t
\end{align*}
for a two-component complex fermionic field~$\Psi$ on the line (or on a circle).
The usual second quantization procedure shows that the fields have correlation function
\begin{align*}
  \langle \Psi^\dagger(x) \Psi(y) \rangle = \frac{1}{x - y}.
\end{align*}
The stress energy tensor is a normal-ordered product of the fields and its derivatives. In complex coordinates~$z = x + it$ and $\overline{z} = x - it$, the stress-energy tensor has a holomorphic $T_{zz}$ component
for which one may deduce that
\begin{align*}
  \langle T_{zz}(x) T_{zz}(y) \rangle = \frac{1/2}{(x - y)^4}
\end{align*}
and hence the theory has central charge~$c =1$. For details from the conformal field theory point of view, see~\cite{francesco2012conformal}.
\change{One approach to a mathematical formulation of quantum field theories is the algebraic approach, which allows a rigorous description of certain quantum field theories.
This is in particular the case for free field theories, such as the Dirac fermion.}
We will briefly review the algebraic approach to the Dirac fermion in \cref{sec:prelim}. In this approach, in order to have well-behaved operators, one usually `smears' the fields. That is, for some function $f$ one defines
\begin{align*}
  \Psi(f) = \int f(x) \Psi(x) \d x.
\end{align*}
From a physical perspective the smearing function is justified by the fact that one can only probe the system at some finite scale.

We will now describe a procedure which approximates correlation functions of smeared operators.
Informally, the procedure is that we first discretize the operators at some scale (i.e., we impose a UV cut-off), and then, in order to obtain the free fermion vacuum, we need to `fill the Dirac sea' up to the relevant scale.
So, the circuit, starting from the Fock vacuum, has to fill all the negative energy modes over the range of scales that are relevant for the inserted operators, directly analogous to a real-space renormalization procedure.
We know the negative energy states explicitly in Fourier space, but the non-trivial problem is that we want to construct a \emph{local} circuit, while the Fourier basis for the negative energy solutions is very non-local.
In order to obtain a circuit that is compatible with scale invariance and translation invariance, but is still local, we are led to search for a \emph{wavelet} basis for the space of negative energy solutions.
It is not possible to construct a basis that is both completely local and consists of exactly negative energy solutions, but it turns out it is approximately possible by using a pair of wavelets that approximately satisfy a certain phase relation.
Such pairs of wavelets, called \emph{approximate Hilbert pairs} have already been constructed for other purposes~\cite{selesnick2002design}, and as discussed in \cref{sec:prior work} these are closely related to the construction of (approximate) ground states for critical free fermions.
This construction takes as input two integer parameters $K$ and $L$, such that the support of the wavelet is of size~$2(K+L)$, and there is an approximation parameter $\eps$ which measures how accurately the phase relation is satisfied.
The wavelet functions give rise to a `classical' circuit, which implements the decomposition of a function in the wavelet basis at different scales.
This circuit should be thought of as a circuit on the single-particle level, and the fermionic quantum circuit is obtained as its second quantization.

Now let $\{O_i\}$, $i = 1, \ldots, n$ be a set of smeared operators that are either linear in the fields or normal-ordered quadratic operators, and which are compactly supported.
We denote the correlation functions by
\begin{align}\label{eq:cor fun intro}
  G(\{ O_i \}) = \langle O_1 \cdots O_n \rangle.
\end{align}
The procedure sketched above discretizes the operators~$O_i$ and constructs a quantum circuit that computes an approximation~$G^{\MERA}_{\cL, \eps}(\{ O_i \})$ of the correlation function, where $\cL$ is the number of layers of the circuit, and $\eps$ is an error parameter.
The structure of the circuit is illustrated in \cref{fig:circuit_intro} (both for the line and circle).

\begin{figure}
\centering
(a)\;\raisebox{-3.5cm}{
\begin{overpic}[width=4.0cm]{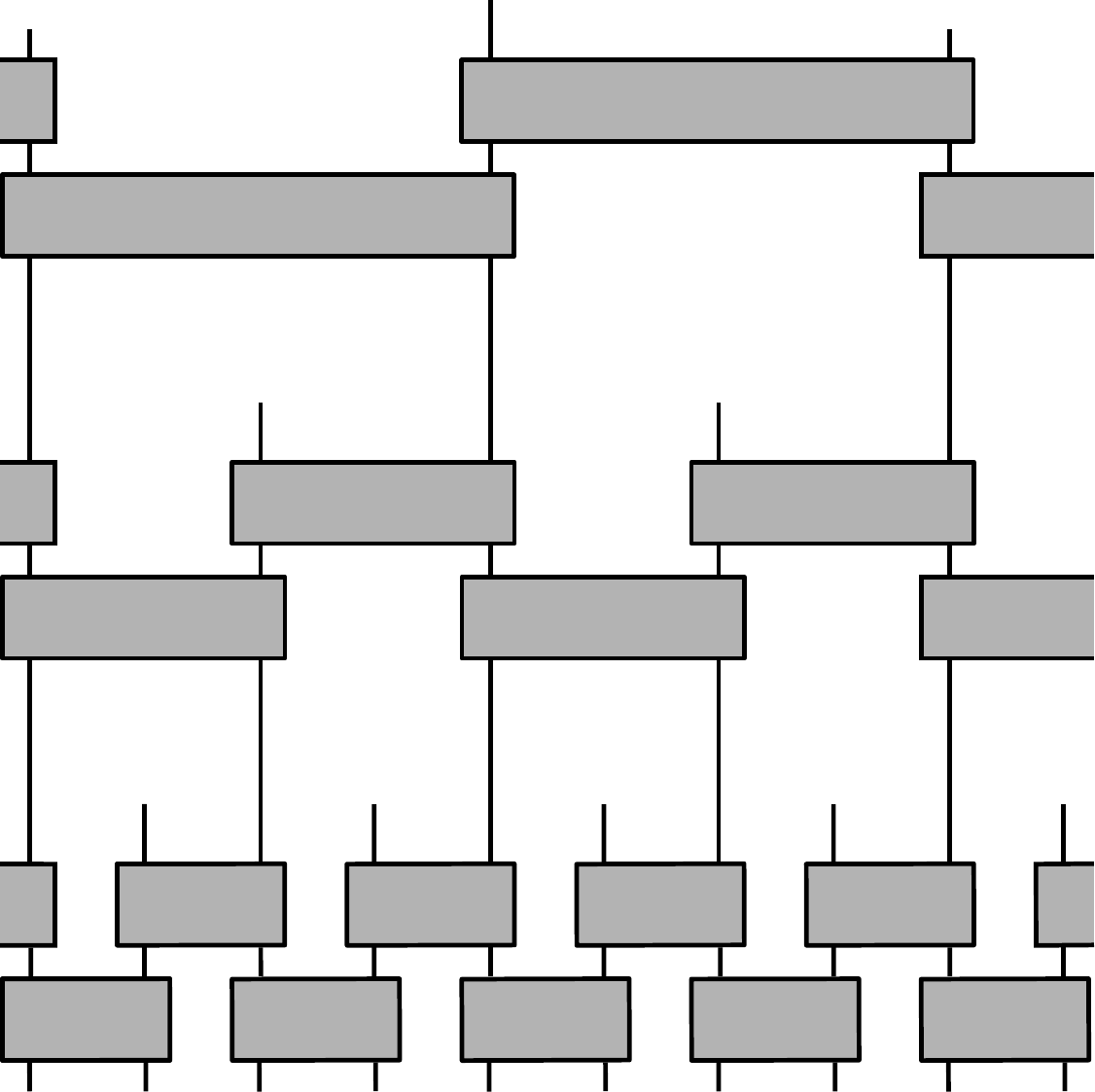}\end{overpic}}
\qquad
(b)\;\raisebox{-3.5cm}{\scriptsize
\begin{overpic}[width=4.0cm,height=3.5cm]{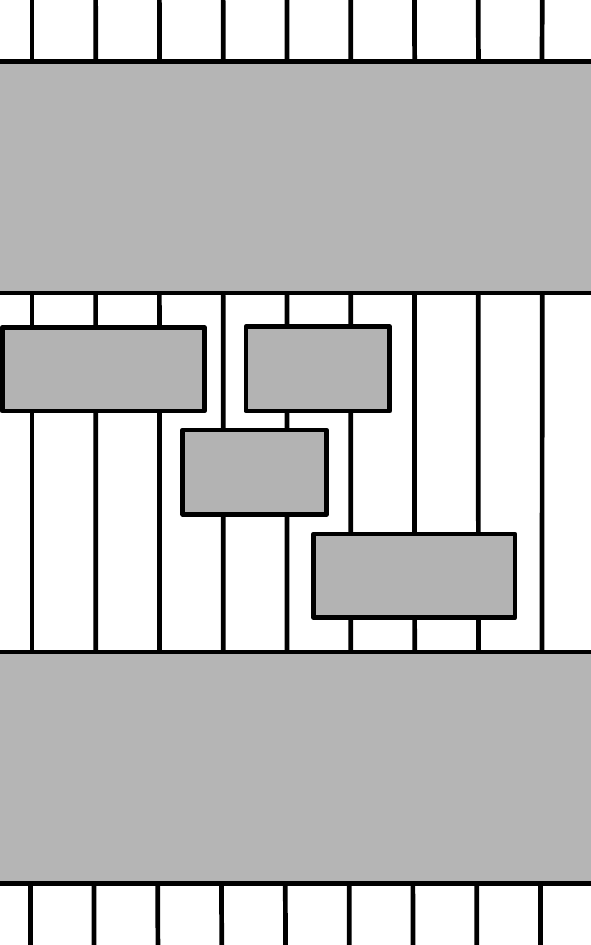}
\put(1,90){$\ket{0}$} \put(12,90){$\ket{0}$} \put(23,90){$\ket{0}$} \put(34,90){$\ket{0}$} \put(45,90){$\ket{0}$} \put(56,90){$\ket{0}$} \put(67,90){$\ket{0}$} \put(78,90){$\ket{0}$} \put(89,90){$\ket{0}$}
\put(36,69){$\text{circuit}$}
\put(15,51.7){\tiny $\tilde{O}_1$}
\put(50,51.7){\tiny $\tilde{O}_2$}
\put(39.5,42){\tiny $\tilde{O}_3$}
\put(66,32){\tiny $\tilde{O}_4$}
\put(36,15.5){$\text{circuit}^{\dagger}$}
\put(1,-6){$\bra{0}$} \put(12,-6){$\bra{0}$} \put(23,-6){$\bra{0}$} \put(34,-6){$\bra{0}$} \put(45,-6){$\bra{0}$} \put(56,-6){$\bra{0}$} \put(67,-6){$\bra{0}$} \put(78,-6){$\bra{0}$} \put(89,-6){$\bra{0}$}
\end{overpic}}
\qquad
(c)\;\raisebox{-3.5cm}{\includegraphics[width=4cm]{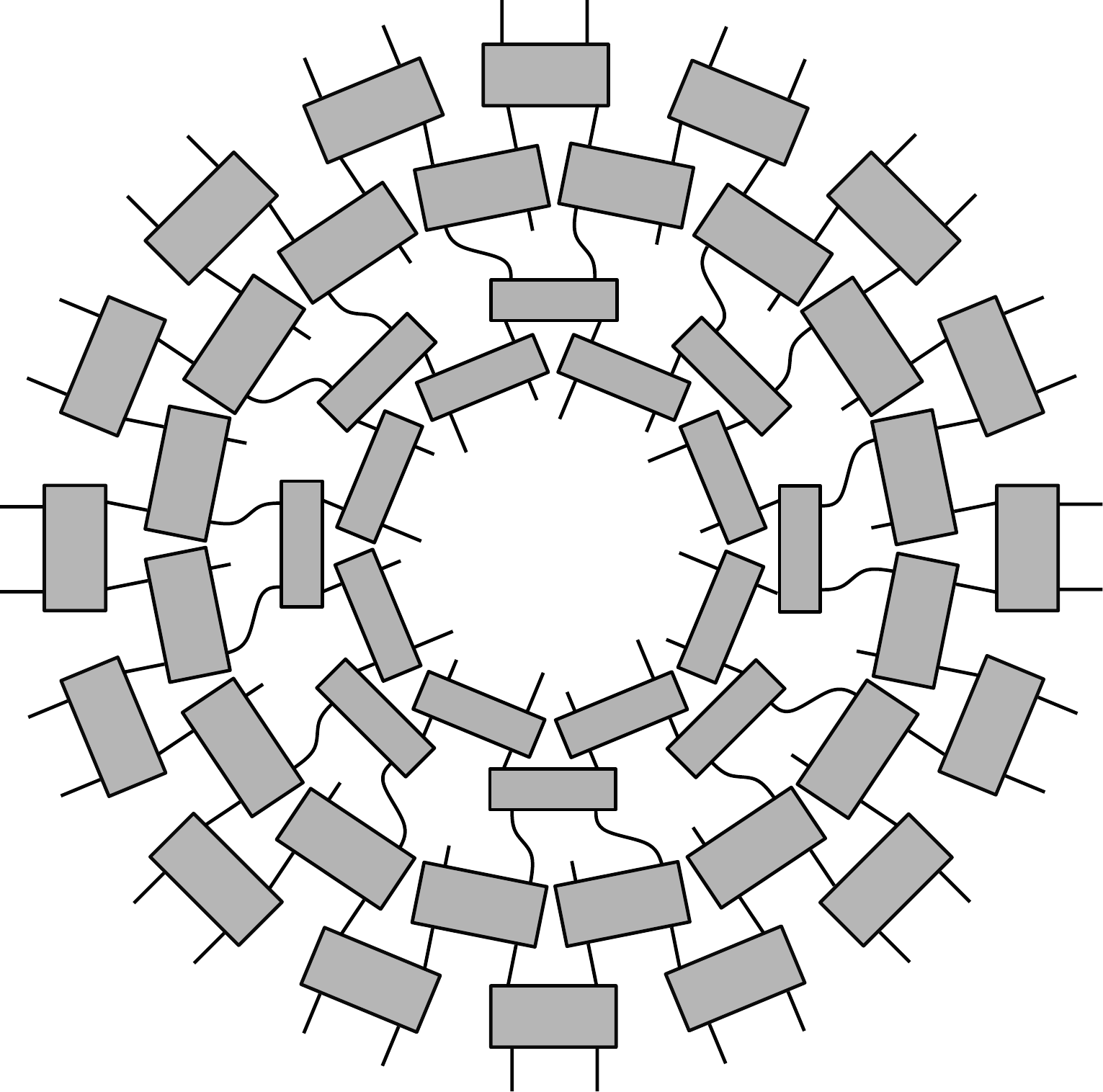}} \\[0.1cm]
\caption{(a)~Structure of the circuit for the Dirac fermion on the line. Each layer is an identical local quantum circuit of depth~$K+L$.
(b)~Correlation functions~\eqref{eq:cor fun intro} are computed as expectation values of discretizations $\tilde{O}_i$ of the operators $O_i$.
(c)~Periodized circuit for the Dirac fermion on the circle.}
\label{fig:circuit_intro}
\end{figure}

The following is a simplified version of our main result.
A precise formulation is given by \cref{thm:approximation}, where we also specify precisely which operators we consider and give explicit bounds for the approximation error.
We assume that we are given a family of wavelet filters with uniformly bounded scaling functions, of support $M$ and approximating the Hilbert pair relation to accuracy $\eps$.
The constructed circuits have depth $D = \lceil \tfrac{M}{2}\rceil$ for a single circuit layer, and the bond dimension of the corresponding MERA tensor network is given by $\chi = 2^{D}$.

\begin{thm}[Informal]\label{thm:introduction}
Let $O_1,\dots,O_n$ be a collection of Dirac field creation or annihilation operators or normal-ordered quadratic operators with compact support and smeared by a differentiable function.
Then the approximation error is bounded by
\begin{align*}
  \lvert G(\{ O_i \}) - G^{\MERA}_{\cL, \eps}(\{ O_i \} \rvert = \bigo(M^2 2^{-\frac{\cL}{3}}) +  \bigo(\eps \log \tfrac{M}{\eps}).
\end{align*}
The constants in the $\bigo$-notation depend on~$n$ and the support and smoothness of the $O_i$.
\end{thm}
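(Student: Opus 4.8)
The plan is to exploit that the Dirac ground state is quasi-free, so that by the fermionic Wick theorem the $n$-point function $G(\{O_i\})$ is a fixed polynomial (a sum over Wick contractions) in the two-point functions of the constituent smeared fields. I would first establish an \emph{error-propagation lemma}: writing the ground-state two-point function as $\langle g, P f\rangle$ with $P$ the positive-energy projection on the one-particle space, if $P$ is replaced by an approximate covariance $\tilde P$ for which $\lVert (P - \tilde P) f \rVert$ is small on the relevant smearing functions, then $G$ changes by the perturbation times a combinatorial factor depending on $n$ and the operator norms $\lVert O_i \rVert$ (which for creation/annihilation operators are just the $L^2$-norms of the smearing functions). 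For the normal-ordered quadratic operators the same reduction applies after subtracting the vacuum piece, with the extra bookkeeping that the singular coincident-point contribution cancels and the differentiability of the smearing is used to tame the more singular kernel. This reduces the whole theorem to controlling $\lVert (P - \tilde P) f\rVert$ and the perturbed inner products for the discretized smearing data.

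Next I would split the total error into a UV (discretization) contribution, governed by $\eps$, and an IR (finite-depth) contribution, governed by $\cL$, matching the two terms in the statement. For the UV part I would use the multiresolution analysis: replace each smearing function $f$ by its projection $P_{s_0} f$ onto the scale-$s_0$ approximation space, expressed in the scaling-function basis. Since $f$ is compactly supported and differentiable, standard wavelet approximation estimates give $\lVert f - P_{s_0} f\rVert = \bigo(2^{-s_0}\lVert f'\rVert)$, with the number of retained coefficients controlled by the support. The subtlety is that these $L^2$ errors must be weighed against $P$ (essentially the Hilbert transform, with real-space kernel $1/(x-y)$): integrating the approximation error against $1/(x-y)$ produces a logarithmic accumulation over the $\bigo(\log(1/\eps))$ relevant scales, which is precisely the source of the $\bigo(\eps\log\eps)$ term.

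For the IR part I would use that the wavelet filters are engineered so that a single layer of the circuit implements one step of the discrete wavelet transform, and that the iterated transform approximately diagonalizes $P$ into a direct sum over scales, cleanly separating the positive- and negative-energy half-lines. The exact covariance is recovered only in the limit $\cL\to\infty$; truncating at $\cL$ layers discards the contribution of scales coarser than $2^{\cL}$. Bounding this tail against the long-distance decay of the kernel and the filters' approximation of the ideal symbol gives a geometric decay in $\cL$, with the exponent $1/3$ emerging from optimizing the per-scale filter error against the number of layers. I expect this IR estimate to be the main obstacle: one must show that the residual after $\cL$ wavelet steps decays not merely in each matrix element but in the operator norm required by the error-propagation lemma, uniformly over the growing set of smearing functions, and that the logarithmic UV piece and the geometric IR piece do not interfere. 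Assembling the two bounds and feeding them back through the Wick reduction then yields the stated $\bigo(2^{-\cL/3}) + \bigo(\eps\log\eps)$ estimate, with the claimed polynomial dependence on the circuit parameters, support, and smoothness.
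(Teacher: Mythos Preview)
Your high-level architecture is right: reduce to a one-particle estimate on the symbol via the quasi-free structure, then telescope. But you have misidentified what $\eps$ is, and as a consequence your attribution of the two error terms is backwards.

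In the paper, $\eps$ is \emph{not} a UV discretization parameter. It is the quality parameter of the \emph{approximate Hilbert pair} of wavelet filters (\cref{def:hilbert_pair}): the scaling filters satisfy $\lVert \hat h_s - \mu_s \hat g_s\rVert_\infty \le \eps$, and this is what lets the second wavelet basis stand in for the Hilbert transform of the first. Each layer of the discrete wavelet transform therefore incurs a filter error of size~$\eps$, and over $\cL'$ layers these accumulate to $\eps\cL'$ (\cref{lem:filtererror}). The $\eps\log\eps$ term arises because one chooses $\cL' \sim \log(1/\eps)$ to balance this linear growth against the remaining geometric terms. There is no integration against the $1/(x-y)$ kernel anywhere in the argument; the Hilbert transform never appears as a singular integral but only through the phase relation between the two wavelet bases.

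Correspondingly, the $2^{-\cL/3}$ term is not a pure IR tail. The symbol approximation $\lVert(Q-\tilde Q_{j,\cL})f\rVert$ carries \emph{three} contributions (\cref{prop:approx_symbol}): the filter error $\eps\cL'$ just discussed, a UV cut-off error $\sim 2^{-j}\lVert f'\rVert$ from projecting to scale~$j$ (\cref{lem:UV}), and an IR cut-off error $\sim 2^{(j-\cL')/2}\sqrt{D(f)}\,\lVert f\rVert$ from discarding scales coarser than $j-\cL'$ (\cref{lem:IR}). The exponent $1/3$ comes from optimizing the UV scale $j$ between the last two: setting $2^{-j}\approx 2^{(j-\cL')/2}$ gives $j\approx \cL'/3$ and a combined error $\sim 2^{-\cL'/3}$. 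So both UV and IR feed into the $2^{-\cL/3}$ term, and the filter error alone feeds into $\eps\log\eps$.

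Finally, your treatment of the normal-ordered quadratic operators is too optimistic. These are unbounded, so the telescoping step cannot use $\lVert O_i\rVert$; the paper instead inserts finite-particle projections $\Pi_{2k}$ and invokes the Carey--Ruijsenaars bound \eqref{eq:operator_est} and its perturbed version \eqref{eq:second quant vs symbol}, which require Hilbert-Schmidt control on the off-diagonal blocks of $A_i$ with respect to both $Q$ and $\tilde Q$. That, not any coincident-point cancellation, is what makes the quadratic insertions go through.
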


Our main theorem provides a justification for the numerical success of MERA for quantum field theories by providing rigorous bounds on the approximation of correlation functions.
To illustrate our result, we show the precise error bounds obtained for a two-point function in \cref{fig:approximation error}.
The error bounds in \cref{thm:introduction} are invariant under rescaling (which is of course a desirable property for a scale invariant theory).
A Dirac fermion can be decomposed into two Majorana fermions.
Our construction is compatible with this decomposition, so we also obtain quantum circuits for Majorana fermions.

\change{There is an extensive body of literature on approximate Hilbert pairs, see for instance \cite{selesnick2005dual, yu2005hilbert, chaudhury2009construction, chaudhury2010hilbert}.
It remains an open problem to prove analytic bounds on the decay of $\eps$ with $M$, see \cite{achard2017new} for recent results in this direction.
For the construction of \cite{selesnick2002design} where $M = 2(K+L)$, numerically the parameter~$\eps$ is seen to decrease exponentially with $\min\{K,L\}$ (see \cref{tab:constants}).}

\begin{figure}
\begin{center}
\includegraphics[width=0.5\linewidth]{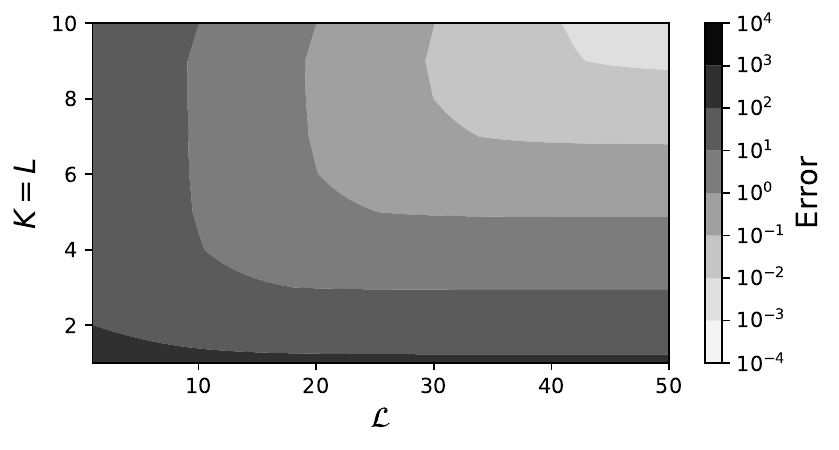}
\caption{The error bound from \cref{thm:introduction} (\cref{thm:approximation}) illustrated for a two-point function.
It is obtained by evaluating \cref{eq:sharp constant} using \cref{tab:constants} for an approximate Hilbert pair with parameters~$K=L$.
The smearing functions are taken to be translates of a function $f$ with $\lVert f \rVert = 1$ and optimal trade-off between smoothness and support (that is, $D = \sqrt{2}$ in the formulation of \cref{thm:approximation}).}
\label{fig:approximation error}
\end{center}
\end{figure}

Note that our construction gives rise to a circuit rather than a MERA tensor network in a canonical way, and that the bond dimension is exponential in the circuit depth, so this provides a potential starting point for investigating quantum algorithms for quantum field theory correlation functions.
One of the interesting features of 1+1 dimensional conformal field theories is that they have many symmetries.
Discretizing the theory necessarily breaks these symmetries.
However, we find that spatial translation, time translation and rescaling by a factor two all have natural implementations on the MERA (where rescaling by a factor 2 is precisely implemented by a single circuit layer).


\subsubsection*{Numerical examples}
Since the circuits we obtain are the second quantization of a single-particle circuit we can simulate them for high circuit depth (bond dimension).
In \cref{fig:numerics}, (a) and (b) we show approximations to the smeared two-point functions for the fermionic fields and for the stress-energy tensor for $K = L = 1$ and $K=L =3$ in the wavelet construction, corresponding to MERA tensor networks with bond dimensions $\chi = 4$ and $\chi = 64$ respectively.
Another statistic is the entanglement entropy of an interval.
In order to define this one needs a cut-off, for which we use the wavelet discretization.
The Cardy formula~\cite{cardy1986operator} predicts that for a conformal field theory the entanglement entropy of an interval scales as $S_E = \frac{c}{3}\log(L) + c'$ where $c$ is the central charge, $L$ is the size of the interval, and $c'$ a non-universal constant depending on the cut-off.
In \cref{fig:numerics},~(c) we have plotted the entanglement entropies obtained from our construction (for the same wavelets).
For $K = L = 3$ the agreement with the Cardy formula for $c =1$ is already very accurate.
As another numerical illustration of the accuracy of approximation, one may compute eigenvalues of the entanglement renormalization superoperator and extract scaling dimensions of the conformal field theory from its eigenvalues~\cite{pfeifer2009entanglement}.
One way to do so is by applying a Jordan-Wigner transformation to the circuit for the Majorana fermion to obtain a (matchgate) circuit for the Ising model, following the procedure in \cite{evenbly2016entanglement}.
The results are illustrated in \cref{tab:scaling dimensions}.

\begin{figure}
\centering
(a)\raisebox{-5.8cm}{\includegraphics[width=0.28\linewidth]{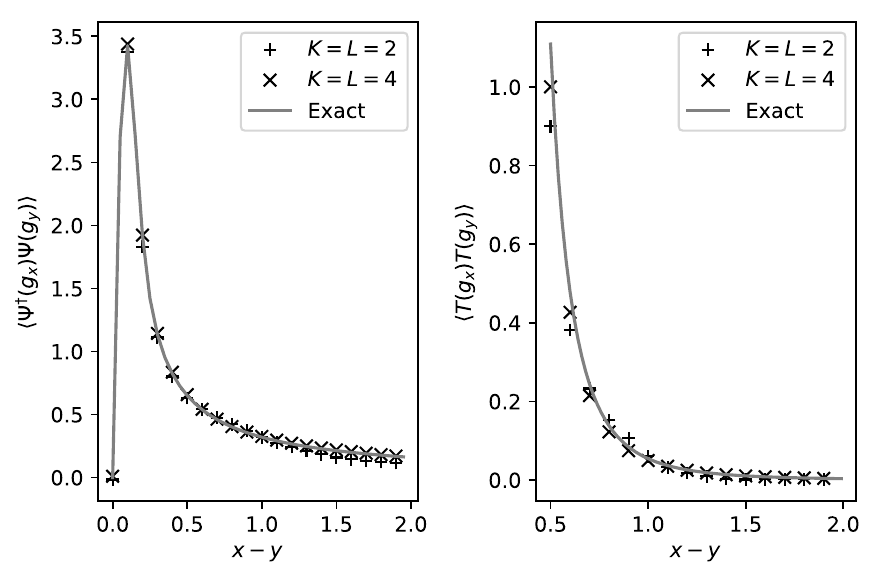}}
\quad
(b)\raisebox{-5.8cm}{\includegraphics[width=0.28\linewidth]{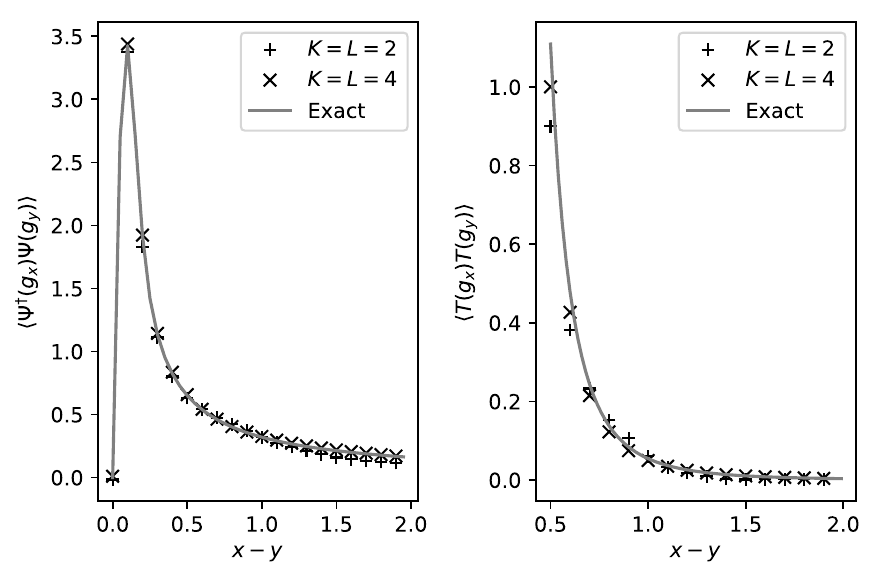}}
\quad
(c)\raisebox{-5.8cm}{\includegraphics[width=0.28\linewidth]{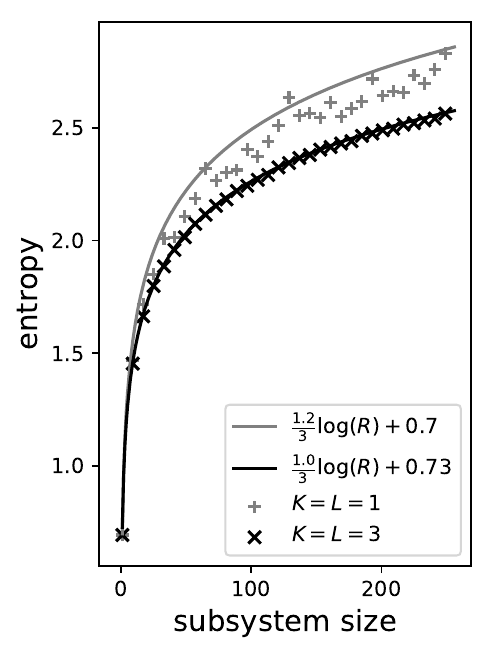}}
\caption{(a) Correlation function $\langle \Psi^{\dagger}(g_x) \Psi(g_y)\rangle$ evaluated using our approximate quantum circuits.
The smearing functions $g_x$, $g_y$ are Gaussians with standard deviation~$\sigma = 0.05$ peaked at $x$ and $y$, respectively.
(b) Correlation functions $\braket{T(g_x)T(g_y)}$ evaluated using our approximation quantum circuits.
The stress-energy tensor is smeared in both space and time; see \cref{subsec:approx} for details.
(c)~Subsystem entropies for the corresponding quantum states. 
The logarithmic fits show that we obtain excellent agreement with the Cardy formula for central charge~$c=1$ already for $K+L=6$.}
\label{fig:numerics}
\end{figure}

\begin{table}
\centering
\begin{tabular}{rrrrrr}
\toprule
& $\chi$ & $E$ & $\Delta E / E$ & $\Delta_{\sigma}$ & $\Delta_{\mu}$ \\
\midrule
Exact && $- \frac{4}{\pi}$ && 0.125 & 0.125 \\
\midrule
$K = 1$,	$L = 1$ &	$2$ &	-1.2560	& 0.0135 &	0.0968	& 0.1696 \\
$K = 1$,	$L = 2$ &	$4$ &	-1.2705	& 0.0021 & 0.1360	& 0.1173 \\
$K = 2$,	$L = 1$ & $4$ &	-1.2630	& 0.0081 & 0.1031	& 0.1563 \\
$K = 1$,	$L = 3$ & $8$ &	-1.2727	& 0.0005 & 0.1226	& 0.1283 \\
$K = 2$,	$L = 2$ & $8$ &	-1.2722	& 0.0008 & 0.1310	& 0.1204 \\
$K = 3$,	$L = 1$ & $8$ &	-1.2655	& 0.0061 & 0.1052	& 0.1522 \\
$K = 1$,	$L = 4$ & $16$ &	-1.2731	& 0.0001 & 0.1261	& 0.1242 \\
$K = 2$,	$L = 3$ & $16$ &	-1.2731	& 0.0001 & 0.1238	& 0.1264 \\
\bottomrule
\end{tabular}
\caption{Values of the Majorana fermion energy density $E$, the relative error in energy density $(E + \frac{\pi}{2}) / E$ and the scaling dimensions $\Delta_{\sigma}$ and $\Delta_{\mu}$ for the Majorana CFT (or, equivalently, of the Ising CFT). Some other scaling dimensions, in particular those of the fermion fields themselves, are exactly reproduced because of the structure of the wavelet transform, as discussed in \cref{subsec:symmetries}.
These values were computed by first decoupling the circuit obtained from an approximate Hilbert pair of wavelets for the Dirac fermion into two circuits for Majorana fermions, and then taking a Jordan-Wigner transform.
This yields an entanglement renormalization circuit for the Ising model, in which the spin and disorder fields $\sigma$ and $\mu$ are local.
For details on this procedure, see \cite{evenbly2016entanglement}.}\label{tab:scaling dimensions}
\end{table}

\subsection{Outlook}
As mentioned, our quantum circuits implement a `holographic mapping' for the Dirac conformal field theory.
This opens up the possibility to study many interesting questions on how quantum information is organized by such mappings, e.g., in terms of quantum error correcting properties~\cite{kim2017entanglement}.


In future work we hope to construct entanglement renormalization circuits for more general classes of conformal field theories.
\change{Recently, inspired by the present work, an analogous construction for bosonic free models has been worked out in~\cite{witteveen2020wavelet}.}
A challenging open problem is to extend the relation between wavelet analysis and quantum circuits for conformal field theories to interacting models.
It is not at all clear that this is possible, but a natural starting point could be Wess-Zumino-Witten theories, as many of these can be constructed algebraically as symmetries on a finite number of free massless fermions~\cite{fuchs1995affine}.
\change{The algebraic construction of these theories is closely related to the representation theory of loop groups \cite{wassermann1998operator}, and one starting point could be to revisit this analysis of the loop group in terms of wavelet theory.}

Another direction would be to investigate entanglement renormalization from the perspective of vertex algebras. A recent attempt to discretize vertex algebras to a spin chain model, with a view towards quantum computer simulation of conformal field theories can be found in~\cite{zini2018conformal}. \change{For MPS tensor networks it has been shown that they are sufficiently expressive to compute correlation functions for a very general class of conformal field theories using vertex algebra techniques \cite{konig2016matrix,konig2017matrix}.}

From a computational point of view it would be interesting to investigate whether a wavelet circuit can serve as a starting point for perturbation theory, and get faster convergence of MERA optimization algorithms.
\change{It would be interesting to develop the idea of \cite{kim2017robust} to use variational entanglement renormalization circuits to the continuous setting, and to compare with other approaches proposed to compute correlation functions using quantum computers (see, e.g.,~\cite{preskill2018quantum}).

Finally, there is the open problem of determining the existence of a family of approximate Hilbert pair wavelets where the approximation error decreases sufficiently fast with the support of the wavelets, as discussed above~\cite{selesnick2002design,achard2017new}.}

\subsection{Plan of the paper}

The remainder of this work is structured as follows: in \cref{sec:prelim} we recall the algebraic approach to fermionic systems and quasi-free states, in \cref{sec:wavelets} we review wavelet theory and collect some useful estimates.
\cref{sec:quantumcirc} contains the main results, first we derive a wavelet approximation to the free fermion, then use it to construct a quantum circuit and we prove a bound on the approximation error.
We also discuss how to implement certain conformal symmetries with the circuit and remark on the possibility of reproducing conformal data.

\subsection{Notation and conventions}
Given a Hilbert space~$\HH$, we write $\braket{\cdot,\cdot}$ for the inner product and $\lVert\cdot\rVert$ for the norm of vectors.
We denote by~$B(\HH)$ the space of bounded operators on~$\HH$ and the operator norm of an operator~$A$ by~$\lVert A\rVert$.
We denote Hermitian adjoints by $A^\dagger$, and we write $A \leq A'$ if the difference~$A'-A$ is positive semidefinite.
We denote identity operators by~$\id_\HH$ and leave out the subscript if the Hilbert space is clear from the context.
If $A$ is Hilbert-Schmidt then we write $\lVert A\rVert_2 = \sqrt{\tr[A^\dagger A]}$ for the Hilbert-Schmidt norm.
For the finite dimensional Hilbert space~$\CC^n$, we use bra-ket notation and write $\ket{0}, \dots , \ket{n-1}$ for the standard basis.
We define the circle~$\SS=\RR/\ZZ$ as the interval~$[0,1]$ with endpoints identified.
We write $L^2(\RR)$, $L^2(\SS)$, etc.~for Hilbert spaces of square-integrable functions, equipped with the Lebesgue measure that assigns unit measure to unit intervals, and we denote by~$\ell^2(\ZZ)$ the Hilbert space of square-integrable sequences.
The Fourier transform of a function~$\phi\in L^2(\RR)$ is denoted by~$\hat\phi\in L^2(\RR)$ and is given by~$\hat\phi(\omega)=\int_{-\infty}^\infty f(x) e^{-ix\omega} dx$ if~$\phi$ is absolutely integrable.
Similarly, the Fourier transform of a function~$\phi\in L^2(\SS)$ is denoted by~$\hat\phi\in\ell^2(\ZZ)$ and can be computed as~$\hat\phi[n]=\int_0^1 f(x) e^{-ix2\pi n} dx$.
We define the Fourier transform of a sequence~$f\in\ell^2(\ZZ)$ to be the~$2\pi$-periodic function $\hat f\in L^2(\RR/2\pi\ZZ)$ given by~$\hat f(\theta) = \sum_{n\in\ZZ} f[n] e^{-i\theta n}$.
\change{Lastly we define the Fourier transform of a sequence~$f\in\ell^2(\ZZ/m\ZZ)$ to be the function on $\ZZ/m\ZZ$ given by $\hat f[n] = \sum_{k \in \ZZ/m\ZZ} f(k) e^{-2\pi i k n / m}$.}
For $\HH=L^2(\RR)$, $L^2(\SS)$, $\ell^2(\ZZ)$ or $\ell^2(\ZZ/m\ZZ)$, and $\lambda\in\HH$, we will denote by $m(\hat{\lambda})$ the \emph{Fourier multiplier} with symbol~$\hat{\lambda}$, defined by multiplication with~$\hat\lambda$ in the Fourier domain (equivalently, convolution with~$\lambda$ in the original domain).
On $\ell^2(\ZZ)$, we define the \emph{downsampling} operator~$\downarrow$  by $(\downarrow f) [n] = f [2n]$; its adjoint is the \emph{upsampling} operator~$\uparrow$ given by~$(\uparrow f) [2n] = f [n]$ and~$(\uparrow f) [2n+1] = 0$ for $f\in\ell^2(\ZZ)$.
We will also use the Sobolev spaces~$H^K(\RR)$ and $H^K(\SS)$, which consist of functions that have a square-integrable weak~$K$-th derivative, denoted~$f^{(K)}$.
All $p$-norms for~$p\neq2$ will be denoted by $\lVert f\rVert_p$.
We write~$\one$ for the constant function equal to one, and~$\one_X$ for the indicator function of a set~$X$.
If~$f\in L^2(\RR)$ and has compact support, we write~$D(f)$ for the size of the smallest interval containing the support of $f$.

\section{Preliminaries}\label{sec:prelim}
In this section, we briefly review the second quantization formalism for fermions and quasi-free fermionic many-body states (see, e.g.,~\cite{bratteli2003operator} or~\cite{carey1987fermion} for further details), and we describe the vacuum state of massless free fermions in $1+1$ dimensions in terms of this formalism.

\subsection{The CAR algebra and quasi-free states}\label{subsec:car}
If $\HH$ is a complex Hilbert space, then let $\CAR(\HH)$ be the algebra of canonical anti-commutation relations or \emph{CAR algebra} on~$\HH$.
It is the free unital $C^\dagger$-algebra generated by elements~$a(f)$ for $f\in\HH$ such that~$f \mapsto a(f)$ is anti-linear and subject to the relations
\begin{align*}
\{a(f), a(g)\} &= 0, \\
\{a(f), a^\dagger(g)\} &= \braket{f,g}
\end{align*}
where $\{x,y\} = xy + yx$ denotes the anti-commutator.

An important class of states on this algebra are the \emph{gauge-invariant quasi-free} (or \emph{Gaussian}) states.
These states have the property that they are invariant under a global phase and that all correlation functions are determined by the two-point functions.
More precisely, for every operator~$Q$ on~$\HH$ such that $0 \leq Q \leq \id$ there exists a unique gauge-invariant quasi-free state on $\CAR(\HH)$, denoted~$\omega_Q$, such that we have the following version of Wick's rule:
\begin{align*}
  \omega_Q(a^\dagger(f_1) \ldots a^\dagger(f_n) a(g_1) \ldots a(g_m) ) = \delta_{n,m} \det [ \braket{g_i, Qf_j} ]
\end{align*}
Thus, the state is fully specified by its two-point functions~$\omega_Q(a^\dagger(f) a(g)) = \braket{g, Qf}$.
The operator~$Q$ is called the \emph{symbol} of~$\omega_Q$.
It is well-known that~$\omega_Q$ is a pure state if and only if~$Q$ is a projection.
In this case, $Q$ can be interpreted as a projection onto a \emph{Fermi sea} of negative energy modes.
Since throughout this article we will only be interested in this case, we henceforth assume that~$Q$ is a projection.

To obtain a Hilbert space realization, we consider the fermionic Fock space
\begin{align*}
  \fock(\HH) = \bigoplus_{n = 0}^{\infty}  \HH^{\wedge n}
\end{align*}
with the standard representation of $\CAR(\HH)$, defined by~$a(f) \mapsto a_0(f)$ where $a_0^\dagger(f)v = f \wedge v$.
Let~$\ket{\Omega}$ denote the Fock vacuum vector~$1 \in \HH^{\wedge 0}$.
Then $\ket{\Omega}$ is the pure state corresponding to symbol~$Q = 0$.
Now let $Q$ be an arbitrary orthogonal projection and choose a complex conjugation~$\overline{(\cdot)}$ (that is, an antiunitary involution) that commutes with~$Q$.
Then the map $a(f) \mapsto a_Q(f)$, where
\begin{equation}\label{eq:CAR_rep}
  a_Q(f) = a_0\bigl((\id-Q)f\bigr) + a_0^\dagger\bigl(\overline{Qf}\bigr),
\end{equation}
defines a representation of the CAR algebra such that~$\omega_Q$ corresponds to the Fock vacuum vector~$\ket{\Omega}$.

\subsection{Second-quantized operators}\label{subsec:second quantization}
Next we recall the second quantization of operators on~$\HH$.
If $U$ is a unitary on $\HH$ then $U$ defines an automorphism of~$\CAR(\HH)$, known as a \emph{Bogoliubov transformation}, through $a(f) \mapsto a(Uf)$.
Provided that $[U, Q]$ is Hilbert-Schmidt, this automorphism can be implemented by a unitary operator~$\Gamma_Q(U)$ on Fock space, which is unique up to an overall phase.
This means that, for every~$f\in\HH$,
\begin{equation*}
  \Gamma_Q(U) a_Q(f) \Gamma_Q(U)^\dagger = a_Q(Uf).
\end{equation*}
Now consider a unitary one-parameter subgroup~$\{e^{itA}\}$ generated by a bounded Hermitian operator~$A$ on~$\HH$.
This would like to know when $e^{itA}$ can be unitarily implemented in the form
\begin{equation}\label{eq:1psg}
  e^{it\d\Gamma_Q(A)} a_Q(f) e^{-it\d\Gamma_Q(A)} = a_Q(e^{itA} f)
\end{equation}
for~$t\in\RR$ and~$f\in\HH$.
For this, decompose $A$ into blocks with respect to $\HH_\pm$, which we define as the range of the projections~$Q_+ = \id - Q$ and $Q_- = Q$ (corresponding to positive and negative energy modes), respectively:
\begin{align*}
A = \begin{pmatrix}
A_{++} & A_{+-} \\
A_{-+} & A_{--}
\end{pmatrix}
\end{align*}
In~\cite{carey1987fermion,lundberg1976quasi} it is shown that, if~$A$ is bounded and the off-diagonal parts $A_{+-}$, $A_{-+}$ are Hilbert-Schmidt, then there exists a self-adjoint generator~$\d\Gamma_Q(A)$ on~$\fock(\HH)$ such that~\eqref{eq:1psg} holds.
We can moreover fix the undetermined additive constant by demanding that
\begin{equation*}
\braket{\Omega, \d\Gamma_Q(A) \Omega} = 0,
\end{equation*}
which corresponds to \emph{normal ordering} with respect to the state~$\omega_Q$.

If $A$ is trace class then $\d\Gamma_Q(A)$ is bounded and in fact can be defined as an element of~$\CAR(\HH)$.
In general, $\d\Gamma_Q(A)$ is unbounded, but we still have the bound~\cite[(2.53)]{carey1987fermion}
\begin{align}\label{eq:operator_est}
  \lVert \d\Gamma_Q(A)\Pi_n \rVert \leq 4(n+2)\max \{ \lVert A_{++}\rVert, \lVert A_{--}\rVert, \lVert A_{+-}\rVert_2, \lVert A_{-+}\rVert_2 \},
\end{align}
where $\Pi_n$ denotes the orthogonal projection on the subspace of $\fock(\HH)$ spanned by states of no more than~$n$ particles.
Combining~\cite[(2.14), (2.24), (2.25), (2.49)]{carey1987fermion}, one can similarly show that
\begin{align}
\label{eq:second quant vs symbol}
  \lVert \left( \d\Gamma_Q(A) - \d\Gamma_{Q'}(A) \right) \Pi_n \rVert
\leq 4 (n+2) \max_{\delta=\pm} \{ \lVert Q_\delta A Q_\delta - Q'_\delta A Q'_\delta \rVert, \lVert Q_\delta A Q_{-\delta} - Q'_\delta A Q'_{-\delta} \rVert_2 \}
\end{align}
for any two projections~$Q$ and~$Q'$.
This estimate will be useful in our error analysis in \cref{subsec:approx}.
\subsection{Massless free fermions in 1+1 dimensions}\label{sec:freefermions}
We now describe the vacuum state of the free Dirac fermion quantum field theory in~$1+1$ dimensions in terms of the second quantization formalism.
It will be convenient to consider the Dirac equation in the form
\begin{align*}
i\gamma^{\mu}\partial_{\mu} \psi = 0,
\end{align*}
with the Dirac matrices~$\gamma^0 = i\sigma_z = \begin{psmallmatrix}i&0\\0&-i\end{psmallmatrix}$ and~$\gamma^1 = -\sigma_x = \begin{psmallmatrix}0&-1\\-1&0\end{psmallmatrix}$.
The equation is easily seen to be solved by
\begin{align*}
  \psi_1(x,t) &= \chi_+(x + t) + \chi_-(x-t) \\
  \psi_2(x,t) &= i\left( \chi_+(x+t) - \chi_-(x - t) \right)
\end{align*}
for arbitrary functions $\chi_+$ and $\chi_-$, which we take to be in~$L^2(\RR)$ in order for the solutions to be normalizable. The energy of such a solution is given by
\begin{align*}
  E = \int_{-\infty}^\infty \left(-\omega|\hat{\chi}_+(\omega)|^2 + \omega|\hat{\chi}_-(\omega)|^2\right) \, \d\omega.
\end{align*}
Thus, the space of negative energy solutions is spanned by solutions for which~$\chi_+$ has a Fourier transform with support on the positive half-line (is analytic) and~$\chi_-$ has a Fourier transform with support on the negative half-line (is anti-analytic).

We obtain a single-particle Hilbert space~$\HH = L^2(\RR) \ot \CC^2$ corresponding to $\psi(x,t\!=\!0)$.
The symbol of the vacuum state is given by the projection onto the `Dirac sea' of negative energy solutions.
It can be expressed as
\begin{align}\label{eq:free_fermion_symbol}
Q = \frac{1}{2}\begin{pmatrix}
\id & \Htr \\
-\Htr & \id
\end{pmatrix}
\end{align}
in terms of the \emph{Hilbert transform}, which is the unitary operator on~$L^2(\RR)$ defined by
\begin{align*}
  \widehat{ \Htr f} (\omega) = -i\sgn(\omega) \hat{f}(\omega).
\end{align*}
Indeed, it follows from~$\Htr^\dagger = -\Htr$ that~$Q$ is an orthogonal projection, and $Q \psi = \psi$ if $\psi$ is the restriction to~$t=0$ of a negative-energy solution.
We further note that the symbol~$Q$ commutes with the component-wise complex conjugation on~$\HH$.
We thus obtain a Fock space realization as described above in \cref{subsec:car}.
The smeared Dirac field can be defined as~$\Psi(f) := a_Q(f)$ for~$f\in\HH$.

We will also be interested in free Dirac fermions on the circle~$\SS$.
In this case, we take~$\HH=L^2(\SS)\ot\CC$.
For periodic boundary conditions, the symbol~$Q^{\per}$ has the same form as in~\eqref{eq:free_fermion_symbol}, where we now let
\begin{align*}
\widehat{ \Htr^{\per} f} [n] = -i\sgn(n) \hat{f}[n]
\end{align*}
where there is some ambiguity in the sign function for~$n = 0$ (reflecting a ground state degeneracy).
For definiteness, we choose~$\sgn(0) = 1$.

For anti-periodic boundary conditions, corresponding to the Dirac equation on the nontrivial spinor bundle over~$\SS$, we define a unitary operator~$T$ on~$\HH$ by~$Tf (x) = e^{-i\pi x}f(x)$ for~$x \in (0,1)$.
Then the symbol is given by $T^{\dagger} Q^{\per} T$.

\subsection{Self-dual CAR algebra and Majorana fermions}\label{subsec:selfdual}
Suppose that $\HH_+ \cong \HH_-$, as in the preceding section.
Given an anti-unitary involution~$C$ on~$\HH$ such that $C Q_\delta = Q_{-\delta} C$ for $\delta=\pm$, we can also define the following operators on $\fock(\HH_+) \subset \fock(\HH)$,
\begin{align}\label{eq:c_Q(f)}
  c_Q(f)
= a_0(Q_+ f) + a_0^\dagger(CQ_- f)
\end{align}
These satisfy the relations of the \emph{self-dual CAR algebra}, $\CARsd(\HH)$~\cite{araki1971quasifree}, which is generated by elements~$c(f)$ for $f\in\HH$ such that $f \mapsto c(f)$ is antilinear and
\begin{align*}
  \{c(f), c^\dagger(g)\} &= \braket{f, g}, \\
  c^\dagger(f) &= c(Cf).
\end{align*}
for~$f, g \in \HH$.
The second equation implies that a unitary~$U$ on $\HH$ only defines an automorphism~$\Gamma^c(U)$ of~$\CARsd(\HH)$ by~$c(f) \mapsto c(Uf)$ if~$[U,C]=0$ commutes with $C$.
We can also second quantize generators as in \cref{eq:1psg}.
That is, if $A$ is a bounded operator with Hilbert-Schmidt $A_{+-}$, $A_{-+}$, and if~$A^\dagger = -CAC$, we can define its second quantization $\d \Gamma_Q^c(A)$, such that
\begin{equation}
  e^{it\d\Gamma^c_Q(A)} c_Q(f) e^{-it\d\Gamma^c_Q(A)} = c_Q(e^{itA} f).
\end{equation}

We can apply this construction in the situation \cref{sec:freefermions} to obtain a description of massless free Majorana fermions.
Define the anti-unitary involution~$C$ as the following charge conjugation operator which exchanges positive and negative energy modes:
\begin{align}\label{eq:majorana C}
Cf = \begin{pmatrix}
1 & 0 \\
0 & -1
\end{pmatrix} \bar{f}
\end{align}
Then it is clear from~\eqref{eq:free_fermion_symbol} that $C Q = (I-Q) C$, so the above construction applies.
We denote by~$\Phi(f) := c_Q(f)$ the smeared Majorana field.

\section{Hilbert pair wavelets}\label{sec:wavelets}
Our circuits for free-fermion correlation functions will be obtained by second quantizing a wavelet transformation.
In this section, we first review the basic theory of wavelets on the line and circle.
In \cref{subsec:wavelets} we explain the definition of a wavelet basis, and how a choice of wavelet basis stratifies a function space into different scales.
Next, in \cref{subsec:wavelet_dec} we explain how these different scales are related through filters, and in \cref{subsec:periodic_wavelets} we explain the periodic version.
An important question is how accurately a function $f$ is approximated if all but a finite number of scales are truncated.
This is discussed in \cref{subsec:wavelet approximations}, where we prove some results that are completely standard in the wavelet literature, but which we work out for convenience of the reader, and in order to be able to carefully keep track of all the constants involved.
Using an argument from Fourier analysis in \cref{lem:technical} we show in \cref{lem:UV} an approximation result for a `UV cut-off' for a sufficiently smooth $f$, where we discard all detail at fine scales, or alternatively in \cref{lem:sampling error}, if we sample $f$.
Next we show in \cref{lem:IR} that for compactly supported functions we can also discard large scale wavelet components up to a small error, which should be thought of as an `IR cut-off`.
Finally, in \cref{subsec:approx hilbert pair} we introduce a way to implement the Hilbert transform using wavelets. Since we want to use compactly supported wavelets, this can only be done approximately, and in \cref{lem:filtererror}, \cref{lem:filtererror periodic} and \cref{lem:scaling error} we the bound approximation errors this gives rise to.

For a more extensive introduction to wavelets we refer the reader to, e.g., Chapter~7 in~\cite{mallat2008wavelet}.
We then define the central notion of an \emph{approximate Hilbert pair} of wavelet filters (\cref{def:hilbert_pair}) and derive some estimates that will later be used to derive our first-quantized approximation results.

\subsection{Wavelet bases}\label{subsec:wavelets}
A wavelet basis is an orthonormal basis for~$L^2(\RR)$ consisting of scaled and translated versions of a single localized function~$\psi\in L^2(\RR)$, called the \emph{wavelet function}.
If we define
\begin{align*}
  W_j = \sup \, \{ \psi_{j,k} : k \in \ZZ \},
  \quad\text{where}\quad
  \psi_{j,k}(x) &= 2^{\frac j2} \psi(2^j x - k),
\end{align*}
then $L^2(\RR) = \bigoplus_j W_j$.
We can therefore interpret of~$W_j$ as the \emph{space of functions at scale~$j$}, also called the detail space at scale~$j$, where large~$j$ corresponds to fine scales and small~$j$ to coarse scales.

In signal processing, wavelet bases are often constructed from an auxiliary function~$\phi\in L^2(\RR)$, known as the \emph{scaling function}.
\change{The idea is that this scaling function will such that
\begin{align*}
  V_j = \Span \, \{ \phi_{j,k} : k \in \ZZ \},
  \quad\text{where}\quad
  \phi_{j,k}(x) &= 2^{\frac j2} \phi(2^j x - k)
\end{align*}
is the space spanned by all $\psi_{l,k}$ for $l < j$, and the $\phi_{j,k}$ are an orthonormal basis for $V_j$.
Then we can decompose any function $f \in L^2(\RR)$ as
\begin{align*}
  f = \sum_{k \in \ZZ} s_k \phi_{j,k} + \sum_{l = j}^{\infty} \sum_{k \in \ZZ} w_{l,k} \psi_{l,k}
\end{align*}
where
\begin{align*}
  s_k = \langle \phi_{j,k}, f \rangle \qquad  w_{l,k} = \langle \psi_{l,k}, f \rangle.
\end{align*}
Thus we can interpret~$V_j$ as the space of \emph{functions up to (but excluding) details at scale~$j$}.}
To be precise, we demand that the $V_j$ form a complete filtration of $L^2(\RR)$, i.e.,
\begin{align*}
  \{0\} \subseteq \ldots \subseteq V_j \subseteq V_{j+1} \subseteq \ldots \subseteq L^2(\RR),
  \quad
  \overline{\bigcup_j V_j} = L^2(\RR),
\end{align*}
and that the wavelets at scale~$j$ span exactly the orthogonal complement of~$V_j$ in~$V_{j+1}$:
\begin{align}\label{eq:V=V+W}
  V_{j+1} = W_j \op V_j
\end{align}
for all $j\in\ZZ$.
A sequence of subspaces~$\{V_j\}_{j\in\ZZ}$ as above is said to form a \emph{multiresolution analysis}, since \cref{eq:V=V+W} allows to recursively decompose a signal in some~$V_j$ scale by scale.
The orthogonality between scaling and wavelet function is well-illustrated by the \emph{Haar wavelet} (see \cref{fig:scaling and wavelet},~(a)), which was used in Qi's exact holographic mapping~\cite{qi2013exact}.
We will use pairs of wavelets that are tailored to target the vacuum of the Dirac theory (see \cref{subsec:approx hilbert pair} below).

\begin{figure}
\centering
(a)
\!\!\raisebox{-4.5cm}{\includegraphics[width=0.45\linewidth]{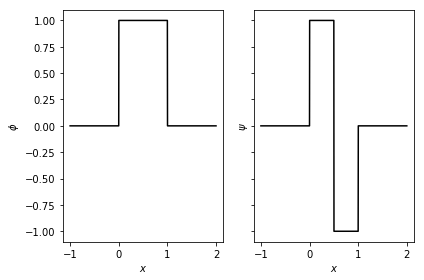}}
\quad
(b)
\!\!\raisebox{-4.5cm}{\includegraphics[width=0.45\linewidth]{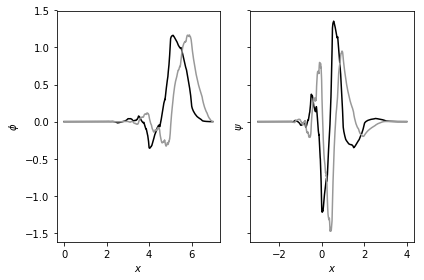}}
\caption{(a)~
Scaling and wavelet function for the Haar wavelet ($\phi = \one_{[0,1)}$, $\psi = \one_{[0,\frac{1}{2})} - \one_{[\frac{1}{2},1)}$).
(b)~Scaling and wavelet functions for the approximate Hilbert pair with parameters~$K=L=2$ due to Selesnick ($\phi^h$, $\psi^h$ in black; $\phi^g$, $\psi^g$ in gray).
See \cref{subsec:approx hilbert pair,tab:constants} for further detail.}
\label{fig:scaling and wavelet}
\end{figure}

Wavelet bases as above can be obtained by deriving them from filters.
A sequence~$g_s \in \ell^2(\ZZ)$ is called a \emph{scaling filter} (or \emph{low-pass filter}) if its Fourier transform satisfies, for all $\theta\in\RR/2\pi\ZZ$,
\begin{align}\label{eq:cond scaling filter}
|\hat{g}_s(\theta)|^2 + |\hat{g}_s(\theta + \pi)|^2 = 2 \quad\text{and}\quad \hat{g}_s(0) = \sqrt{2}.
\end{align}
Under mild technical conditions on~$g_s$ (see, e.g.,~\cite[Thm~7.2]{mallat2008wavelet}), which we always assume to be satisfied, we can define scaling and wavelet functions $\phi$, $\psi\in L^2(\RR)$ such that
\begin{align*}
  \psi(x) &= \sqrt2 \sum_{n\in\ZZ} g_w[n] \phi(2x-n), \\
  \phi(x) &= \sqrt2 \sum_{n\in\ZZ} g_s[n] \phi(2x-n).
\end{align*}
The sequence $g_w\in\ell^2(\ZZ)$ is known as the \emph{wavelet filter} (or \emph{high-pass filter}) and it can be computed via
\begin{align}\label{eq:wavelet from scaling filter}
  \hat g_w(\theta) = e^{-i\theta} \overline{\hat{g}_s(\theta + \pi)},
  \quad\text{i.e.}\quad
  g_w[n] = (-1)^{1-n} \bar g_s[1-n].
\end{align}
Thus, the expansion coefficients of the wavelet and scaling function at scale~$j=0$ in terms of scaling functions at scale~$j=1$ are precisely given by the wavelet and scaling filters, respectively (cf.~\cref{eq:V=V+W}).
This generalizes immediately to arbitrary scales:
For all~$j,k\in\ZZ$,
\begin{align}
\label{eq:relation_filter_wavelet}
  \psi_{j,k} &= \sum_{n\in\ZZ} g_w[n] \phi_{j+1,2k+n}, \\
\label{eq:relation_filter_scaling}
  \phi_{j,k} &= \sum_{n\in\ZZ} g_s[n] \phi_{j+1,2k+n}.
\end{align}
In Fourier space, these relations read
\begin{align}
\label{eq:relation_filter_wavelet_fourier}
  \hat{\psi}(\omega) &= \frac{1}{\sqrt{2}}\hat{g}_w\Bigl(\frac{\omega}{2}\Bigr)\hat{\phi}\Bigl(\frac{\omega}{2}\Bigr),\\
\label{eq:relation_filter_scaling_fourier}
  \hat{\phi}(\omega) &= \frac{1}{\sqrt{2}}\hat{g}_s\Bigl(\frac{\omega}{2}\Bigr)\hat{\phi}\Bigl(\frac{\omega}{2}\Bigr)
\end{align}
for all~$\omega\in\RR$.
The Fourier transform of the scaling function can be expressed as an infinite product of evaluations of the scaling filter:
\begin{equation}\label{eq:inf_product}
\hat{\phi}(\omega) = \prod_{k = 1}^{\infty} \frac{1}{\sqrt{2}}\hat{g}_s(2^{-k} \omega)
\end{equation}
In particular, it is bounded by one, i.e., $\lVert\hat\phi\rVert_\infty=1$.
It is also useful to note that the wavelet function averages to zero, i.e., $\int_{-\infty}^\infty \psi(x) dx = 0$.

Throughout this article, we will always work with filters of \emph{finite length} (the length of a sequence~$f\in\ell^2(\ZZ)$ is defined as the minimal number~$M$ such that~$f$ is supported on~$M$ consecutive sites).
Specifically, we will assume that the support of the scaling filter is~$\{0,\dots,M-1\}$.
In the signal processing literature, such filters are called finite impulse response~(FIR) filters with $M$~taps.
It is clear from \cref{eq:wavelet from scaling filter} that in this case the wavelet filter is supported in~$\{2-M,\dots,1\}$, hence has finite length~$M$ as well.
If the filters have finite length then the wavelet and scaling functions are compactly supported on intervals of width~$M$~\cite[Prop~7.2]{mallat2008wavelet}.



\subsection{Wavelet decompositions}\label{subsec:wavelet_dec}
Suppose that we would like to express a given function~$f\in L^2(\RR)$ in a wavelet basis.
As a first step, we replace $f$ by $P_j f \in V_j$, where $P_j\colon L^2(\RR)\to V_j$ denotes the orthogonal projection onto the space of functions below scale~$j$.
This is corresponds to removing high frequency components (in signal processing) or to a UV cut-off (in physics).
We explain in \cref{lem:UV} below how to bound the error $\lVert f - P_j f\rVert$ in terms of a Sobolev norm.
To express $P_j f$ in terms of the orthonormal basis~$\{\phi_{j,k}\}_{k\in\ZZ}$ of~$V_j$, define the partial isometries
\begin{align}\label{eq:def alpha}
  \alpha_j \colon L^2(\RR) \to \ell^2(\ZZ), \qquad
  (\alpha_j f)[k] = \braket{\phi_{j,k}, f},
\end{align}
where we note that $P_j = \alpha_j^\dagger \alpha_j$.
We show below that, if~$f$ is sufficiently smooth, the coefficients $\alpha_j f$ can be well-approximated by sampling~$f$ on a uniform grid with spacing~$2^{-j}$ (\cref{lem:sampling error}).

Next, we iteratively obtain the wavelet coefficients of~$P_j f$ at all scales~$n<j$.
For this purpose, let
\begin{align*}
  \beta_j \colon L^2(\RR) \to \ell^2(\ZZ), \quad (\beta_j f)[k] = \braket{\psi_{j,k}, f},
\end{align*}
and define the unitary operator
\begin{align}\label{eq:W}
  W \colon \ell^2(\ZZ) \rightarrow \ell^2(\ZZ) \ot \CC^2, \qquad
  W f = (\downarrow m(\overline{\hat g_w}) f) \op (\downarrow m(\overline{\hat g_s}) f),
\end{align}
where we recall that the downsampling operator is given by~$(\downarrow f)[n] = f[2n]$.
Then, \cref{eq:relation_filter_wavelet,eq:relation_filter_scaling} imply that
\begin{align*}
  W \alpha_{j+1} f = \beta_j f \op \alpha_j f
\end{align*}
for all $f\in L^2(\RR)$ and $j\in\ZZ$.
That is, applying~$W$ to the scaling coefficients at some scale~$j$ yields in the first component the wavelet coefficients and in the second component the scaling coefficients at one scale coarser.
Note that, due to the scale invariance of the wavelet basis, the operator~$W$ does \emph{not} depend explicitly on~$j$.
We can iterate this procedure to obtain a map
\begin{align}\label{eq:W(L)}
  W^{(\cL)} \colon \ell^2(\ZZ) \rightarrow \ell^2(\ZZ) \ot \CC^{\cL + 1}, \qquad
  W^{(\cL)} = (\id_{\ell^2(\ZZ) \ot \CC^{\cL-1}} \op W) \cdots (\id_{\ell^2(\ZZ)} \op W) W,
\end{align}
which decomposes through successive filtering the scaling coefficients at scale~$j + 1$ into the wavelet coefficients at scales~$j$ to $j - \cL + 1$ and the scaling coefficients at scale~$j - \cL + 1$.
That is:
\begin{align*}
  W^{(\cL)} \alpha_{j+1}
= \beta_j \op W^{(\cL-1)} \alpha_j
= \dots
= \beta_j \op \beta_{j-1} \op \dots \op \beta_{j-\cL+1} \op \alpha_{j-\cL+1},
\end{align*}
or
\begin{align*}
  W^{(\cL)} \alpha_{j+1} f
= \sum_{l=0}^{\cL-1} \beta_{j-l+1} f \ot \ket l
+ \alpha_{j-\cL+1} f \ot \ket{\cL}
\end{align*}
for all~$f\in L^2(\RR)$.
The unitaries~$W$ and~$W^{(\cL)}$ are known as ($\cL$ layers of) the \emph{discrete wavelet transform}.
Note that $W^{(\cL)}$ can be readily implemented by a scale-invariant linear circuit consisting of convolutions and downsampling circuit elements (see \cref{subsec:periodic circuit} and \cref{fig:wavelet_decomposition} for a visualization).

\begin{figure}
\begin{center}
\includegraphics[width=0.9\linewidth]{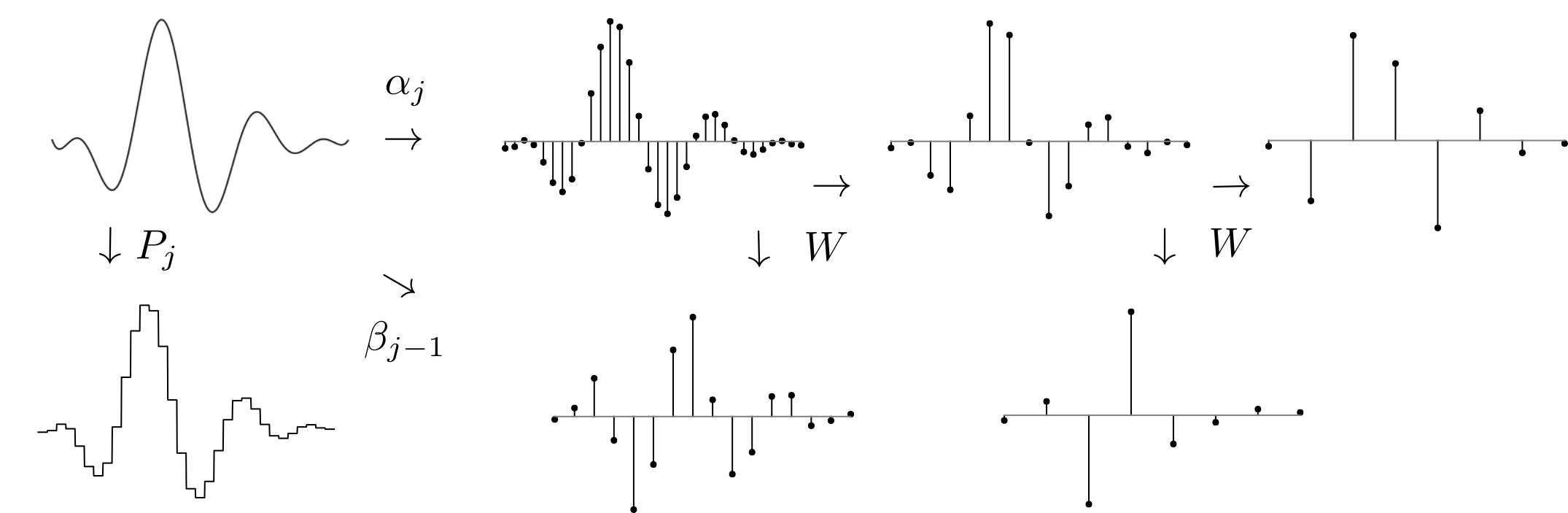}
\caption{Illustration of the various maps defined in \cref{subsec:wavelet_dec} in the case of the Haar wavelet.}
\label{fig:wavelet_decomposition}
\end{center}
\end{figure}

\subsection{Periodic wavelets}\label{subsec:periodic_wavelets}
Given a wavelet~$\psi$ on $\RR$ with scaling function~$\phi$ and filters~$g_s$ and~$g_w$, one can construct a corresponding family of \emph{periodic} wavelet and scaling functions on the circle~$\SS$.
Following~\cite[Section 7.5]{mallat2008wavelet}, we define for~$j \geq 0$ and $k=1,\dots,2^j$ the functions
\begin{align*}
\psi^{\per}_{j,k}(x) &= \sum_{m \in \ZZ} \psi_{j,k}(x + m), \\
\phi^{\per}_{j,k}(x) &= \sum_{m \in \ZZ} \phi_{j,k}(x + m)
\end{align*}
in~$L^2(\SS)$.
If we set $V^{\per}_j = \Span \, \{ \phi^{\per}_{j,k} : k = 1,\dots,2^j \}$ and $W^{\per}_j = \Span \, \{ \psi^{\per}_{j,k} : k = 1,\dots, 2^j \}$ then we have
\begin{align*}
  \CC \one = V_0 \subseteq V_1 \subseteq \dots \subseteq L^2(\SS), \quad
  \overline{\bigcup_{j\geq0} V_j} = L^2(\SS), \quad\text{and}\quad
  V_{j+1} = W_j \op V_j.
\end{align*}
The space~$V_0$ is one-dimensional and consists of the constant functions. 
Thus, $\{\psi^{\per}_{j,k}\}_{j\geq0, k=1,\dots,2^j}$ together with $\phi^{\per}_{0,1} = \one$ form an orthonormal basis of~$L^2(\SS)$.
Similarly to before, we denote by~$\alpha_j^{\per}, \beta_j^{\per} \colon L^2(\SS) \to \CC^{2^j}$ denote the partial isometries that send a function to its expansion coefficients with respect to the periodized scaling and wavelet basis functions (for fixed $j$), and we denote by~$P_j^{\per} = (\alpha^{\per}_j)^\dagger \alpha^{\per}_j$ to be the orthogonal projection onto $V_j \subseteq L^2(\SS)$.

Since the radius of the circle sets a coarsest length scale, the corresponding filters are now scale-dependent and given by
\begin{align*}
  g^{\per}_{s,j}[n] &= \sum_{m \in \ZZ} g_s[n + 2^jm]\\
  g^{\per}_{w,j}[n] &= \sum_{m \in \ZZ} g_w[n + 2^jm]
\end{align*}
for $j \geq 0$ and $n=1,\dots,2^j$.
As before, they give rise to unitary maps
\begin{align}\label{eq:W(Lper)}
  W^{(\cL),\per} \colon \CC^{2^\cL} \rightarrow \bigoplus_{j = 0}^{\cL-1} \CC^{2^j} \op \CC, \quad
  W^{(\cL),\per} \alpha^{\per}_{\cL} = \beta^{\per}_{\cL-1} \op \ldots \op \beta_0^{\per} \op \alpha_0^{\per}
\end{align}
that expand a signal at a certain scale into (all) its wavelet coefficients and the remaining scaling coefficient (which is the average of~$f$).

We note that $g^{\per}_{s,j} = g_s$ and $g^{\per}_{w,j} = g_w$ for sufficiently large~$j$ (namely when $2^j$ is at least as large as the cardinality of the filters' supports).
This is intuitive since at sufficiently fine scales the periodicity of the circle is no longer visible.
See \cref{subsec:periodic circuit} for more detail.

\subsection{Wavelet approximations}\label{subsec:wavelet approximations}
We need to know how well we can approximate functions if we are only allowed to use a finite number of scales.
In this section we will state three results (the last of which is adapted from~\cite{wojtaszczyk1997mathematical}) that give quantitative bounds assuming that the wavelets are compactly supported and bounded.
These conditions can easily be relaxed, but we will not need this.
The proofs are given in \cref{sec:appendix}.

\change{We know that the shifted and rescaled wavelet functions form an orthonormal basis for $L^2(\RR)$, so we can write
\begin{align*}
  f = \sum_{j \in \ZZ} \sum_{k \in \ZZ} \beta_{j}(f)[k] \psi_{j,k}.
\end{align*}
We will be interested in approximating the function $f$ over a finite number of scales, that is, we will try to approximate $f$ by
\begin{align*}
  f \approx \sum_{j= j_0}^{j_1} \sum_{k \in \ZZ} \beta_{j}(f)[k] \psi_{j,k}.
\end{align*}
Recall that $j \gg 0$ corresponds to small scale structures in $f$, and $j \ll 0$ to large scale structure.
In this section we will give quantative bounds which show that one can approximate $f$ to good precision using a finite number of scales if $f$ is sufficiently smooth and compactly supported.
To be precise, in \cref{lem:UV} we show that large $j$ can be discarded with small error if $f$ is sufficiently smooth, and in \cref{lem:IR} we show that we can truncate in the other direction if $f$ is compactly supported.}

Our first result bounds the error incurred by leaving out detail, corresponding to a UV cut-off.
Recall that the Sobolev spaces~$H^K(\RR)$ and~$H^K(\SS)$ consist of functions~$f$ with square-integrable weak~$K$-th derivative, denoted~$f^{(K)}$.

\begin{lem}[UV cut-off]\label{lem:UV}
Assume that the Fourier transform of the scaling filter~$\hat{g}_s(\theta)$ has a zero of order~$K$ at $\theta = \pi$.
Then there exists a constant~$C_{\UV}$ such that for every $f \in H^K(\RR)$ and $j\in\ZZ$, we have that
\begin{align*}
  \lVert P_j f - f \rVert &\leq 2^{-Kj} C_{\UV} \lVert f^{(K)} \rVert.
\end{align*}
Similarly, for every $f\in H^1(\SS)$ and $j\geq0$, we have that
\begin{align*}
  \lVert P^{\per}_j f - f \rVert &\leq 2^{-Kj} C_{\UV} \lVert f^{(K)} \rVert.
\end{align*}
If the scaling filter is supported in $\{0,\dots,M-1\}$, then these estimates hold with~$K=1$ and~$C_{\UV} \leq 2M^2$.
\end{lem}

In fact, under mild technical conditions the `UV cut-off' from \cref{lem:UV} can be well-approximated by sampling the function on a dyadic grid, as shown in the following lemma.

\begin{lem}[Sampling error]\label{lem:sampling error}
There exists a constant~$C_\phi$ such that the following holds:
For every~$f \in H^1(\RR)$ and~$f_j$ the sequence defined by~$(f_j)_k := 2^{-j/2} f(2^{-j} k)$ for $k\in\ZZ$ (we identify~$f$ with its unique representative as a continuous function), we have
\begin{align*}
  \lVert \alpha_j f - f_j \rVert \leq 2^{-j} C_\phi \lVert f' \rVert.
\end{align*}
Likewise, for every~$f\in H^1(\SS)$ and~$f_j \in \CC^{2^j}$ the vector with components~$(f_j)_k := 2^{-j/2} f(2^{-j} k)$,
\begin{align*}
  \lVert \alpha_j^{\per} f - f_j \rVert \leq 2^{-j} C_\phi \lVert f' \rVert.
\end{align*}
If the scaling filter is supported in~$\{0,\dots,M-1\}$, then these estimates hold with~$C_\phi\leq 2M^2$.
\end{lem}

The final lemma of this section bounds the error incurred by leaving out coarse scale components from compactly supported functions, corresponding to an IR cut-off.

\begin{lem}[IR cut-off]\label{lem:IR}
Assume that the scaling function~$\phi$ satisfies
\begin{align*}
\sqrt{\sum_{k\in\ZZ} \lvert\phi(y - k)\rvert^2} \leq C_{\IR}
\end{align*}
for all $y \in \RR$.
Then for every $f\in L^2(\RR)$ with compact support,
\begin{align*}
  \lVert P_j f \rVert \leq 2^{j/2} \sqrt{D(f)} C_{\IR} \lVert f\rVert.
\end{align*}
In particular, if $\phi$ is bounded and supported in an interval of width~$M$, this is true with $C_{\IR} \leq \sqrt{M} \lVert \phi\rVert_\infty$.
\end{lem}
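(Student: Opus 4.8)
The plan is to estimate $\lVert P_j f\rVert$ directly by expanding $P_j f$ in the orthonormal basis $\{\phi_{j,k}\}_{k\in\ZZ}$ of $V_j$, so that $\lVert P_j f\rVert^2 = \sum_{k\in\ZZ} \lvert\braket{\phi_{j,k},f}\rvert^2$. First I would use the compact support of $f$: if $f$ is supported in an interval $I$ of width $D(f)$, then $\braket{\phi_{j,k},f}$ can only be nonzero when the support of $\phi_{j,k}$ meets $I$. Since $\phi_{j,k}(x)=2^{j/2}\phi(2^jx-k)$, the key is to apply Cauchy--Schwarz inside each inner product, bounding $\lvert\braket{\phi_{j,k},f}\rvert$ by something that, after summing over $k$, collapses against $\lVert f\rVert^2$.

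The cleaner route is pointwise. I would write
\begin{align*}
  \lVert P_j f\rVert^2
  = \sum_{k\in\ZZ} \left\lvert \int f(x)\,\overline{\phi_{j,k}(x)}\,\d x \right\rvert^2
  \leq \int \lvert f(x)\rvert^2 \left( \sum_{k\in\ZZ} \lvert \phi_{j,k}(x)\rvert^2 \right) \d x
  \cdot \lvert I\rvert,
\end{align*}
but more precisely I would treat $\sum_k \lvert\braket{\phi_{j,k},f}\rvert^2$ as the squared norm of the sequence $(\alpha_j f)$ and bound each term by Cauchy--Schwarz over the support: $\lvert\braket{\phi_{j,k},f}\rvert^2 \leq \lVert f\rVert^2 \int_I \lvert\phi_{j,k}(x)\rvert^2\,\d x$. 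Summing over $k$ and interchanging the (finite, by compact support) sum with the integral gives $\sum_k \lvert\braket{\phi_{j,k},f}\rvert^2 \leq \lVert f\rVert^2 \int_I \bigl(\sum_k \lvert\phi_{j,k}(x)\rvert^2\bigr)\,\d x$. Now I substitute $\phi_{j,k}(x)=2^{j/2}\phi(2^jx-k)$, so $\sum_k\lvert\phi_{j,k}(x)\rvert^2 = 2^j \sum_k \lvert\phi(2^jx-k)\rvert^2$, and the hypothesis bounds $\sum_k\lvert\phi(y-k)\rvert^2\leq C_{\IR}^2$ uniformly in $y$. Hence the bracketed quantity is at most $2^j C_{\IR}^2$, and integrating the constant over $I$ contributes a factor $\lvert I\rvert = D(f)$, yielding $\lVert P_j f\rVert^2 \leq 2^j D(f)\, C_{\IR}^2 \lVert f\rVert^2$. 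Taking square roots gives exactly the claimed bound $\lVert P_j f\rVert \leq 2^{j/2}\sqrt{D(f)}\,C_{\IR}\lVert f\rVert$.

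For the final assertion, if $\phi$ is bounded and supported in an interval of width $M$, then for any fixed $y$ at most $M$ of the integer translates $\phi(y-k)$ are nonzero, each bounded by $\lVert\phi\rVert_\infty$, so $\sum_k\lvert\phi(y-k)\rvert^2 \leq M\lVert\phi\rVert_\infty^2$, giving $C_{\IR}\leq\sqrt{M}\lVert\phi\rVert_\infty$. The main thing to be careful about is the application of Cauchy--Schwarz per coefficient together with justifying the interchange of sum and integral; this is harmless here because compact support of $f$ makes only finitely many terms nonzero, so Tonelli applies to the nonnegative integrand and no convergence subtlety arises. I do not expect a genuine obstacle — the one point requiring attention is getting the factor of $D(f)$ (rather than $D(f)^2$ or $\sqrt{D(f)}$ in the wrong place) correct, which the pointwise Cauchy--Schwarz bookkeeping handles cleanly.
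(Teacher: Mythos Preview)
Your proposal is correct and follows essentially the same approach as the paper: expand $\lVert P_j f\rVert^2$ in the $\{\phi_{j,k}\}$ basis, apply Cauchy--Schwarz to each coefficient over the support of $f$, interchange sum and integral, and invoke the hypothesis on $\sum_k\lvert\phi(y-k)\rvert^2$ (the paper does the equivalent step via the change of variables $y=2^jx$). One small remark: compact support of $f$ alone does not force only finitely many $k$ to contribute unless $\phi$ is also compactly supported, but your invocation of Tonelli for the nonnegative integrand already handles the interchange in full generality.
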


\noindent
We recall from \cref{subsec:wavelets} that both the scaling and the wavelet function are supported on intervals of the same width, which explains why we use the symbol~$M$ in both cases.
For the periodized wavelet transform, it is possible to prove a similar result when restricting to functions~$f\in L^2(\SS)$ with average zero (since the identity function is orthogonal to all wavelet basis functions).

\subsection{Approximate Hilbert pair wavelets}\label{subsec:approx hilbert pair}
Our construction of a quantum circuit that approximates fermionic correlation functions is based on approximating the Hilbert transform, which we saw appearing in the symbol in \cref{sec:freefermions}, by using wavelets.
Thus, we are looking for a pair of wavelet and scaling filters~$g_w$,~$g_s$ and~$h_w$,~$h_s$ such that the associated wavelet functions~$\psi^g$ and $\psi^h$ satisfy
\begin{align*}
  \psi^h = \Htr \psi^g,
\end{align*}
which we recall means that $\hat\psi^h(\omega) = -i\sgn(\omega) \hat\psi^g(\omega)$ for all $\omega\in\RR$.
Such a pair of wavelets is called a \emph{Hilbert pair}.
Two equivalent conditions on the scaling and wavelet filters, respectively, to generate a Hilbert pair are~\cite{selesnick2001hilbert}
\begin{align}
\label{eq:scaling_filters_hpair}
  \hat{h}_s(\theta) &= \mu_s(\theta)\hat{g}_s(\theta), \\
\nonumber
  \hat{h}_w(\theta) &= \mu_w(\theta) \hat{g}_w(\theta),
\end{align}
where $\mu_s$ and $\mu_w$ are periodic functions in~$L^\infty(\RR/2\pi\ZZ)$ defined by
\begin{align}
\label{eq:def mu_s}
  \mu_s(\theta) &= e^{-i\frac{\theta}{2}}, \\
\nonumber
  \mu_w(\theta) &= -i\sgn(\theta)e^{i\frac{\theta}{2}}
\end{align}
for~$\lvert\theta\rvert<\pi$.
In this situation, \cref{eq:relation_filter_wavelet_fourier,eq:relation_filter_scaling_fourier} implies that the scaling functions~$\phi_g$ and~$\phi_h$ will be related by
\begin{align}\label{eq:scaling functions hpair}
  \hat{\phi}^h(\omega) = \lambda_s(\omega)\hat{\phi}^g(\omega),
\end{align}
where $\lambda_s\in L^\infty(\RR)$ is defined by
\begin{equation}\label{eq:scaling_function_relation}
  \lambda_s(\omega) = -i\sgn(\omega)\mu_w(-\omega).
\end{equation}
We refer to~\cite{selesnick2001hilbert,selesnick2002design} for further detail.
Since the Hilbert transform does not preserve compact support, we can not hope for exact Hilbert pair wavelets using compactly supported wavelets.
However, an approximate version can be realized.
The following definition describes the notion of approximation that is appropriate in our context.

\begin{dfn}\label{def:hilbert_pair}
An \emph{$\eps$-approximate Hilbert pair} consists of a pair of wavelet and scaling filters, $g_w$, $g_s$, $h_w$, $h_s$, with corresponding wavelet functions $\psi^g$, $\psi^h$ and scaling functions $\phi^g$, $\phi^h$, such that
\begin{align}\label{eq:filter error uniform}
  \lVert \hat{h}_s - \mu_s \hat{g}_s \rVert_\infty \leq \eps.
\end{align}
That is, the error in the phase relation~\eqref{eq:scaling_filters_hpair} for the scaling filters is bounded by~$\eps$.
This condition can readily be checked numerically.
\end{dfn}

One of the first systematic constructions of approximate Hilbert pairs is due to Selesnick~\cite{selesnick2002design,selesnick2001hilbert}.
His construction depends on two parameters, $K$ and $L$, where~$K$ is the number of vanishing moments of the wavelets (relevant for the approximation power of the wavelet decomposition and for the smoothness of the wavelets) and where~$L$ is essentially the number of terms in a Taylor expansion of the relation in \cref{eq:scaling_filters_hpair} at~$\theta=0$.
By construction, the filters are real and have finite length~$M = 2(K + L)$, so the wavelet and scaling functions are compactly supported on intervals of width~$M$.
Numerically, one can see that the parameter~$\eps$ in \cref{eq:filter error uniform} decays exponentially with $\min\{K,L\}$~\cite{haegeman2018rigorous}, while the other relevant parameters from \cref{lem:UV}, \cref{lem:sampling error} and \cref{lem:IR} remain bounded or grow much more slowly than the worst-case bounds that we provided, as can be seen in \cref{tab:constants}.
\change{For a more extensive treatment of the theory of approximate Hilbert pairs, see \cite{selesnick2005dual, yu2005hilbert, chaudhury2009construction, chaudhury2010hilbert} and the more recent work \cite{achard2017new}.
See \cref{fig:scaling and wavelet},~(b) for an illustration of Selesnick's wavelets with parameters~$K=L=2$.}

\begin{table}
\centering
\begin{tabular}{rrrrrrrrr}
\toprule
  $K$ &   $L$ &   $M$ &       $\eps$ &      $C_{\UV}$ &      $C_{\IR}$ &     $C_\chi$ &     $C'_\chi$ &     $C_\phi$ \\
\midrule
  1 &   1 &   4 &  0.264099 &  0.619741 &  2.542073 &  1.166423 &  1.142220 &  1.254999 \\
  2 &   2 &   8 &  0.068221 &  0.622182 &  1.217454 &  1.155488 &  0.295133 &  2.296890 \\
  3 &   3 &  12 &  0.018338 &  0.624782 &  1.190944 &  1.154757 &  0.079283 &  2.116091 \\
  4 &   4 &  16 &  0.005020 &  0.626782 &  1.150151 &  1.154705 &  0.021691 &  1.251461 \\
  5 &   5 &  20 &  0.001389 &  0.628374 &  1.130260 &  1.154701 &  0.005999 &  2.120782 \\
  6 &   6 &  24 &  0.000387 &  0.629686 &  1.120354 &  1.154701 &  0.001671 &  2.106891 \\
  7 &   7 &  28 &  0.000108 &  0.630795 &  1.114293 &  1.154701 &  0.000468 &  1.234832 \\
  8 &   8 &  32 &  0.000030 &  0.631752 &  1.108135 &  1.154701 &  0.000132 &  2.434899 \\
  9 &   9 &  36 &  0.000009 &  0.632674 &  1.106718 &  1.154701 &  0.000037 &  1.923738 \\
 10 &  10 &  40 &  0.000003 &  0.638023 &  1.440101 &  1.154701 &  0.000011 &  5.752427 \\
\bottomrule
\end{tabular}
\caption{Numerical values of various constants for Selesnick's approximate Hilbert pairs with parameters~$K = L$. It appears that $\eps$ decays exponentially with increasing $K=L$, while the other constants from \cref{lem:UV}, \cref{lem:sampling error}, \cref{lem:IR} and \cref{lem:scaling error} are well-behaved.}\label{tab:constants}
\end{table}

If we periodize an (approximate) Hilbert pair as described in \cref{subsec:periodic_wavelets}, we get periodic wavelets that are (approximately) related by the Hilbert transform on the circle.
The following lemma is an improved version of~\cite[(A7)]{haegeman2018rigorous}.
It controls the error incurred by using approximate instead of exact Hilbert pairs both on the line and on the circle.

\begin{lem}\label{lem:filtererror}
Consider an $\eps$-approximate Hilbert pair.
Let~$W_g^{(\cL)}$ and~$W_h^{(\cL)}$ denote the corresponding wavelet transforms for~$\cL$ layers, defined as in \cref{eq:W,eq:W(L)} using the filters~$g$ and~$h$, respectively.
Then,
\begin{align}
\label{eq:filtererror_wavelet}
\lVert P_w \bigl( W_g^{(\cL)}- W_h^{(\cL)}m(\mu_w) \bigr) \rVert &\leq \eps \cL, \\
\label{eq:filtererror_scaling}
\lVert P_s \bigl( W_g^{(\cL)} - W_h^{(\cL)}m(\mu_w) \bigr) f \rVert &\leq \eps \cL \lVert f\rVert + 2 \lVert P_s W_g^{(\cL)} f\rVert \qquad (\forall f\in\ell^2(\ZZ)),
\end{align}
where $P_w = \id_{\ell^2(\ZZ)} \ot \sum_{k=0}^{\cL-1} \ket k\bra k$ denotes the projection onto the wavelet coefficients and~$P_s = \id - P_w$ the projection onto the scaling coefficients.
\end{lem}
\begin{proof}
As in \cref{eq:W}, denote by $W_g, W_h \colon \ell^2(\ZZ) \rightarrow \ell^2(\ZZ) \ot \CC^2$ the unitaries corresponding to a single layer of the wavelet transform:
\begin{align*}
  W_g = (\downarrow m(\overline{\hat g_w})) \op (\downarrow m(\overline{\hat g_s})) \quad\text{and}\quad
  W_h = (\downarrow m(\overline{\hat h_w})) \op (\downarrow m(\overline{\hat h_s}))
\end{align*}
One may easily verify the relation
\begin{align*}
  m(\mu_w) \downarrow m(\mu_s) = \downarrow m(\mu_w).
\end{align*}
This allows us to rewrite
\begin{equation}\label{eq:move_mu_through_W}
\begin{aligned}
W_h m(\mu_w)
&= (\downarrow m(\overline{\hat h_w})) \op (\downarrow m(\overline{\hat h_s})) m(\mu_w) \\
&= (\downarrow m(\mu_w \overline{\hat h_w})) \op (\downarrow m(\mu_w) m(\overline{\hat h_s})) \\
&= (\downarrow m(\mu_w \overline{\hat h_w})) \op (m(\mu_w) \downarrow m(\mu_s \overline{\hat h_s})) \\
&= (\id_{\ell^2(\ZZ)} \op m(\mu_w)) \tilde W_h,
\end{aligned}
\end{equation}
where we introduced
\begin{align*}
  \tilde{W}_h := (\downarrow m(\mu_w \overline{\hat h_w})) \op (\downarrow m(\mu_s \overline{\hat h_s})).
\end{align*}
Now consider $\cL$~layers of the transform.
For~$l=1,\dots,\cL$, define~$W_g^l := \id_{\ell^2(\ZZ) \ot \CC^{l-1}} \op W_g$ and similarly~$W_h^l$ and~$\tilde W_h^l$, so that~$W_g^{(\cL)} = W_g^\cL \cdots W_g^1$ etc.
By using \cref{eq:move_mu_through_W}, we find that
\begin{align*}
  W_h^{(\cL)} m(\mu_w)
= W_h^\cL \cdots W_h^1 m(\mu_w)
= \left( \id_{\ell^2(\ZZ) \ot \CC^{\cL}} \op m(\mu_w) \right) \tilde W_h^\cL \cdots \tilde W_h^1
\end{align*}
Our assumption~\eqref{eq:filter error uniform} on the scaling filter error in an approximate Hilbert pair implies that, for all~$l$,
\begin{align}\label{eq:one_layer_error}
  \lVert W_g^l - \tilde W_h^l \rVert
= \lVert W_g - \tilde W_h \rVert
\leq \max \, \{ \lVert \overline{\hat{g}_s} - \mu_s \overline{\hat{h}_s} \rVert_\infty, \lVert \overline{\hat{g}_w} - \mu_w \overline{\hat{h}_w} \rVert_\infty \}
\leq \eps.
\end{align}
Next we write a telescoping sum
\begin{align*}
&\quad W_g^{(\cL)} - W_h^{(\cL)} m(\mu_w)
= W_g^\cL \cdots W_g^1 - \left( \id_{\ell^2(\ZZ) \ot \CC^{\cL}} \op m(\mu_w) \right) \tilde W_h^\cL \cdots \tilde W_h^1 \\
&= \bigl( \id_{\ell^2(\ZZ) \ot \CC^{\cL}} \op m(\mu_w) \bigr) \left( W_g^\cL \cdots W_g^1 - \tilde W_h^\cL \cdots \tilde W_h^1 \right)
+ \bigl( 0_{\ell^2(\ZZ) \ot \CC^{\cL}} \op (\id - m(\mu_w)) \bigr) W_g^\cL \cdots W_g^1 \\
&= \bigl( \id_{\ell^2(\ZZ) \ot \CC^{\cL}} \op m(\mu_w) \bigr) \sum_{l=1}^\cL W_g^\cL \cdots W_g^{l+1} \left( W_g^l - \tilde W_h^l \right) \tilde W_h^{l-1} \cdots \tilde W_h^1
+ \bigl( 0 \op (\id - m(\mu_w)) \bigr) W_g^\cL \cdots W_g^1
\end{align*}
Using \cref{eq:one_layer_error} and the fact that~$\lVert W_g^l\rVert = \lVert \tilde W_h^l\rVert = 1$ for all~$l$, we can therefore bound
\begin{align*}
  \lVert P_w \bigl( W_g^{(\cL)} - W_h^{(\cL)} m(\mu_w) \bigr) \rVert
\leq \sum_{l=1}^\cL \lVert W_g^l - \tilde W_h^l \rVert \leq \eps \cL
\end{align*}
and, since furthermore $\lVert m(\mu_w) \rVert = 1$,
\begin{align*}
  \lVert P_s \bigl( W_g^{(\cL)} - W_h^{(\cL)} m(\mu_w) \bigr) f \rVert
&\leq \sum_{l=1}^\cL \lVert W_g^l - \tilde W_h^l \rVert \lVert f \rVert + 2 \lVert P_s W_g^\cL \cdots W_g^1 f \rVert
\leq \eps \cL \lVert f\rVert + 2 \lVert P_s W_g^{(\cL)} f\rVert.
\end{align*}
Thus we have established the desired bounds.
\end{proof}

A completely similar argument establishes a version for the periodized wavelets:

\begin{lem}\label{lem:filtererror periodic}
Consider an~$\eps$-approximate Hilbert pair.
Let~$W_g^{(\cL),\per}$ and~$W_h^{(\cL),\per}$ denote the periodized wavelet transforms for~$\cL$ layers, defined as in \cref{eq:W(Lper)} using the periodizations of the filters~$g$ and~$h$, respectively.
Then,
\begin{align}
\label{eq:filtererror periodic wavelet}
\lVert P_w^{\per} \bigl( W_g^{(\cL),\per}- W_h^{(\cL)}m(\mu_{w,\cL}^{\per}) \bigr) \rVert &\leq \eps \cL, \\
\label{eq:filtererror periodic scaling}
\lVert P_s^{\per} \bigl( W_g^{(\cL),\per} - W_h^{(\cL)}m(\mu_{w,\cL}^{\per}) \bigr) f \rVert &\leq \eps \cL \lVert f\rVert + 2 \lVert P_s^{\per} W_g^{(\cL),\per} f\rVert \qquad (\forall f\in\CC^{2^\cL}),
\end{align}
\change{where $\mu_{w,\cL}^{\per}[n] := \mu(2^{-\cL+1}\pi n)$} and where $P_w^{\per}$ denotes the projection onto the $2^\cL-1$ many wavelet coefficients and~$P_s^{\per} = \id - P_w^{\per}$ the projection onto the remaining scaling coefficient.
\end{lem}

Next, we will show that expanding a function~$f$ in the scaling basis for an approximate Hilbert pair results in approximately the same coefficients as if one were to expand the function in the scaling basis for an exact Hilbert pair (cf.~\cref{eq:scaling functions hpair}).

\begin{lem}\label{lem:scaling error}
Consider an $\eps$-approximate Hilbert pair.
Then there exists a constant~$C_\chi>0$, depending only on the scaling filters, such that the following holds:
For every $f\in H^1(\RR)$,
\begin{align*}
  \lVert \alpha_j^h f - \alpha_j^g m(\lambda_{s,j})^\dagger f \rVert \leq 2^{-j} C_\chi \lVert f'\rVert.
\end{align*}
where~$\lambda_{s,j}(\omega) := \lambda_s(2^{-j}\omega)$.
Similarly, for $f\in H^1(\SS)$ we have that
\begin{align*}
  \lVert \alpha_j^{h,\per} f - \alpha_j^{g,\per} m(\lambda^{\per}_{s,j})^\dagger f \rVert \leq 2^{-j} C_\chi \lVert f'\rVert.
\end{align*}
where~\change{$\lambda^{\per}_{s,j}[n] := \lambda_{s,j}(2\pi n)$}.
If the scaling filters are supported in~$\{0,\dots,M-1\}$ then these bounds hold with~$C_\chi \leq 2M^2$.
\end{lem}
\begin{proof}
By \cref{eq:cond scaling filter,eq:def mu_s}, $\hat h_s - \mu_s \hat g_s$ vanishes at $\theta=0$, so there exists a constant~$C>0$ such that
\begin{align}\label{eq:linear filter bound for chi}
  \frac1{\sqrt2} \lvert \hat h_s(\theta) - \mu_s(\theta) \hat g_s(\theta) \rvert \leq C \lvert \theta\rvert
\end{align}
for all~$\theta\in[-\pi,\pi]$.
As a consequence, we can derive the following bound on the Fourier transform of~$\chi := \phi^h - m(\lambda_s) \phi^g$:
For all $\omega\in[-\pi,\pi]$,
\begin{equation}\label{eq:scaling error linear}
\begin{aligned}
  \lvert \hat\chi(\omega)\rvert
&= \lvert \hat\phi^h(\omega) - \lambda_s(\omega) \hat\phi^g(\omega) \rvert
= \lvert \prod_{k=1}^\infty \frac1{\sqrt2} \hat h_s(2^{-k}\omega) - \prod_{k=1}^\infty \frac1{\sqrt2} \mu_s(2^{-k}\omega) \hat g_s(2^{-k}\omega) \rvert \\
&\leq \sum_{k=1}^\infty \frac1{\sqrt2} \lvert \hat h_s(2^{-k}\omega) - \mu_s(2^{-k}\omega) \hat g_s(2^{-k}\omega) \rvert
\leq \sum_{k=1}^\infty C \lvert 2^{-k}\omega \rvert
\leq C \lvert\omega\rvert
\end{aligned}
\end{equation}
using a telescoping series and the fact that~$\lVert\hat h_s\rVert_\infty=\lVert\mu_s \hat g_s\rVert_\infty=\sqrt2$.
Moreover, $\lVert\hat\chi\rVert_\infty\leq2$.
Thus, \cref{lem:technical} shows that, for all~$f\in H^1(\RR)$,
\begin{align*}
  \lVert \alpha_j^h f - \alpha_j^g m(\lambda_{s,j})^\dagger f \rVert^2
&= \sum_{k\in\ZZ} \lvert \braket{\phi^h_{j,k}, f} - \braket{\phi^g_{j,k}, m(\lambda_{s,j})^\dagger f} \rvert^2
&= \sum_{k\in\ZZ} \lvert \braket{\chi_{j,k}, f} \rvert^2
&\leq 2^{-2j} C_\chi^2 \lVert f'\rVert^2,
\end{align*}
where $C_\chi^2 = C^2 + 4/3$.
The case when~$f\in H^1(\SS)$ works analogously.

Finally, assume that the scaling filters are supported in~$\{0,\dots,M-1\}$.
In this case, we know from \cref{lem:filter fourier bounds} that, for all $\theta\in[-\pi,\pi]$,
\begin{align*}
  \frac1{\sqrt2} \lvert \hat g_s(\theta) - 1 \rvert \leq \frac{M^2}2 \, \lvert\theta\rvert
  \quad\text{and}\quad
  \frac1{\sqrt2} \lvert \hat h_s(\theta) - 1 \rvert \leq \frac{M^2}2 \, \lvert\theta\rvert
\end{align*}
and hence
\begin{align*}
  \frac1{\sqrt2} \lvert \hat h_s(\theta) - \mu_s(\theta) \hat g_s(\theta) \rvert
&\leq \frac1{\sqrt2} \lvert \hat h_s(\theta) - 1 \rvert + \frac1{\sqrt2} \lvert \hat g_s(\theta) - 1 \rvert + \frac1{\sqrt2} \lvert 1 - \mu_s(-\theta) \rvert
\leq \left(M^2 + \frac12\right) \lvert\theta\rvert
\end{align*}
for $\theta\in[-\pi,\pi]$.
Thus \cref{eq:linear filter bound for chi,eq:scaling error linear} hold with~$C=M^2+1/2$, hence we have~$C_\chi \leq 2M^2$.
\end{proof}

The bounds in \cref{lem:scaling error} hold for any pair of wavelets, not only for approximate Hilbert pairs.
For the latter, not only is the constant~$C$ small in practice, but one can also use the relation between the filters \cref{eq:filter error uniform} and a slightly adapted version of \cref{lem:technical} to show that in fact \cref{lem:scaling error} holds with
\begin{align*}
  C'_\chi = 3(C + \eps).
\end{align*}
For the Selesnick approximate Hilbert pairs this leads to significantly smaller constants, see \cref{tab:constants}, but since this does not substantially impact our the scaling of our final bounds on correlation functions we do not pursue this direction further.

\section{Approximation of correlation functions}\label{sec:quantumcirc}
In this section we first explain how to approximate the symbols by using an approximate Hilbert pair of wavelets.
We then prove our main technical result on the approximation of correlation functions. \change{To achieve this we adapt the approach pioneered in~\cite{haegeman2018rigorous} to the continuum setting.
In \cref{subsec:symbol approx} we discuss the approximation of the symbol by a wavelet construction.
The key difference to~\cite{haegeman2018rigorous} is that we consider the symbol in the continuum setting, and we argue that the discretization maps $\alpha_j^g$ and $\alpha_j^h$ allow us to relate the continous and discrete symbols.
We determine how the wavelet approximation results in \cref{subsec:wavelet approximations} can be used to bound the corresponding errors.
Moreover, we extend the results to the periodic case, and we provide a slightly improved scaling of the error bounds.
In \cref{subsec:approx} we discuss how the approximation of the symbol leads to approximation of correlation functions.
One main difference to~\cite{haegeman2018rigorous} is that we need to consider the dependence on the smoothness of operators we insert, which follows naturally from our approximation of the symbol.
The result we prove is also more general than the discrete result in~\cite{haegeman2018rigorous} since we also allow normal ordered quadratic operators (such as smearings of the stress-energy tensor) in the correlators.}

\subsection{Symbol approximations from Hilbert pairs}\label{subsec:symbol approx}
Recall from \cref{eq:free_fermion_symbol} that the symbol of the vacuum state of the free Dirac fermion on the real line is given by the following operator on the single-particle Hilbert space~$L^2(\RR) \ot \CC^2$:
\begin{align}\label{eq:dirac symbol restated}
Q = \frac{1}{2}\begin{pmatrix}
\id & \Htr \\
-\Htr & \id
\end{pmatrix}
= \begin{pmatrix}
\id & 0 \\
0 & \Htr^\dagger
\end{pmatrix}
\left( \id_{L^2(\RR)} \ot \ket +\bra + \right)
\begin{pmatrix}
\id & 0 \\
0 & \Htr
\end{pmatrix},
\end{align}
where~$\ket+ = \frac1{\sqrt2}(\ket0+\ket1)$.

To obtain a suitable approximation, consider an approximate Hilbert pair as in \cref{def:hilbert_pair}.
As before, we denote by~$g_w$, $h_w$, $g_s$,~$h_s$ the wavelet and scaling filters, by~$\alpha^g_j$ and~$\alpha^h_j$  discretization maps (defined as in \cref{eq:def alpha}) and by~$W_g^{(\cL)}$ and~$W_h^{(\cL)}$ the $\cL$-layer discrete wavelet transformats (defined in \cref{eq:W(L)}).
We now approximate \cref{eq:dirac symbol restated} by first truncating to a finite number of scales, using one of the two wavelet transforms, and then by replacing the Hilbert transform of the one wavelet basis by the other wavelet basis.
Schematically,
\begin{align*}
  \id_{L^2(\RR)} \;\leadsto\; \alpha_j^{h,\dagger} W^{(\cL),\dagger}_h P_w W^{(\cL)}_h \alpha_j^h, \qquad
  P_w W^{(\cL)}_h \alpha_j^h \Htr \;\leadsto\; P_w W^{(\cL)}_g \alpha_j^g,
\end{align*}
where $P_w = \id_{\ell^2(\ZZ)} \ot \sum_{k=0}^{\cL-1} \ket k\bra k$ denotes the orthogonal projection onto the wavelet coefficients.

\begin{dfn}[Approximate symbol]
For any approximate Hilbert pair, $j\in\ZZ$, and $\cL\in\NN$, define the \emph{approximate symbol} as the following projection on~$L^2(\RR) \ot \CC^2$:
\begin{align}\label{eq:approx symbol}
  \tilde Q_{j,\cL}
:= \alpha_j^\dagger W^{(\cL),\dagger} \bigl( P_w \ot \ket +\bra + \bigr) W^{(\cL)} \alpha_j,
\end{align}
where~$\alpha_j := \alpha_j^h \op \alpha_j^g$ and~$W^{(\cL)} := W^{(\cL)}_h \op W^{(\cL)}_g$.
\end{dfn}

\noindent
The symbol~$\tilde Q_{j,\cL}$ should be seen as an approximation of the true symbol at scales ranging from~$2^{-j+1}$ to~$2^{-j+\cL}$.

On the circle~$\SS$ we proceed similarly, except that there is now a natural largest scale.
For periodic boundary conditions, we use the following symbol, which intuitively approximates the true symbol at scales above~$2^{-\cL}$:

\begin{dfn}[Approximate symbol, periodic case]
For any approximate Hilbert pair and~$\cL\in\NN$, define the \emph{approximate periodic symbol} as the following projection on~$L^2(\SS) \ot \CC^2$:
\begin{align}\label{eq:approx symbol per}
  \tilde Q^{\per}_{\cL} := \alpha_{\cL}^{\per,\dagger} W^{(\cL),\per,\dagger} \bigl( P_w \ot \ket+\bra+ + P_s \ot \ket{L}\bra{L} \bigr) W^{(\cL),\per} \alpha_{\cL}^{\per},
\end{align}
where~$\alpha_j^{\per} := \alpha_j^{h,\per} \op \alpha_j^{g,\per}$ and~$W^{(\cL),\per} := W^{(\cL),\per}_h \op W^{(\cL),\per}_g$ refer to the periodic versions as defined in \cref{subsec:periodic_wavelets}; $P_s$ projects onto the single scaling coefficient and~$\ket{L} := \frac{1}{\sqrt{2}} (\ket{0}-i\ket{1})$ to ensure compatibility with our choice for the Hilbert transform on constant functions.
\change{Had we made a different choice for the value of the Hilbert transform on constant functions this would only change the top level state (it is a well-known fact that the Dirac fermion with periodic boundary conditions has a two-fold ground state degeneracy).
We also observe that $\ket{L}\bra{L} = \frac12(\id + \sigma_2) = \frac12(\id + \gamma_{\chir})$ is the chiral projector, where $\gamma_{\chir} = \gamma_0\gamma_1 = \sigma_2$.
Had we chosen the convention $\sgn(0) = -1$ we would take the state $\ket{R} := \frac{1}{\sqrt{2}} (\ket{0}+i\ket{1})$, and $\ket{R}\bra{R} = \frac12(\id - \gamma_{\chir})$.}
\end{dfn}

\noindent
In \cref{subsec:periodic circuit} we explain how to deal with anti-periodic boundary conditions.

\begin{lem}\label{lem:move through discretization}
The following relation holds: $\alpha^g_j m(\lambda_{s,j})^{\dagger} \Htr = m(\mu_w) \alpha^g_j$. Similarly, in the periodic case it holds for all~$f \in L^{2}(\SS)$ with zero mean that $\alpha^{g,\per}_j m(\lambda^{\per}_{s,j})^{\dagger} \Htr f = m(\mu_{w,j}^{\per}) \alpha^{g,\per}_j f$.
\end{lem}

\begin{proof}
We want to show that $\alpha^g_j(m(\lambda_{s,j})^{\dagger} \Htr f) = m(\mu_w) \alpha^g_j(f)$ for $f \in L^2(\RR)$.
By rescaling $f$ it is easy to see that it suffices to show the result for $j = 0$.
We know that by \cref{eq:scaling_function_relation}, $\Htr = m(\lambda_s)m(\mu_w)$, so $\alpha^g_0(m(\lambda_{s})^{\dagger} \Htr f) = \alpha^g_0(m(\mu_w)f)$.
Next we take a Fourier transform and observe that
\begin{align*}
  \widehat{\alpha^g_0(f)}(\theta) = \frac{1}{2\pi} \sum_{n \in \ZZ} \overline{\hat{\phi}^g(\theta + 2\pi n)} \hat f(\theta + 2\pi n).
\end{align*}
Since $\mu_w$ is $2\pi$-periodic the result follows.
In the periodic case it holds that
\begin{align*}
  \widehat{\alpha^{g,\per}_j(f)}[n] = \sum_{m \in \ZZ}\overline{\hat \phi^{g,\per}[n + 2^j m]} \hat f[n + 2^j m].
\end{align*}
which similarly implies the desired result.
Note that the ambiguity in our choice of $\sgn(0)$ in the definition of $\Htr$ is not relevant if we assume that $f$ has mean zero.
\end{proof}

The following result shows that the symbols in \cref{eq:approx symbol,eq:approx symbol per} indeed yield reasonable approximations when restricted to appropriate functions.

\begin{prop}\label{prop:approx_symbol}
Consider an $\eps$-approximate Hilbert pair with scaling filters supported in~$\{0,\dots,M-1\}$.
\begin{enumerate}
\item Let $f\in H^1(\RR) \ot \CC^2$ with compact support.
Then, for all $j\in\ZZ$, $\cL\in\NN$, and $\cL'=0,\dots,\cL$,
\begin{align*}
  \lVert \left( Q - \tilde Q_{j,\cL} \right) f \rVert
\leq 3 \eps\cL' \lVert f \rVert + 2^{(j-\cL')/2} 7 \sqrt{M D(f)} B \lVert f \rVert + 2^{-j} 5 M^2 \lVert f' \rVert,
\end{align*}
where~$B := \max \{ \lVert\phi_g\rVert_\infty, \lVert\phi_h\rVert_\infty \}$.
\item Let $f\in H^1(\SS) \ot \CC^2$.
Then, for all $\cL\in\NN$ and $\cL'=0,\dots,\cL$,
\begin{align*}
  \lVert \left( Q^{\per} - \tilde Q^{\per}_{\cL} \right) f \rVert
\leq 2 \eps \cL' \lVert f\rVert + 2^{-\cL'} 9 M^2 \lVert f' \rVert.
\end{align*}
\end{enumerate}
\end{prop}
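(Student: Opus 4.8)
The plan is to exploit the factorization $Q = V^\dagger(\id\ot\ket+\bra+)V$ with $V = \mathrm{diag}(\id,\Htr)$ from \cref{eq:dirac symbol restated}, together with the observation that $\tilde Q_{j,\cL}$ is built to reproduce exactly this structure at the discretized, wavelet-resolved level: it expands the first component of $f$ in the $h$-wavelet basis and the second in the $g$-wavelet basis, projects the two channels onto $\ket+$, and reconstructs. For an \emph{exact} Hilbert pair, $\psi^h = \Htr\psi^g$ gives $\braket{\psi^h_{l,k},\Htr u} = \braket{\Htr\psi^g_{l,k},\Htr u} = \braket{\psi^g_{l,k},u}$ since $\Htr$ is unitary, so the second channel's $g$-coefficients of $f_2$ are precisely the $h$-coefficients of $\Htr f_2$; hence in the idealized limit of exact pairs with no scale truncation one has $\tilde Q = Q$. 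The error therefore splits into three pieces, which I would bound separately and combine by the triangle inequality (picking up the factor $\tfrac12$ from $\ket+\bra+$ and a sum over the two output components).

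For the UV piece, writing $P_j^h, P_j^g$ for the projections onto $V_j$ in the two bases, the discretization $\alpha_j$ effectively replaces $f_1$ by $P_j^h f_1$ and the Hilbert-transformed second component by $P_j^h\Htr f_2$; I would bound the dropped fine scales by \cref{lem:UV} with $K=1$ and $C_{\UV}\le 2M^2$, using crucially that $\lVert(\Htr f_2)'\rVert = \lVert f_2'\rVert$ because $\Htr$ is unitary and commutes with differentiation, and combine this with the scaling-function mismatch \cref{lem:scaling error} ($C_\chi\le 2M^2$) to obtain the term $2^{-j}5M^2\lVert f'\rVert$. For the phase-error piece, on the kept wavelet scales the approximate symbol uses the $g$-basis where the exact one uses the $h$-basis precomposed with $\Htr$; I would invoke \cref{lem:filtererror}, whose estimate \cref{eq:filtererror_wavelet} reads $\lVert P_w(W_g^{(\cL')} - W_h^{(\cL')}m(\mu_w))\rVert\le\eps\cL'$, to trade $g$-wavelet coefficients for $h$-wavelet coefficients through the filter relation on the finest $\cL'$ layers at cost $\eps\cL'$, the multiplier $m(\mu_w)$ playing the role of the discrete analogue of $\Htr$. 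Summing the cross-terms over both output components gives the $3\eps\cL'$ contribution, and this is where the free parameter $\cL'\in\{0,\dots,\cL\}$ enters: the accumulating $\eps$ penalty is paid only on the finest $\cL'$ scales, the coarser layers being absorbed into the IR estimate below.

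The IR piece is also the main obstacle. The exact symbol produces $\Htr f_2$, which is \emph{not} compactly supported, so \cref{lem:IR} cannot be applied to it directly; this is precisely why one cannot naively bound the coarse-scale remainder of the cross-term. The resolution is encoded in the second term $2\lVert P_s W_g^{(\cL')}f\rVert$ of \cref{eq:filtererror_scaling}: after using the filter relation to route everything through the $g$-basis of $f_2$ and the $h$-basis of $f_1$, the coarse scaling coefficients that are discarded are those of the \emph{compactly supported} $f_1,f_2$, to which \cref{lem:IR} does apply, giving $\lVert P_{j-\cL'}f\rVert\le 2^{(j-\cL')/2}\sqrt{D(f)}\,C_{\IR}\lVert f\rVert$ with $C_{\IR}\le\sqrt M\,B$. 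Careful bookkeeping of the order in which the filter relation and the IR truncation are applied—so that the IR lemma always acts on a compactly supported function and never on $\Htr f_2$—together with the constants above, yields the term $2^{(j-\cL')/2}7\sqrt{MD(f)}\,B\lVert f\rVert$ and completes part (i).

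For part (ii) the same strategy applies but is genuinely simpler: the circle has a natural coarsest scale, so there is no IR truncation—the single remaining scaling coefficient (the average) is retained explicitly via $\ket{L}$—and $f$ need not be compactly supported. I would use \cref{lem:filtererror periodic} for the $\eps\cL'$ phase error and bound the coarse-scale mismatch between the two bases by the periodic version of the scaling error \cref{lem:scaling error}, giving $2^{-\cL'}C_\chi\lVert f'\rVert$; since the UV discretization sits at the finer scale $2^{-\cL}\le 2^{-\cL'}$, its contribution from \cref{lem:UV} is absorbed, and collecting constants yields $2\eps\cL'\lVert f\rVert + 2^{-\cL'}9M^2\lVert f'\rVert$.
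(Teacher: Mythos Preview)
Your proposal is correct and follows essentially the same route as the paper's proof: the same intermediate projections, the same three-way split governed by~$\cL'$, the same lemmas (\cref{lem:UV,lem:scaling error} for the UV piece, \cref{lem:filtererror}/\cref{lem:filtererror periodic} for the phase error, \cref{lem:IR} for the IR piece), and in particular you have identified the central subtlety---that $\Htr f_2$ is not compactly supported---and its resolution via \cref{eq:filtererror_scaling}, which is exactly how the paper handles it. The only minor imprecision is in part~(ii), where the $2^{-\cL'}9M^2$ term arises not because the UV error at scale~$\cL$ is ``absorbed'' into the coarser scale, but because the paper bounds $\lVert(\tilde Q^{\per}_{\cL'}-\tilde Q^{\per}_{\cL})f\rVert$ by $\lVert P^{\per}_{\cL'}f\rVert$ and then applies \cref{lem:UV} at scale~$\cL'$; this contributes the additional $2^{-\cL'}4M^2\lVert f'\rVert$ on top of the $2^{-\cL'}5M^2\lVert f'\rVert$ from the $Q^{\per}-Q^{\per}_{\cL'}$ step.
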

\noindent In \cref{thm:approximation}, we will describe how to choose~$j$ and~$\cL'$ optimally for a given number of layers~$\cL$.
\begin{proof}
(i) Let
\begin{equation}\label{eq:def Q_j}
\begin{aligned}
  Q_j
&:= \begin{pmatrix} \id & 0 \\ 0 & \Htr^{\dagger} m(\lambda_{s,j}) \end{pmatrix} \alpha_j^\dagger \left( \id_{\ell^2(\ZZ)} \ot \ket+\bra+ \right) \alpha_j \begin{pmatrix} \id & 0 \\ 0 & m(\lambda_{s,j})^{\dagger} \Htr \end{pmatrix} \\
&= \alpha_j^\dagger \begin{pmatrix} \id & 0 \\ 0 & m(\mu_w)^\dagger \end{pmatrix} \left( \id_{\ell^2(\ZZ)} \ot \ket+\bra+ \right) \begin{pmatrix} \id & 0 \\ 0 & m(\mu_w) \end{pmatrix} \alpha_j,
\end{aligned}
\end{equation}
where we used \cref{lem:move through discretization}.
Then, using the first formula,
\begin{align*}
  \lVert \left( Q - Q_j \right) f \rVert
&\leq \frac12 \Bigl( \lVert (\id - \alpha_j^{h,\dagger} \alpha_j^h) f_1 \rVert + \lVert (\id - m(\lambda_{s,j}) \alpha_j^{g,\dagger} \alpha_j^g m(\lambda_{s,j})^{\dagger}) \Htr f_2 \rVert \\
&\qquad + \lVert (\id - m(\lambda_{s,j}) \alpha_j^{g,\dagger} \alpha_j^h) f_1 \rVert + \lVert (\id - \alpha_j^{h,\dagger} \alpha_j^g m(\lambda_{s,j})^{\dagger}) \Htr f_2 \rVert \Bigr) \\
&= \frac12 \Bigl( \lVert (\id - P_j^h) f_1 \rVert + \lVert (\id - P_j^g) m(\lambda_{s,j})^{\dagger} \Htr f_2 \rVert \\
&\qquad + \lVert (\id - m(\lambda_{s,j}) \alpha_j^{g,\dagger} \alpha_j^h) f_1 \rVert + \lVert (\id - \alpha_j^{h,\dagger} \alpha_j^g m(\lambda_{s,j})^{\dagger}) \Htr f_2 \rVert \Bigr) \\
&\leq \frac12 \Bigl( \lVert (\id - P_j^h) f_1 \rVert + \lVert (\id - P_j^h) \Htr f_2 \rVert + \lVert (\id - P_j^g) m(\lambda_{s,j})^\dagger f_1 \rVert + \lVert (\id - P_j^g) m(\lambda_{s,j})^{\dagger} \Htr f_2 \rVert \\
&\qquad + \lVert (\alpha_j^g m(\lambda_{s,j})^\dagger - \alpha_j^h) f_1 \rVert + \lVert (\alpha_j^g m(\lambda_{s,j})^\dagger - \alpha_j^h) \Htr f_2 \rVert \Bigr).
\end{align*}
The norms in the first line can be upper-bounded by using \cref{lem:UV} (for the second, note that~$\lVert (m(\lambda_{s,j})^{\dagger} f_i)' \rVert = \lVert f_i' \rVert$ for $i=1,2$).
For the norms in the second line we use \cref{lem:scaling error}.
Together, we find that
\begin{equation}\label{eq:q_vs_qj_bound}
\begin{aligned}
  \lVert \left( Q - Q_j \right) f \rVert
&\leq \frac12 \Bigl( 2^{-j} C_{\UV} \left(2 \lVert f'_1 \rVert + 2 \lVert \Htr f_2' \rVert\right) + 2^{-j} C_\chi \left( \lVert f_1'\rVert + \lVert \Htr f_2'\rVert \right) \Bigr) \\
&\leq \frac12 2^{-j} \left( 2 C_{\UV} + C_\chi \right) \sqrt2 \lVert f' \rVert
\leq 2^{-j} 5 M^2 \lVert f' \rVert
\end{aligned}
\end{equation}
where we used that the Hilbert transform preserves the norm of the derivative ($\lVert \Htr f_2' \rVert = \lVert f_2' \rVert$).

Next, we define
\begin{align*}
  Q_{j,\cL} := \alpha_j^\dagger \begin{pmatrix} \id & 0 \\ 0 & m(\mu_w)^{\dagger} \end{pmatrix} \left( W_h^{(\cL),\dagger} P_w W_h^{(\cL)} \ot \ket{+}\bra{+} \right) \begin{pmatrix} \id & 0 \\ 0 & m(\mu_w) \end{pmatrix} \alpha_j.
\end{align*}
Using the second expression in \cref{eq:def Q_j}, we can then split the remaining error as
\begin{equation}\label{eq:q_j_vs_q_MERA}
  \lVert \left( Q_j - \tilde Q_{j,\cL} \right) f \rVert
\leq \lVert \left( Q_j - Q_{j,\cL'} \right) f \rVert
+ \lVert \left( Q_{j,\cL'} - \tilde Q_{j,\cL'} \right) f \rVert
+ \lVert \left( \tilde Q_{j,\cL'} - \tilde Q_{j,\cL} \right) f \rVert
\end{equation}
The third term in \cref{eq:q_j_vs_q_MERA} can be estimated using \cref{lem:IR}:
\begin{align*}
  \lVert \left( \tilde Q_{j,\cL'} - \tilde Q_{j,\cL} \right) f \rVert
&\leq \lVert \alpha_j^\dagger W^{(\cL'),\dagger} \bigl( P_s \ot \ket +\bra + \bigr) W^{(\cL')} \alpha_j f \rVert
\leq \lVert P_{j-\cL'} f \rVert \\
&\leq 2^{(j-\cL')/2} \sqrt{M D(f)} \max \{ \lVert\phi_g\rVert_\infty, \lVert\phi_h\rVert_\infty \} (\lVert f_1\rVert + \lVert f_2\rVert) \\
&\leq 2^{(j-\cL')/2} \sqrt 2 \sqrt{M D(f)} B \lVert f\rVert.
\end{align*}
For the second term in \cref{eq:q_j_vs_q_MERA}, we use \cref{eq:filtererror_wavelet} in \cref{lem:filtererror}:
\begin{align*}
  \lVert Q_{j,\cL'} - \tilde Q_{j,\cL'} \rVert
\leq \lVert P_w (W_h^{(\cL)} m(\mu_w) - W_g^{(\cL)}) \rVert + \lVert m(\mu_w)^{\dagger} W_h^{(\cL),\dagger} - W_g^{(\cL),\dagger} \rVert
\leq 2\eps\cL'
\end{align*}
Finally, for the first term in \cref{eq:q_j_vs_q_MERA}, we would like to apply \cref{lem:IR}, but we need to be careful because~$m(\mu_w)$ does not preserve compact support.
So we first use \cref{eq:filtererror_scaling} in \cref{lem:filtererror} to get rid of $m(\mu_w)$, and then apply \cref{lem:IR}:
\begin{align*}
  \lVert \left( Q_j - Q_{j,\cL'} \right)f \rVert
&= \lVert \left( P_s \ot \ket+\bra+ \right) W_h^{(\cL')} \begin{pmatrix} \id & 0 \\ 0 & m(\mu_w) \end{pmatrix} \alpha_j f \rVert \\
&\leq \lVert P_s ( W_h^{(\cL')} m(\mu_w) - W_g^{(\cL')} ) \alpha^g_j f \rVert + \lVert \left( P_s \ot \id \right) W^{(\cL')} \alpha_j f \rVert \\
&\leq \eps \cL' \lVert \alpha^g_j f_2 \rVert + 2 \lVert P_s W_g^{(\cL')} \alpha^g_j f \rVert + \lVert \left( P_s \ot \id \right) W^{(\cL')} \alpha_j f \rVert \\
&\leq \eps \cL' \lVert f \rVert + 3 \left( \lVert P^g_{j-\cL'} f_2 \rVert + \lVert P^h_{j-\cL'} f_1 \rVert \right) \\
&\leq \eps \cL' \lVert f \rVert + 2^{(j-\cL')/2} 5 \sqrt{M D(f)} B \lVert f\rVert.
\end{align*}
Thus, we can upper bound \cref{eq:q_j_vs_q_MERA} by
\begin{align}\label{eq:q_j_vs_q_MERA_bound}
  \lVert \left( Q_j - \tilde Q_{j,\cL} \right) f \rVert
\leq 3 \eps\cL' \lVert f\rVert + 2^{(j-\cL')/2} 7 \sqrt{M D(f)} B \lVert f \rVert.
\end{align}
Combining \cref{eq:q_vs_qj_bound,eq:q_j_vs_q_MERA_bound} we obtain the desired bound.

(ii) Using $\phi^{g,\per}_{0,1} = \phi^{h,\per}_{0,1} = \one$, it is easy to see that our choice of input to the scaling layer ensures that
\begin{align*}
  Q^{\per} \one = \tilde Q^{\per}_{\cL} \one,
\end{align*}
so we can assume without loss of generality that~$f$ has zero mean or, equivalently, that~$P_0 f = 0$ and we may apply \cref{lem:move through discretization}.
Similarly as before (but without having to worry about an IR cut-off), we introduce
\begin{align*}
  Q^{\per}_{\cL} := \alpha_{\cL}^{\per,\dagger} \begin{pmatrix} \id & 0 \\ 0 & m(\mu_w)^{\dagger} \end{pmatrix} \left( W_h^{(\cL),\per,\dagger} P_w W_h^{(\cL),\per} \ot \ket{+}\bra{+} \right) \begin{pmatrix} \id & 0 \\ 0 & m(\mu_w) \end{pmatrix} \alpha_{\cL}^{\per}
\end{align*}
and use a triangle inequality
\begin{align*}
  \lVert \left( Q^{\per} - \tilde Q^{\per}_{\cL} \right) f \rVert
\leq \lVert \left( Q^{\per} - Q^{\per}_{\cL'}  \right) f \rVert + \lVert \left( Q^{\per}_{\cL'} - \tilde Q^{\per}_{\cL'} \right) f \rVert + \lVert \left( \tilde Q^{\per}_{\cL'}- \tilde Q^{\per}_{\cL} \right) f \rVert.
\end{align*}
For the first term, we use \cref{lem:UV,lem:scaling error} and obtain
\begin{align*}
  \lVert \left( Q^{\per} - Q^{\per}_{\cL'} \right) f \rVert \leq 2^{-\cL'} 5 M^2 \lVert f' \rVert,
\end{align*}
in complete analogy to \cref{eq:q_vs_qj_bound}.
For the second term, note that we can ignore the scaling part in \cref{eq:approx symbol per} since we assumed that~$P_0 f = 0$.
Thus, we can use \cref{eq:filtererror periodic wavelet} in \cref{lem:filtererror periodic} and find
\begin{align*}
  \lVert Q^{\per}_{\cL'} - \tilde Q^{\per}_{\cL'} \rVert \leq 2 \eps \cL'.
\end{align*}
Finally, the third term can be upper bounded by using \cref{lem:UV},
\begin{align*}
  \lVert \left( \tilde Q^{\per}_{\cL'}- \tilde Q^{\per}_{\cL} \right) f \rVert
\leq \lVert (\id - P^{\per}_{\cL'}) f \rVert
\leq 2^{-\cL'} \sqrt 2 C_{\UV} \lVert f' \rVert
\leq 2^{-\cL'} 4 M^2 \lVert f' \rVert
\end{align*}
(note that here we are comparing different UV cut-offs, in contrast to before).
By combining these bounds we obtain the desired result.
\end{proof}

\noindent
If we keep track of all the wavelet constants in the proof of \cref{prop:approx_symbol} rather than bounding them in terms of~$M$ then the proof shows in fact the bound
\begin{align}\label{eq:tracking constants}
  \lVert \left( Q - \tilde Q_{j,\cL} \right) f \rVert
\leq 3 \eps\cL' \lVert f \rVert + 2^{(j-\cL')/2} 7 \sqrt{D(f)} C_{\IR} \lVert f \rVert + 2^{-j} \frac{1}{\sqrt2}(2C_{\UV} + C_{\chi}) \lVert f' \rVert,
\end{align}
which will be useful if we want to investigate numerically how fast our error bounds converge.

\subsection{Approximation bounds for correlation functions}\label{subsec:approx}
The bounds on the approximate symbol from \cref{prop:approx_symbol} can be used to estimate the approximation error for correlation functions.
We start with the Dirac fermion on the line, whose vacuum state is the quasi-free state~$\omega_Q$ with symbol~$Q$ defined in \cref{eq:free_fermion_symbol}.
We are interested in correlation functions of the form involving the smeared Dirac field~$\Psi(f)$ and normal-ordered quadratic operators.
In the Fock representation, the two-component Dirac field is implemented by the operators~$\Psi(f) := a_Q(f)$, defined as in \cref{eq:CAR_rep}, and the normal-ordered quadratic operators the~$\d\Gamma_Q(A)$ defined in \cref{subsec:second quantization}.
Thus, we wish to approximate correlation functions of the form
\begin{align}\label{eq:exact correlation}
  G(\{O_i\}) := \braket{\Omega|O_1 \cdots{} O_n|\Omega},
\end{align}
where each~$O_i$ is either a component of~$\Psi(f)$ or its adjoint~$\Psi^\dagger(f)$, or a normal-ordered operator~$\d\Gamma_Q(A)$.

We would like to approximate such correlation functions by using the symbol~$\tilde Q_{j,\cL}$ defined in \cref{eq:approx symbol}.
Thus we fix an approximate Hilbert pair, $j\in\ZZ$, and $\cL>0$, and consider
\begin{align}\label{eq:approx correlation}
  \tilde G_{j,\cL}(\{O_i\}) := \braket{\Omega|\tilde O_1 \cdots{} \tilde O_n|\Omega},
\end{align}
where the~$\tilde O_i$ are obtained from the~$O_i$ by replacing~$\Psi(f)$ by~$\tilde\Psi(f) := a_{\tilde Q_{j,\cL}}(P_j f)$ and~$\d\Gamma_Q(A)$ by~$\d\Gamma_{\tilde Q_{j,\cL}}(P_j A P_j)$, respectively.

On the circle, we denote the corresponding correlation functions for periodic boundary conditions by~$G^{\per}(\{O_i\})$ and $\tilde G^{\per}_{\cL}(\{O_i\})$, respectively.
They are defined in terms of the symbol~$Q^{\per}$ and its approximation~$\tilde Q^{\per}_{\cL}$ defined in \cref{eq:approx symbol per}.
We discuss anti-periodic boundary conditions in \cref{subsec:periodic circuit} below.

The following theorem is our main technical result (already stated informally in \cref{thm:introduction}).
It states that~$G(\{O_i\}) \approx \tilde G_{j,\cL}(\{O_i\})$ under appropriate conditions (and similarly in the periodic case).

\begin{thm}\label{thm:approximation}
Consider an $\eps$-approximate Hilbert pair with scaling filters supported in~$\{0,\dots,M-1\}$, scaling functions bounded by~$B$, and $\eps\in(0,1)$.
\begin{enumerate}
\item
Let $f_1,\dots,f_n$ be compactly supported functions in~$H^1(\RR) \ot \CC^2$ and let~$A_1,\dots,A_m$ be Hilbert-Schmidt integral operators with compactly supported kernels in~$H^1(\RR^2) \ot M_2(\CC)$, all with $L^2$-norm at most~1.
Let $O_i=\Psi(f_i)$ or~$\Psi^\dagger(f_i)$ for~$i=1,\dots,n$ and~$O_{n+i}=\d\Gamma_Q(A_i)$ for~$i=1,\dots,m$.
Then we can find, for every~$\cL>0$, a scale~$j\in\ZZ$ such that
\begin{align*}
  \left\lvert G(\{O_i\}) - \tilde G_{j,\cL}(\{O_i\}) \right\rvert
\leq 8^m m! (n+m) \left( \change{6} \eps \log_2\frac{3 C^3 D}\eps + C \, D^{1/3} 2^{-\frac{\cL}3} \right).
\end{align*}
The constant~$C := \change{14} (\sqrt{2M}B + M^2)$ depends only on the Hilbert pair, and the constant~$D := \max\{1, d(f,A) D(f,A)\}$ depends only on the smoothness and support of the smearing functions, where~$d(f, A) := \max \{ \lVert f_i' \rVert, \lVert \nabla A_i \rVert \}$ and~$D(f,A) := \max \{ D(f_i), D(A_i) \}$; $\nabla A_i$ denotes the gradient of the kernel of~$A_i$ and~$D(A_i)$ denotes the side length of the smallest square supporting the kernel.
\item
Let $f_1,\dots,f_n$ be functions in $H^1(\SS) \ot \CC^2$ and let~$A_1,\dots,A_m$ be Hilbert-Schmidt integral operators with kernels in~$H^1(\SS)\ot M_2(\CC)$, all with $L^2$-norm at most~1.
Then we have, for every~$\cL>0$, that
\begin{align*}
  \left\lvert G^{\per}(\{O_i\}) - \tilde G^{\per}_{\cL}(\{O_i\}) \right\rvert
\leq 8^m m! (n+m) \left( 6 \eps \log_2\frac{59 M^2 D}{\eps} + 26 M^2 D 2^{-\cL} \right).
\end{align*}
The constant~$D$ is defined as~$D := \max \{ 1, \lVert f_i' \rVert, \lVert \nabla A_i \rVert \}$, with~$\nabla A_i$ the gradient of the kernel of~$A_i$.
\end{enumerate}
\end{thm}

\noindent
Before giving the proof, we comment on some aspects of the theorem.
The main idea behind the theorem and its proof is that the approximation of the correlation functions is accurate as long as the approximation to the symbol is accurate on the scales at which the system is probed.
Quite intuitively, large support requires us to accurately approximate large scales, and strong fluctuations (large derivatives) require accuracy at small scales.
The constant $D=\max\{1,d(f,A)D(f,A)\}$ reflects the number of scales needed for accurate approximation for given smearing functions~$f_i$ and kernels~$A_i$.
Intuitively, $D$ is invariant under dilatations, reflecting the scale invariance of the theory.
On the circle~$\SS$, there is a natural largest scale, allowing for a slightly simpler formulation.
While we state the theorem for the Dirac fermion, \cref{prop:approx_symbol} readily implies a similar result for correlation functions of the Majorana fermion (\cref{subsec:selfdual}).

Our assumptions on the operators~$A_i$ imply that they are in fact trace class.
Thus, the operators~$\d\Gamma(A_i)$ and~$\d\Gamma_Q(A_i)$ can be directly defined in the CAR algebra, so we could work directly with the state~$\omega_Q$ on the algebra rather than in the Fock space representation.
Such an approach could improve the dependence on~$m$ of the bounds, since one can estimate $\lVert \d \Gamma_Q(A_i) \rVert = \lVert \d \Gamma(A_i) - \omega_Q(\d \Gamma(A_i)) \rVert \leq 2\lVert A_i \rVert_1$.

While in \cref{thm:approximation} we order the insertions in~$G(\{O_i\})$ in a particular way, other orderings are also possible.
This follows either from using the commutation relations (leading to terms depending on~$A_k f_l$) or by directly adjusting the proof (leading to a change in the dependence on~$n$ and~$m$, since in the proof we would insert the particle-number projections~$\Pi_{2k}$ in different places).

Finally, we note that in the proofs of both \cref{prop:approx_symbol,thm:approximation} we bound the wavelet parameters~$C_{\UV}$, $C_{\IR}$, and~$C_\chi$ from \cref{lem:UV,lem:IR,lem:scaling error} in terms of the support~$M$ to arrive at simpler expressions.
Sharper numerical bounds can be obtained by using~$C_{\UV}$, $C_{\IR}$, and~$C_\chi$ directly (see \cref{tab:constants}).
If one tracks these constants throughout the proof, using \cref{eq:tracking constants} rather than \cref{prop:approx_symbol}, one sees that $C$ can be taken to be
\begin{align}\label{eq:sharp constant}
C = 2(4C_{\UV} + C_{\chi}) + \change{20}C_{\IR}.
\end{align}
The precise numerical constants are not very important, but we can use this as in \cref{fig:approximation error} to illustrate \cref{thm:approximation} numerically for two-point functions (using \cref{tab:constants} to evaluate \cref{eq:sharp constant}).
We see that, even for relatively small circuit depth, our \cref{thm:approximation} combined with numerical results of \cref{tab:constants} yields a reasonably small upper bound on the approximation error.

\begin{proof}[Proof of \cref{thm:approximation}]
(i)~We first estimate the error in the correlation functions in terms of the corresponding symbols for fixed~$j\in\ZZ$ and~$\cL'\in\{0,\dots,\cL\}$.
We define $Q_- := Q$, $Q_+ := \id - Q$, $\tilde Q_- := \tilde Q_{j,\cL}$, and $\tilde Q_+ := P_j - \tilde Q_{j,\cL}$~(!).
For $i=1,\dots,n$,
\begin{align*}
  \lVert O_i - \tilde O_i \rVert
= \lVert a_Q(f_i) - a_{\tilde Q_{j,\cL}}(P_j f_i) \rVert
\leq \lVert (Q_+ - \tilde Q_+) f_i\rVert + \lVert (Q_- - \tilde Q_-) f_i\rVert,
\end{align*}
where we used the definition of the operators~$\tilde O_i$ described above, \cref{eq:CAR_rep} and that~$\tilde Q_{j,\cL} P_j = P_j \tilde Q_{j,\cL} = \tilde Q_{j,\cL}$.
By \cref{prop:approx_symbol}, we have the estimate
\begin{align*}
  \lVert (Q_- - \tilde Q_-) f_i\rVert
\leq 3 \eps\cL' \lVert f_i \rVert + 2^{(j-\cL')/2} 7 \sqrt{M D(f_i)} B \lVert f_i \rVert + 2^{-j} 5 M^2 \lVert f_i' \rVert.
\end{align*}
Moreover, using \cref{lem:UV},
\begin{align*}
  \lVert (Q_+ - \tilde Q_+) f_i\rVert
\leq \lVert P_j f_i - f_i \rVert + \lVert (Q_- - \tilde Q_-) f_i \rVert
\leq 2^{-j} 4 M^2 \lVert f'_i \rVert + \lVert (Q_- - \tilde Q_-) f_i \rVert.
\end{align*}
Thus we find that
\begin{align}\label{eq:estimate_cr_an}
  \lVert O_i - \tilde O_i \rVert
\leq 6 \eps\cL' + 2^{(j-\cL')/2} 14 \sqrt{M D(f_i)} B + 2^{-j} 14 M^2 \lVert f_i' \rVert
\end{align}
using~$\lVert f_i\rVert\leq1$.
For $i=n+1,\dots,n+m$, if we let~$\Pi_n$ denote the projection onto the~$n$-particle subspace of the Fock space then by \cref{eq:second quant vs symbol} we have the bound
\begin{align*}
  \lVert \left( O_i - \tilde O_i \right) \Pi_{2k} \rVert
&\leq 4(2k+2) \max_{\delta=\pm} \{ \lVert Q_\delta A_i Q_\delta - \tilde Q_\delta A_i \tilde Q_\delta \rVert, \lVert Q_\delta A_i Q_{-\delta} - \tilde Q_\delta A_i \tilde Q_{-\delta} \rVert_2 \} \\
&\leq 4(2k+2) \max_{\delta=\pm} \{
\lVert (Q_\delta - \tilde Q_\delta) A_i \rVert_2 + \lVert A_i (Q_\delta - \tilde Q_\delta) \rVert_2 \}.
\end{align*}
To estimate $\lVert (Q_\delta - \tilde Q_\delta) A_i \rVert_2$, let $\{e_n\}$ be an orthonormal basis of $L^2(\RR) \ot \CC^2$, so
\begin{align*}
  \lVert (Q_\delta - \tilde Q_\delta) A_i \rVert_2^2 &= \sum_n \lVert (Q_\delta - \tilde Q_\delta) A_i e_n \rVert^2 \\
  &\leq \sum_n \left( \left(3 \eps\cL'  + 2^{(j-\cL')/2} 7 \sqrt{M D(A_i)} B \right)\lVert A_i e_n \rVert + 2^{-j} 9 M^2 \lVert (A_i e_n)' \rVert\right)^2
\end{align*}
using \cref{prop:approx_symbol} and \cref{lem:UV} (for $\delta=+$) and the fact that, by our assumption on the support of the kernel of~$A_i$, the support of~$A_i e_n$ is contained in an interval of size~$D(A_i)$.
Since~$A_i$ has a kernel~$h_i$ in $H^1(\RR) \ot M_2(\CC)$, it holds that $(A_i e_n)' = (\partial_x A_i) e_n$, where~$\partial_x A_i$ denotes the integral operator with kernel~$\partial_x h_i$.
Thus, we conclude \change{using Cauchy-Schwarz}
\begin{align*}
  \lVert (Q_\delta - \tilde Q_\delta) A_i \rVert_2 \leq 3 \eps\cL' \lVert A_i \rVert_2 + 2^{(j-\cL')/2} 7 \sqrt{M D(A_i)} B \lVert A_i \rVert_2 + 2^{-j} 9 M^2 \lVert \partial_x A_i \rVert_2.
\end{align*}
Since the adjoint of an integral operator has the transposed and conjugated kernel, we obtain the same bound on~$\lVert A_i (Q_\delta - \tilde Q_\delta) \rVert_2 = \lVert (Q_\delta - \tilde Q_\delta) A_i^\dagger \rVert_2$ but with~$\lVert\partial_y A_i\rVert$ in place of~$\lVert\partial_x A_i\rVert$,
and hence
\begin{align}\label{eq:estimate_sq_ops_bound}
  \lVert \left( O_i - \tilde O_i \right) \Pi_{2k} \rVert
\leq 4(2k+2) \left( 6 \eps\cL' + 2^{(j-\cL')/2} 7 \sqrt{M D(A_i)} B + 2^{-j} 14 M^2 \lVert \nabla A_i \rVert_2 \right)
\end{align}
using~$\lVert A_i \rVert_2 = \lVert h_i \rVert \leq 1$, and where we have written $\nabla A_i$ for the operator which has the gradient of $h_i$ as kernel.
To estimate the error in the correlation functions, we use a telescoping sum
\begin{align}\label{eq:telescope cor}
  \left\lvert G(\{O_i\}) - \tilde G_{j,\cL}(\{O_i\}) \right\rvert \leq \sum_{i=1}^{n+m} \delta_i,
\end{align}
where
\begin{align*}
  \delta_i = \lvert\braket{\Omega|O_1 \cdots O_{i-1} (O_i - \tilde O_i) \tilde O_{i+1} \cdots \tilde O_{n+m}|\Omega}\rvert.
\end{align*}
Now, $\lVert O_i\rVert \leq 1$ for $i=1,\dots,n$ by~$\lVert f_i\rVert\leq1$.
For $i=1,\dots,m$, we can replace~$O_{n+i}$ by $O_{n+i} \Pi_{2(m-i)}$, and similarly for~$\tilde O_{n+i}$.
Since~$\lVert O_{n+i} \Pi_{2(m-i)} \rVert \leq 8(m-i+1)$ by \cref{eq:operator_est} and~$\lVert A_{n+i}\rVert_2\leq1$, we find that, for~$i=1,\dots,n$,
\begin{align*}
  \delta_i
&\leq 8^m m! \left( 6 \eps\cL' + 2^{(j-\cL')/2} 14 \sqrt{M D(f_i)} B + 2^{-j} 14 M^2 \lVert f_i' \rVert \right) \\
\end{align*}
by \cref{eq:estimate_cr_an} and, for~$i=1,\dots,m$,
\begin{align*}
  \delta_{n+i}
&\leq 8^m m! \left( 6 \eps\cL' + 2^{(j-\cL')/2} 14 \sqrt{M D(A_i)} B + 2^{-j} 14 M^2 \lVert \nabla A_i \rVert \right)
\end{align*}
by \cref{eq:estimate_sq_ops_bound}.
If we plug these bounds into \cref{eq:telescope cor} we obtain
\begin{equation}\label{eq:G error unoptimized}
\begin{aligned}
  &\quad \left\lvert G(\{O_i\}) - \tilde G_{j,\cL}(\{O_i\}) \right\rvert \\
&\leq 8^m m! (n+m) \left( 6\eps\cL' + 2^{(j-\cL')/2} 14 \sqrt{M D(f, A)} B + 2^{-j} 14 M^2 d(f, A) \right),
\end{aligned}
\end{equation}
where we used the definitions of~$D(f,A)$ and~$d(f,A)$.
We have thus obtained a bound on the approximation error which holds for all~$j\in\ZZ$ and~$\cL'=0,\dots,\cL$.

We now choose~$j$ and~$\cL'$ to obtain that vanishes as the number of layers~$\cL$ increases and~$\eps$ goes to zero.
We first choose $j = \lceil \frac{\cL'}3 + \frac13\log_2\frac{d(f,A)^2}{D(f,A)} \rceil$ and obtain
\begin{align*}
  \left\lvert G(\{O_i\}) - \tilde G_{j,\cL}(\{O_i\}) \right\rvert
&\leq 8^m m! (n+m) \left( 6\eps\cL' + 14 (\sqrt{2M}B + M^2) d(f,A)^{1/3} D(f,A)^{1/3} 2^{-\frac{\cL'}3} \right) \\
&= 8^m m! (n+m) \left( 6\eps\cL' + C \, D^{1/3} 2^{-\frac{\cL'}3} \right),
\end{align*}
using the definitions of~$C$ and~$D$.
We now choose~$\cL' = \min \{ \cL, \lceil\log_2(C^3 D/\eps)\rceil \}$, which is always nonnegative, and obtain
\begin{align*}
  \left\lvert G(\{O_i\}) - \tilde G_{j,\cL}(\{O_i\}) \right\rvert
&\leq 8^m m! (n+m) \left( 6\eps\left( \log_2\frac{C^3 D}\eps + 1\right) + \max \{ C \, D^{1/3} 2^{-\frac{\cL}3}, \eps\} \right) \\
&\leq 8^m m! (n+m) \left( 6 \eps \log_2\frac{3 C^3 D}\eps + C \, D^{1/3} 2^{-\frac{\cL}3} \right),
\end{align*}
which proves the desired bound.

(ii) The proof for the circle goes along the same lines using the corresponding bound from \cref{prop:approx_symbol} and~$j=\cL$.
Instead of \cref{eq:estimate_cr_an,eq:estimate_sq_ops_bound}, we find that, for all~$\cL'\in\{0,\dots,\cL\}$ and for~$i = 1, \ldots, n$,
\begin{align*}
  \lVert O_i - \tilde O_i \rVert
\leq 4\eps\cL' + 2^{-\cL'} 18 M^2 \lVert f_i' \rVert + 2^{-\cL} 4 M^2 \lVert f_i' \rVert
\leq 4\eps\cL' + 2^{-\cL'} 22 M^2 \lVert f_i' \rVert,
\end{align*}
while for $i = n+1,\ldots n+m$,
\begin{align*}
  \lVert \left( O_i - \tilde O_i \right) \Pi_{2k} \rVert \leq 8(2k+2) \left( 6 \eps\cL' + 2^{-\cL'} 26 M^2 \lVert \nabla A_i \rVert \right).
\end{align*}
Thus we obtain
\begin{align*}
 \left\lvert G^{\per}(\{O_i\}) - \tilde G^{\per}_{j,\cL}(\{O_i\}) \right\rvert
\leq 8^m m! (n+m) \left( 6 \eps\cL' + 26 M^2 D 2^{-\cL'} \right)
\end{align*}
in place of \cref{eq:G error unoptimized}.
Finally, we choose~$\cL' = \min \{ \cL, \lceil \log_2\frac{26 M^2 D}{\eps}\rceil \}$, which is always nonnegative, and arrive at
\begin{align*}
 \left\lvert G^{\per}(\{O_i\}) - \tilde G^{\per}_{\cL}(\{O_i\}) \right\rvert
\leq 8^m m! (n+m) \left( 6 \eps \log_2\frac{59 M^2 D}{\eps} + 26 M^2 D 2^{-\cL} \right).
\end{align*}
This is the desired bound.
\end{proof}

To illustrate \cref{thm:approximation} and to show that the class of operators considered is an interesting class, we now describe how to compute correlation functions involving smeared stress-energy tensors.
The stress-energy tensor is a fundamental object in conformal field theory.
Its mode decomposition form two copies of the Virasoro algebra, encoding the conformal symmetry of the theory \cite{francesco2012conformal}.
It is convenient to choose a different basis and write the Dirac action in the form
\begin{align*}
  S(\Psi) = \frac{1}{2}\int \Psi^{\dagger}\begin{pmatrix} \overline{\partial} & 0 \\ 0 & \partial \end{pmatrix}\Psi \d x \d t
\end{align*}
where $\partial = \partial_x + \partial_t$ and $\overline{\partial} = \partial_x - \partial_t$.
Then, formally, the holomorphic component~$T=T_{zz}$ of the stress-energy tensor, is the normal ordering of $\Psi_1^{\dagger} \partial \Psi_1$.
Solutions of the Dirac equation in this basis are of the form $\chi(x, t) = \chi_+(x + t) \op \chi_-(x - t)$.
The \emph{unsmeared} stress energy tensor $T(x)$ (which is only a formal expression in the algebraic formalism) is given by $T(x) = \d \Gamma_Q(D_x)$ where
\begin{align*}
  D_x \begin{pmatrix} f_1 \\ f_2 \end{pmatrix}
= \begin{pmatrix} \delta_x f_1' \\ 0 \end{pmatrix}
\end{align*}
where $\delta_x$ is a $\delta$-function centered at $x$.
To smear this operator, consider two smearing functions~$h_x$ and~$h_t$.
The~$h_t$ should be thought of as a smearing in the time direction and we use the Dirac equation to interpret this on our Hilbert space corresponding to $t=0$.
Thus, we define
\begin{align*}
  D(h) \begin{pmatrix} f_1 \\ f_2 \end{pmatrix}
= \begin{pmatrix} h_x (h_t \star f_1)' \\ 0 \end{pmatrix}
\end{align*}
where $\star$ denotes convolution.
We then define the \emph{smeared} stress-energy tensor by the normal-ordered second quantization:
$T(h) = \d \Gamma_Q(D(h))$.
If $h_x$ and $h_t$ are compactly supported functions in $H^1(\RR)$, then the operator $T(h)$ satisfies the conditions of \cref{thm:approximation}.
In \cref{fig:numerics},~(b) we show the numerical result of computing two-point functions~$\braket{T(h_1) T(h_2)}$ using our quantum circuits, where the~$h_i$ are taken to be Gaussian smearing functions.
In agreement with our theorem, we find that the two-point functions are approximated accurately for approximate Hilbert pairs of suitably good quality.
(Strictly speaking, the Gaussians need to be approximated by compactly supported functions so that \cref{thm:approximation} applies.)

\section{Quantum circuits for correlation functions}\label{sec:circuit}
We now explain how the mathematical approximation theorem can be used to construct unitary quantum circuits (in fact, tensor networks of MERA type) that rigorously compute correlation functions for free Dirac and Majorana fermions.
Finally, we discuss how symmetries are approximately implemented by our circuits.

\subsection{Discrete wavelet transform and single-particle circuits}
First, we describe how discrete wavelet transforms can be written as \emph{single-particle} (`first quantized' or `classical') linear circuits.
In this context, `single-particle' means that the state space is a direct sum of local state spaces (such as~$\ell^2(\ZZ) = \bigoplus_{n\in\ZZ} \CC$).
Thus let $W\colon\ell^2(\ZZ)\to\ell^2(\CC^2)$ denote a single layer of a discrete wavelet transform, defined as in \cref{eq:W}.
By putting the scaling and wavelet outputs on the even and odd sublattice, respectively, we obtain a unitary
\begin{align*}
  W'\colon \ell^2(\ZZ)\to\ell^2(\ZZ), \quad W' := \iota W,
\end{align*}
where
\begin{align*}
  &\iota \colon \ell^2(\ZZ) \ot \CC^2 \to \ell^2(\ZZ) \\
  &\iota (f_w \op f_s) [2n] = f_w[n], \\
  &\iota (f_w \op f_s) [2n+1] = f_s[n].
\end{align*}
It has been shown in~\cite{evenbly2018representation} that if the scaling filters are real and have length~$M$ then~$W'$ can be decomposed into a product~$W' = W_{M/2} \cdots W_1$, where each~$W_k : \ell^2(\ZZ) \to \ell^2(\ZZ)$ is a block-diagonal unitary of the form
\begin{align*}
  W_k = \begin{cases}
    \bigoplus_{r \, \text{odd}} u_{r,r+1}(\theta_k) & \text{if $k$ odd,}\\
    \bigoplus_{r \, \text{even}} u_{r,r+1}(\theta_k) & \text{if $k$ even.}
  \end{cases}
\end{align*}
Here, the~$\theta_k$ are suitable angles and~$u_{r,r+1}(\theta_k)$ denotes the unitary which acts on $\ell^2(\{r,r+1\}) \subseteq \ell^2(\ZZ)$ by the rotation matrix
\begin{align*}
  u(\theta_k) = \begin{pmatrix}
    \cos(\theta_k) & \sin(\theta_k) \\
    -\sin(\theta_k) & \cos(\theta_k)
  \end{pmatrix}.
\end{align*}
See~\cite{evenbly2018representation} for a proof and for an algorithm that finds the $\theta_k$ from the filter coefficients.
Thus, we obtain a decomposition of~$W^g$ into a single-particle linear circuit composed of 2-local unitaries (see \cref{fig:classical_circuit},~(a)).
In the same way we can implement $\cL$~layers of the discrete wavelet transform.
Given a 2-local circuit for a wavelet transform, it is not hard to see that the periodized version of this circuit will give the periodized version of the wavelet transform.
That is, the circuit has the structure shown in \cref{fig:circuit_intro},~(c), with exactly the same angles as for the original circuit on~$\ZZ$ for all scales larger than zero.

\begin{figure}
\begin{center}
(a)\quad\raisebox{-3cm}{
\begin{overpic}[height=3.2cm]{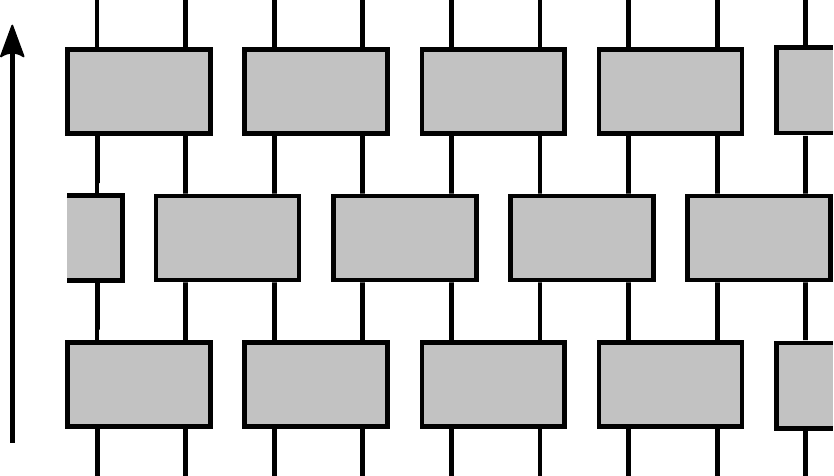}
\put(13.5,9.5){$u_1$} \put(35,9.5){$u_1$} \put(56,9.5){$u_1$} \put(77.5,9.5){$u_1$}
\put(23.5,27){$u_2$} \put(45,27){$u_2$} \put(66,27){$u_2$} \put(87.5,27){$u_2$}
\put(13.5,44.5){$u_3$} \put(35,44.5){$u_3$} \put(56,44.5){$u_3$} \put(77.5,44.5){$u_3$}
\put(-10,26){$W'$}
\end{overpic}}
\qquad
(b)\quad\raisebox{-3cm}{
\begin{overpic}[height=3.2cm]{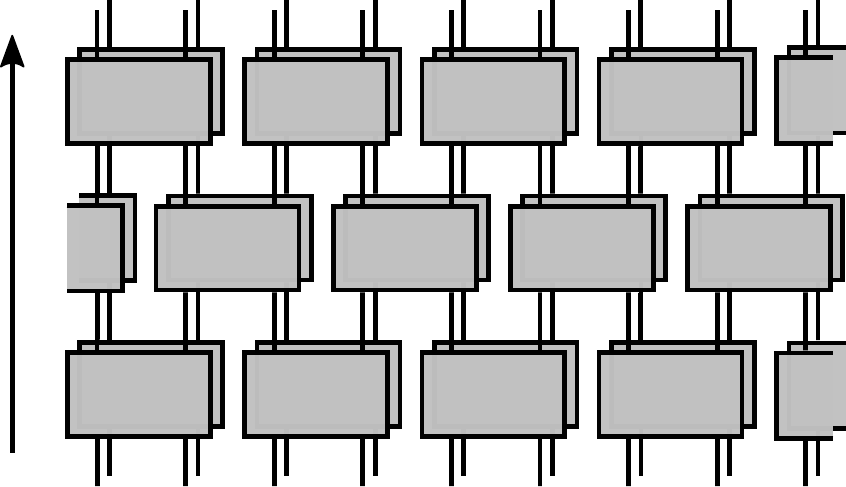}
\put(13,9){$u^h_1$} \put(34,9){$u^h_1$} \put(55,9){$u^h_1$} \put(76,9){$u^h_1$}
\put(23,26.5){$u^h_2$} \put(44,26.5){$u^h_2$} \put(65,26.5){$u^h_2$} \put(86,26.5){$u^h_2$}
\put(13,44){$u^h_3$} \put(34,44){$u^h_3$} \put(55,44){$u^h_3$} \put(76,44){$u^h_3$}
\put(-10,26.5){$W'$}
\end{overpic}}
\caption{(a) Decomposition of the single-layer discrete wavelet transform~$W$ as a 2-local single-particle linear circuit, where we abbreviate~$u_k := u(\theta_k)$.
(b) Circuit for a pair of wavelet transforms, $W = W_g \op W_h$, where we show~$u^h_k := u(\theta^h_k)$ on top of~$u^g_k := u(\theta^g_k)$.}
\label{fig:classical_circuit}
\end{center}
\end{figure}

Given an approximate Hilbert pair (or any pair of wavelets) we can consider~$W := W_h \op W_g$, corresponding to performing both discrete wavelet transforms in parallel.
If we apply the preceding construction to both wavelet transforms~$W_h$ and~$W_g$ we obtain two circuits, one for~$W'_h$ and one for~$W'_g$, parametrized by angles~$\theta_k^h$ and~$\theta_k^g$ for $k=1,\dots,M/2$.
These can be assembled into a single single-particle circuit for
\begin{align*}
  W'\colon\ell^2(\ZZ)\ot\CC^2\to\ell^2(\ZZ)\ot\CC^2, \quad
  W' := W'_g \op W'_h
\end{align*}
As shown in \cref{fig:classical_circuit},~(b), we take each site to carry two degrees of freedom (corresponding to the two components of the Dirac spinor).
Instead we could also arrange the two wavelet transforms on the even and odd sublattices (by conjugating with~$\iota$).
It is straightforward to see that the corresponding circuit can be implemented by 2-local unitaries and swap gates.

\subsection{Second quantized circuits for correlation functions}
Since we seek to describe a quantum many-body state of fermions, the circuit that we will construct is naturally a \emph{fermionic} quantum circuit that acts on a fermionic Fock space~$\fock(\ell^2(\ZZ))$, corresponding to a chain of fermions, by local unitaries.
In our case, it will be obtained by second-quantizing the single-particle circuit for the wavelet transforms described above.
Such circuits (sometimes called Gaussian fermionic circuits) can be efficiently simulated classically.
If one would like to implement these circuits on a quantum computer, one would have to convert the circuit to a qubit circuit.
In this case, a very natural way to do so is by applying a Jordan-Wigner transform.
The resulting type of (qubit) circuit is a so-called matchgate circuit~\cite{bravyi2002fermionic,jozsa2008matchgates}.
We refer to~\cite{evenbly2016entanglement} for further discussion of fermionic circuits in the context of wavelet transforms.
For a discussion of fermionic MERA in general, see~\cite{corboz2009fermionic}.

In \cref{subsec:approx}, we proved that its correlation functions~\eqref{eq:exact correlation} are well-approximated by \cref{eq:approx correlation}.
We now explain how the latter can be computed by a fermionic quantum circuit of MERA type.
\change{Loosely speaking, we will do the following, in order to compute some correlation function $G(\{O_i\})$:
\begin{enumerate}
\item We discretize the operators using the scaling functions. This gives a set of operators $\{O_i^{\MERA}\}$ on the lattice.
One can also consider this procedure the other way around: this embeds a discrete theory in the continuous theory by smearing all operators with appropriate scaling functions.
\item We compute the correlation function of the operators $\{O_i^{\MERA}\}$ using the state obtained by applying the second quantization of multiple layers of the wavelet transform circuit $W'$ to a product state.
A single layer of $W'$ is illustrated in \cref{fig:classical_circuit}, we compose the layers as usual in the wavelet transform by feeding the output of one layer into the scaling input of the next layer as in \cref{fig:classical_circuit}
The product state is given by a copy of $\ket+$ on each wavelet mode input of the circuit, or equivalently, by composing with an appropriate single mode unitary $h$, we may take as input copies of $\ket0$, see \cref{fig:classical_circuit} and \cref{fig:mera state}.
\end{enumerate}}
Let us first discuss the case of the free Dirac fermion on the real line in more detail.
We start with the approximate symbol~\eqref{eq:approx symbol}, omitting the isometries~$\alpha_j^\dagger$, and rewrite
\begin{align}\label{eq:symbol via U_MERA}
  W^{(\cL),\dagger} \bigl( P_w \ot \ket +\bra + \bigr) W^{(\cL)}
= U_{\MERA}^{(\cL),\dagger} P U_{\MERA}^{(\cL)},
\end{align}
where~$P := P_w \ot \ket0\bra0$ is a symbol on $\ell^2(\ZZ) \ot \CC^{\cL+1} \ot \CC^2$ and~$U_{\MERA}^{(\cL)} \colon \ell^2(\ZZ) \ot \CC^2 \to \ell^2(\ZZ) \ot \CC^{\cL+1} \ot \CC^2$ is the unitary defined by
\begin{align*}
  U_{\MERA}^{(\cL)} &:= (\id_{\ell^2(\ZZ) \ot \CC^{\cL-1} \ot \CC^2} \op U_{\MERA}) \cdots (\id_{\ell^2(\ZZ) \ot \CC^2} \op U_{\MERA}) U_{\MERA}, \\
  U_{\MERA} &:= (\id_{\ell^2(\ZZ)} \ot h) \op (\id_{\ell^2(\ZZ)} \ot \id_{\CC^2})) W,
\end{align*}
where~$h$ is the Hadamard matrix~$h = \frac1{\sqrt2}\begin{psmallmatrix}1 & 1 \\ 1 & -1\end{psmallmatrix}$ which maps~$h\ket0=\ket+$.
Just like~$W$, $U_{\MERA}$ can be implemented by a circuit of depth~$M/2+1$, where~$M$ is the length of the filters, obtained by composing the circuit for~$W$ with an additional layer of Hadamard unitaries acting on the wavelet outputs (see \cref{fig:classical_layer}).
The unitary~$U_{\MERA}^{(\cL)}$ consists of~$\cL$ such circuit layers.

\begin{figure}
\begin{center}
(a)\quad\;\raisebox{-3cm}{
\begin{overpic}[height=3.2cm]{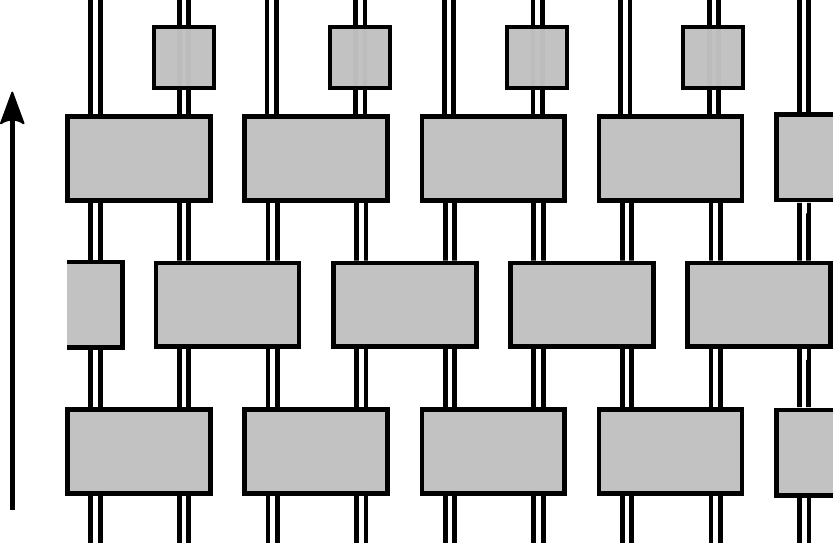}
\put(13.5,9.5){$u_1$} \put(35,9.5){$u_1$} \put(56,9.5){$u_1$} \put(77.5,9.5){$u_1$}
\put(23.5,27){$u_2$} \put(45,27){$u_2$} \put(66,27){$u_2$} \put(87.5,27){$u_2$}
\put(13.5,44.5){$u_3$} \put(35,44.5){$u_3$} \put(56,44.5){$u_3$} \put(77.5,44.5){$u_3$}
\put(19.5,56){$h$} \put(41,56){$h$} \put(62,56){$h$} \put(83.5,56){$h$}
\put(-25,26){$U_{\MERA}$}
\end{overpic}}
\qquad
(b)\;\raisebox{-3cm}{
\begin{overpic}[height=3.7cm]{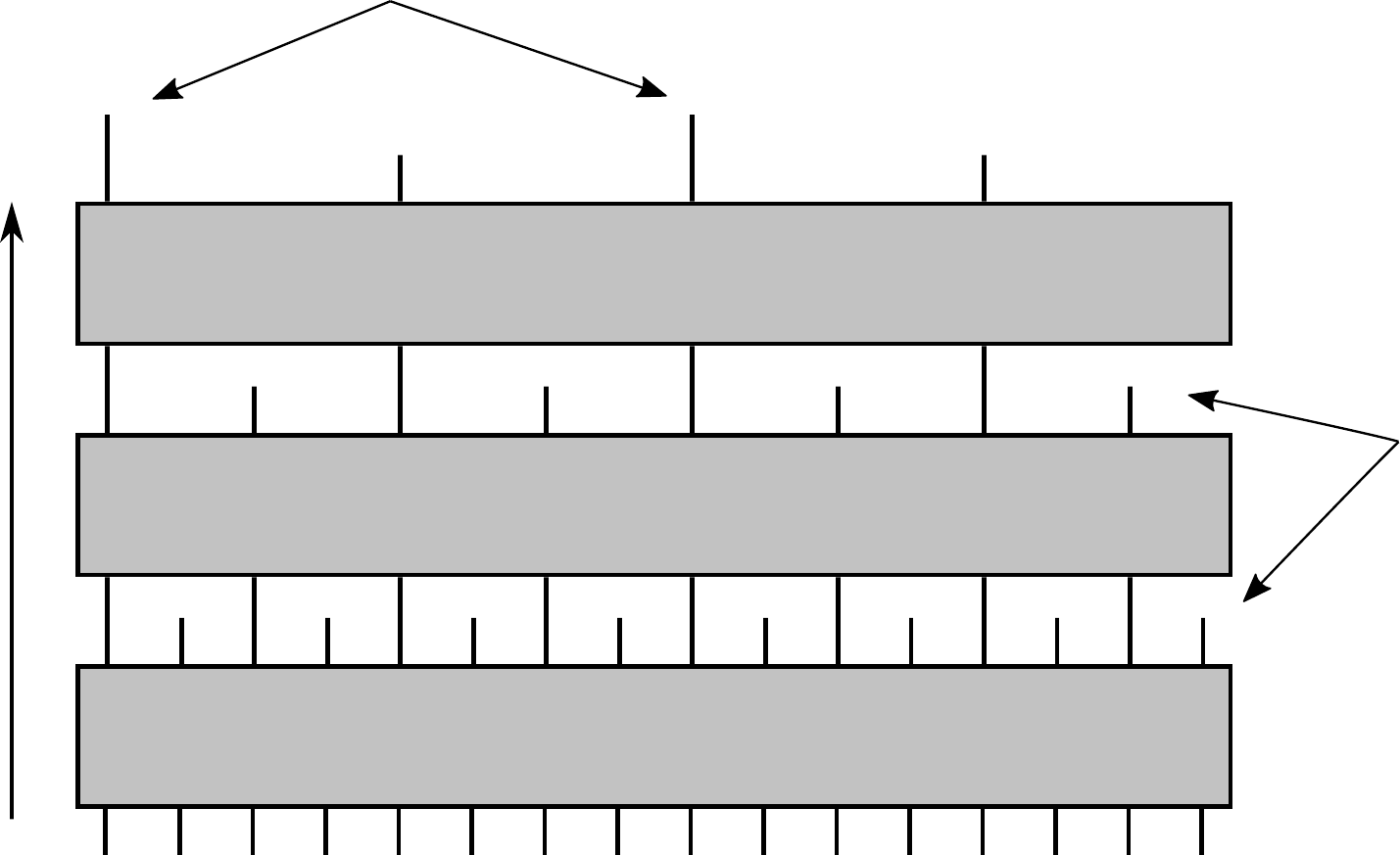}
\put(36,7){$U_{\MERA}$}
\put(36,23.5){$U_{\MERA}$}
\put(36,40){$U_{\MERA}$}
\put(19,68){\footnotesize scaling} \put(13,63){\footnotesize components}
\put(100,31){\footnotesize wavelet} \put(100,26){\footnotesize components}
\end{overpic}}
\caption{(a) A single MERA layer~$U_{\MERA}$ before second quantization. The Hadamard unitary~$h$ (dis)entangles the modes of the two wavelet transforms that make up the Hilbert pair. We abbreviate~$u_k := u^h_k \op u^g_k$ (cf.\ \cref{fig:classical_circuit},~(b)).
(b) Illustration of the unitary~$U_{\MERA}^{(\cL)}$ corresponding to $\cL$~MERA layers before second quantization. Each layer is a local circuit of depth~$M/2+1$, as in~(a).}
\label{fig:classical_layer}
\end{center}
\end{figure}

The key point is that in view of \cref{eq:symbol via U_MERA} we can now compute the correlation function~$\tilde G_{j,\cL}(\{O_i\})$ in \cref{eq:approx correlation} as follows.

\begin{dfn}[MERA correlation functions]
Consider an approximate Hilbert pair with filters~$g$,~$h$.
Given a correlation function~\eqref{eq:exact correlation}, $j\in\ZZ$, and $\cL\geq0$, we define the corresponding \emph{MERA correlation function} by
\begin{align}\label{eq:G MERA}
  G^{\MERA}_{j,\cL}(\{O_i\}) := \braket{\Omega | O^{\MERA}_1 \cdots O^{\MERA}_n | \Omega},
\end{align}
where $O^{\MERA}_i$ is obtained from~$O_i$ by replacing~$\Psi(f)$ by~$\Psi_{\MERA}(f) := a_P(U_{\MERA}^{(\cL)} \alpha_j f)$ and~$\d\Gamma_Q(A)$ by~$\d\Gamma_{\MERA}(A) := \d\Gamma_P(U_{\MERA}^{(\cL)} \alpha_j A \alpha_j^\dagger U_{\MERA}^{(\cL),\dagger})$.
Here, $P := P_w \ot \ket0\bra0$.
\end{dfn}

\noindent Importantly,~$P$ is the symbol of a state for which correlation functions can be straightforwardly evaluated.
Indeed, we can intuitively think of~$P = P_w \ot \ket0\bra0$ as the symbol of a `Fermi sea' where half of the wavelet modes are occupied (equivalently, after a Jordan-Wigner transformation this state corresponds to an `infinite product state' where the wavelet qubits are in state~$\ket{101010\dots}$ and the scaling qubits in~$\ket{0000\dots}$).
More precisely, using \cref{eq:CAR_rep}, we find that
\begin{align}\label{eq:Psi MERA}
  \Psi_{\MERA}(f)
= a_0\bigl((\id-P)U_{\MERA}^{(\cL)} \alpha_j f\bigr) + a_0^\dagger\bigl(\overline{P U_{\MERA}^{(\cL)} \alpha_j f}\bigr),
\end{align}
where~$a_0^{(\dagger)}$ are the ordinary creation and annihilation operators on Fock space.
If~$f$ is a smearing function then in order to find~$\Psi_{\MERA}(f)$ we first compute $\alpha_j f$ either by expanding the scaling basis or simply by sampling (\cref{lem:sampling error}), then we apply $\cL$ layers of the local circuit~$U_{\MERA}$ (\cref{fig:classical_layer}), and finally we apply the projections~$P$ and~$\id-P$.
One can proceed similarly for~$\d\Gamma_{\MERA}(A)$.
This shows that the correlation functions~\eqref{eq:approx correlation} and~\eqref{eq:G MERA} can be efficiently calculated in the single-particle picture.

We now explain how to obtain a fermionic \emph{quantum} circuit with rigorous approximation guarantees.
Suppose that, as in \cref{thm:approximation}, we wish to approximate a correlation function involving $\Psi^{(\dagger)}(f_i)$ and $\d\Gamma_Q(A_i)$, where the smearing functions~$f_i$ and the kernel of~$A_i$ are compactly supported.
In this case, it is easy to see that \cref{eq:G MERA} will involve creation and annihilation operators that act only on finitely many sites~$S \subseteq \ZZ$ (which can be computed from the supports as well as the parameters~$j$,~$\cL$, and~$M$).
In this case, we can replace~$\ell^2(\ZZ)$ by~$\ell^2(S)$, $P$ by its restriction~$P_S$ onto~$\HH_S := \ell^2(\ZZ) \ot \CC^{\cL+1} \ot \CC^2$, and the infinitely wide layers~$U_{\MERA}$ by finitely many local unitaries.
Let us denote by~$\ket{P_S}$ the corresponding Slater determinant in the fermionic Fock space~$\fock(\HH_S)$ and by~$\Gamma_0(U_{\MERA}^{(\cL)}) := \bigoplus_{k=0}^\infty (U_{\MERA}^{(\cL)})^{\wedge k}$ the second quantizations of the single-particle unitaries~$U_{\MERA}^{(\cL)}$.
Since second quantization commutes with multiplication, this can be written as a fermionic quantum circuit composed of~$\cL$ many identical layers, each of depth~$M/2+1$ (which structurally looks like \cref{fig:classical_layer},~(b)).
Thus, we recognize that~$\ket{\MERA_{\cL}} := \Gamma_0(U_{\MERA}^{(\cL)})^\dagger \ket{P_S}$ is precisely the quantum state prepared by a fermionic MERA, as illustrated in \cref{fig:mera state}.
Moreover, we can compute the MERA correlation functions by
\begin{align}\label{eq:G via MERA}
  G^{\MERA}_{j,\cL}(\{O_i\}) = \braket{\MERA_{\cL} | O'_1 \cdots O'_n | \MERA_{\cL}},
\end{align}
where~$O'_i$ is obtained from~$O_i$ by replacing~$\Psi(f)$ by~$\Psi'(f) := a_0(\alpha_j f)$ and~$\d\Gamma_Q(A)$ by~$\d\Gamma_0(\alpha_j A \alpha_j^\dagger) - \braket{\MERA_{\cL}|\d\Gamma_0(\alpha_j A \alpha_j^\dagger)|\MERA_{\cL}}$.
Note that $\braket{\MERA_{\cL}|\d\Gamma_0(\alpha_j A \alpha_j^\dagger)|\MERA_{\cL}}$ is actually \emph{finite} because we truncated the range of wavelet scales, so this normal ordering is well-defined (even if the original operator $A$ was not trace class).
Thus, \cref{eq:G via MERA} can be interpreted as an ordinary correlation function in a fermionic MERA.
This at last justifies our notation.

\begin{figure}


\begin{center}
\begin{overpic}[width=6.0cm]{quantum_circuit_intro}
\put(1,101){$\ket0$} \put(85,101){$\ket0$}
\put(21,65){$\ket0$} \put(63,65){$\ket0$}
\put(-40,50){$\ket{\MERA_{\cL}} = $}
\put(11,29){$\ket0$} \put(32,29){$\ket0$} \put(53,29){$\ket0$}
\put(74,29){$\ket0$} \put(95,29){$\ket0$}
\end{overpic}
\caption{The MERA state is created by applying the unitary circuit to an identical product state at each level. \change{Here we denote by $\ket0$ the Fock vacuum state}.}
\label{fig:mera state}
\end{center}
\end{figure}



\subsection{Circle, boundary conditions, Majorana fermions}\label{subsec:periodic circuit}
For the circle~$\SS$ much the same construction applies.
Given \cref{eq:approx symbol per}, we start with
\begin{align*}
  W^{(\cL),\per,\dagger} \bigl( P_w \ot \ket+\bra+ + P_s \ot \ket{L}\bra{L} \bigr) W^{(\cL),\per}
= U_{\MERA}^{(\cL),\per,\dagger} P_{\per} U_{\MERA}^{(\cL),\per},
\end{align*}
for a suitably defined unitary~$U_{\MERA}^{(\cL),\per}$ and~$P_{\per} = P_w \ot \ket0\bra0 + P_s \ot \ket L\bra L$.
This is already a symbol on a finite-dimensional Hilbert space~$\CC^{2^j} \ot \CC^2$.
As before, $U_{\MERA}^{(\cL),\per}$ is a product of unitaries, one for each layer, but now these unitaries will depend on the scale~$j=0,\dots,\cL-1$ (cf.~\cref{subsec:periodic_wavelets}).
Since taking the periodization of composition of convolutions is the same as periodizing their composition, we can obtain the unitary~$U_{\MERA}^{\per,j}$ for the~$j$-th layer simply by `periodizing' the two-local unitaries~$U_{\MERA}$ and analogously construct the circuit (see \cref{fig:circuit_intro},~(c)).
Just like the filters, the MERA layers become identical for suffiently large~$j$.

This leads to an approximation of the exact correlation functions~$G^{\per}(\{O_i\})$ for periodic boundary conditions
\begin{align*}
  G^{\MERA,\per}_{j,\cL}(\{O_i\}) := \braket{\Omega | O^{\MERA}_{\per,1} \cdots O^{\MERA}_{\per,n} | \Omega},
\end{align*}
where $O^{\MERA}_{\per,i}$ is obtained from~$O_i$ by replacing~$\Psi(f)$ by~$\Psi_{\MERA}^{\per}(f) := a_{P_{\per}}(U_{\MERA}^{(\cL),\per} \alpha^{\per}_j f)$ and~$\d\Gamma_Q(A)$ by~$\d\Gamma_{\MERA}^{\per}(A) := \d\Gamma_{P_{\per}}(U_{\MERA}^{(\cL),\per} \alpha^{\per}_j A \alpha^{\per,\dagger}_j U_{\MERA}^{(\cL),\per,\dagger})$.
As before, this can be interpreted as a correlation function of local operators in a fermionic MERA on a circle.

For anti-periodic boundary conditions on the circle, the symbol was given by~$T^{\dagger} Q^{\per} T$ (see \cref{sec:freefermions}).
This means that we can compute correlation functions for anti-periodic boundary conditions with the same circuit as for the periodic fermion, but replacing~$\alpha_j$ by~$\alpha_j T$.
We note that the smearing functions~$f$ in this case are naturally anti-periodic (they are sections of a nontrivial bundle), so~$Tf$ is periodic and our results apply.

Finally we discuss the case of Majorana fermions.
For simplicity, we only consider the case of the line (cf.~\cref{subsec:selfdual}).
Suppose that we want to approximate a correlation function of the form
\begin{align}\label{eq:majorana correlator}
  G^{\maj}(\{f_i\}) = \braket{\Omega | \Phi(f_1) \dots{} \Phi(f_n) | \Omega},
\end{align}
where the smeared Majorana field is given by $\Phi(f) = a_0((\id-Q) f) + a_0^\dagger(CQ f)$ in terms of the symbol~$Q$ of the free Dirac fermion, and the charge conjugation operator~$C$ defined in \cref{eq:majorana C}.
Consider the self-dual CAR algebra on the range of $P' =  P_w \ot \id_{\CC^2}$ which is a subspace~$\HH'$ of $\ell^2(\ZZ) \ot \CC^2 \ot \CC^{\cL + 1}$ (that is, the subspace corresponding to the wavelet coefficients) with charge conjugation~$C'$ given by the anti-unitary operator on~$\HH'$ which acts by~$x=\begin{psmallmatrix}0&1\\1&0\end{psmallmatrix}$ in the second tensor factor and componentwise complex conjugation in the standard basis.
Similarly to \cref{eq:Psi MERA}, define
\begin{align*}
  \Phi^{\MERA}_{\maj}(f)
= a_0\bigl((P' - P) U_{\MERA}^{(\cL)} \alpha_j f\bigr) + a_0^\dagger\bigl(C' P U_{\MERA}^{(\cL)} \alpha_j f\bigr).
\end{align*}
We note that the above formula defines a representation of the self-dual CAR algebra~$\CARsd(\HH')$ since, clearly,~$C' P = (P'-P)C'$.
As before, we can approximate the correlation function~\eqref{eq:majorana correlator} by
\begin{align*}
  G^{\MERA,\maj}_{j,\cL}(\{f_i\}) = \braket{\Omega | \Phi^{\MERA}_{\maj}(f_1) \dots{} \Phi^{\MERA}_{\maj}(f_n) | \Omega},
\end{align*}
which for compactly supported~$f_i$ can be computed by an ordinary fermionic MERA.
Note that
\begin{align*}
  C' U_{MERA}^{(\cL)} \alpha_j(f)
= U_{MERA}^{(\cL)} C \alpha_j (f)
= U_{MERA}^{(\cL)} \alpha_j (Cf)
\end{align*}
where, with a slight abuse of notation, also write $C$ for the similarly defined operator on $\ell^2(\ZZ) \ot \CC^2$.
Thus, we can also implement $\Gamma^c(U_{MERA}^{(\cL)})$ as a circuit of Majorana fermions, mapping the state on $\CARsd(\HH')$ corresponding to $P$ to the state on $\CARsd(U_{MERA}^{(\cL)}(\HH'))$ with symbol $U_{MERA}^{(\cL),\dagger} P U_{MERA}^{(\cL)}$.

\subsection{Symmetries}\label{subsec:symmetries}
For MERA tensor networks, it has been observed that the (local and global) symmetries of the underlying theory can be approximately implemented in terms of the tensor network itself~\cite{milsted2018tensor}.
In particular, a single layer of the MERA should always correspond to a rescaling by a factor two.
In the wavelet construction, the relation between a single MERA layer and rescaling is very explicit.

In fact, we can show that the operator corresponding to a fermionic field has exact scaling dimension~$\frac{1}{2}$, as was already observed in~\cite{evenbly2016entanglement}.
For this, consider (formally) the Dirac fermion field~$\Psi_i(x)$, where~$\delta_x$ is a delta function centered at~$x$ and $i\in\{1,2\}$.
Its MERA realization at scale~$j\in\ZZ$ is given by
\begin{align}\label{eq:MERA field}
  \Psi_i^{\MERA}(x)
= a^\dagger(\alpha_j (\delta_x \ot \ket i))
= \sum_{k\in\ZZ} \bar{\phi}_{j,k}(x) a^\dagger(\ket k \ot \ket i)
\end{align}
(we identify the CAR algebra with its representation).
Since the scaling functions are compactly supported, the right-hand side expression is well-defined and we take it as the definition of~$\Psi_i^{\MERA}(x)$.
Now note that the scaling superoperator for a single MERA layer consists of a conjugation by the second quantization of~$U_{\MERA}$ and a contraction with the quasi-free state with symbol~$\id_{\ell^2(\ZZ)} \ot \ket+\bra+$ on the wavelet output (cf.~\cref{fig:global_sym}, (a)).
Thus, any creation operator~$a^\dagger(f)$ gets mapped to~$a^\dagger(P_s W f)$, where~$P_s$ denotes the projection onto the scaling modes.
Using \cref{eq:MERA field}, it follows that the scaling superoperator maps
\begin{align*}
\Psi_1^{\MERA}(x) &\mapsto \sum_{k\in\ZZ} \bar{\phi}_{j,k}(x) a^\dagger(P_s W^h \ket k \ot \ket 1)
= \sum_{k\in\ZZ} \bar{\phi}_{j,k}(x) a^\dagger((\downarrow m(\overline{\hat h_s}) \ket k) \ot \ket 1) \\
&= \sum_{k\in\ZZ} \bar{\phi}_{j,k}(x) \sum_{n\in\ZZ} \bar h_s[k-2n] a^\dagger(\ket n \ot \ket 1)
= \sum_{n\in\ZZ} \bar{\phi}_{j-1,n}(x) a^\dagger(\ket n \ot \ket 1) \\
&= \sum_{n\in\ZZ} 2^{-\frac12} \bar{\phi}_{j,n}(\frac x2) a^\dagger(\ket n \ot \ket 1)
= 2^{-\frac12} \, \Psi_1^{\MERA}(\tfrac x2),
\end{align*}
where we used \cref{eq:W,eq:relation_filter_scaling}.
We can argue similarly for the other component, as well as for the adjoints.
Thus, we conclude that a single MERA layer coarse-grains~$\Psi^{\MERA}(x) \mapsto 2^{-\frac12} \Psi^{\MERA}(\frac x2)$.
The interpretation is that a single layer of the MERA corresponds to a rescaling of the fields by a factor two (as it should) and that it \emph{exactly} reproduces the correct scaling dimension of $\frac{1}{2}$ for the fermionic fields.
In general the other scaling dimension of the theory are only approximately reproduced and it would be interesting to prove quantitative bounds (for example, using our \cref{thm:approximation}).

We can also implement other global symmetries on the circuit level.
Translations by steps of size~$2^{-j}$ are trivially implemented by a circuit.
Since we know the explicit time-dependence of the solutions of the Dirac equation, we can implement time translations by transforming with a basis change given by $T = \frac{1}{\sqrt{2}}\begin{psmallmatrix} 1 & i \\ 1 & -i \end{psmallmatrix}$ such that time translation shifts the first component to the right and the second component to the left.
These global symmetries are shown in \cref{fig:global_sym}, and should be interpreted in the sense that if we want to compute correlation functions with these symmetry operators inside the correlator then we can insert the corresponding circuits.
The approximation theorem and the invariance of the free fermion under these transformations show that these symmetries are indeed accurately implemented.

\begin{figure}
\begin{center}
(a)\;
\raisebox{-1.5cm}{\begin{overpic}[height=1.5cm]{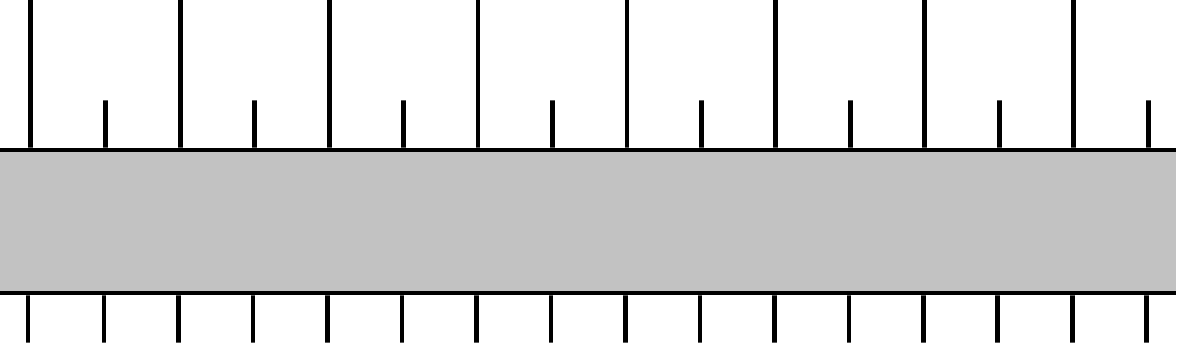}
\put(39,9){$U_{\MERA}$}
\put(6,23){$\ket0$} \put(19,23){$\ket0$} \put(31,23){$\ket0$}
\put(44,23){$\ket0$} \put(57,23){$\ket0$} \put(69.5,23){$\ket0$}
\put(82,23){$\ket0$} \put(94.5,23){$\ket0$}
\end{overpic}}
\qquad
(b)\;
\raisebox{-1.2cm}{\begin{overpic}[height=1cm]{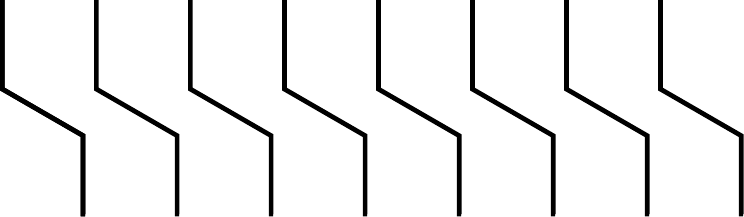}\end{overpic}}
\qquad
(c)\;
\raisebox{-1.5cm}{\begin{overpic}[height=1.8cm]{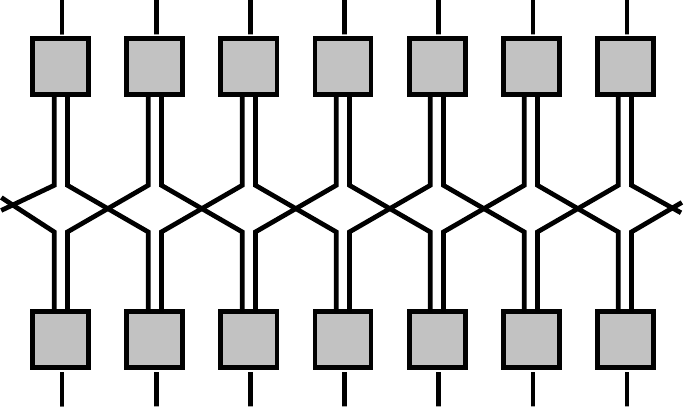}
\put(6.5,47){\scalebox{0.5}{$T$}}
\put(6.5,7){\scalebox{0.5}{$T^{\dagger}$}}
\end{overpic}}
\caption{(a) Rescaling by a factor two is implemented by conjugation with a single MERA layer.
(b) Space translation.
(c) Time translation.}
\label{fig:global_sym}
\end{center}
\end{figure}

\section*{Acknowledgments}
\phantomsection
\addcontentsline{toc}{section}{Acknowledgments}
We acknowledge interesting discussions with Sukhbinder Singh.
\change{We would like to thank the anonymous referees for their thoughtful feedback.}
MW acknowledges support by the NWO through Veni grant no.~680-47-459.
VBS expresses his thanks to the University of Amsterdam and the CWI for their hospitality. He acknowledges funding by the ERC consolidator grant QUTE and thanks Frank Verstraete for discussions and his support. BGS is supported by the Simons Foundation as part of the It From Qubit Collaboration.
\begin{appendix}

\section{Proofs of wavelet lemmas}\label{sec:appendix}

In this section we will prove some technical lemmas involving wavelets, amongst which \cref{lem:UV}, \cref{lem:sampling error}, \cref{lem:IR}.
We first state a simple Lipschitz bound for the Fourier transforms of wavelet and scaling filters.

\begin{lem}\label{lem:filter fourier bounds}
Let~$g_s$ be scaling filter supported in~$\{0,\dots,M-1\}$.
Then the corresponding wavelet filter~$g_w$, defined in \cref{eq:wavelet from scaling filter}, is supported in~$\{2-M,\dots,1\}$ and we have that
\begin{align*}
  \lvert \hat g_s(\theta) - \sqrt2 \rvert &\leq \frac{M^2}{\sqrt2} \, \lvert\theta\rvert, \\
  \lvert \hat g_w(\theta) \rvert &\leq \frac{M(M+1)}{\sqrt2} \, \lvert\theta\rvert.
\end{align*}
for all~$\theta\in[-\pi,\pi]$.
\end{lem}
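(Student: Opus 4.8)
The plan is to reduce both inequalities to a single elementary increment estimate for finitely supported sequences, exploiting that each quantity to be bounded is the deviation of a trigonometric polynomial from its value at a distinguished point ($\theta=0$).

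First I would record the support statement together with a uniform bound on the filter taps. The formula $g_w[n]=(-1)^{1-n}\bar g_s[1-n]$ in \eqref{eq:wavelet from scaling filter} shows that $g_w[n]$ can be nonzero only when $1-n\in\{0,\dots,M-1\}$, i.e.\ $n\in\{2-M,\dots,1\}$, and that $\lvert g_w[n]\rvert=\lvert g_s[1-n]\rvert$, so $\lVert g_w\rVert_\infty=\lVert g_s\rVert_\infty$. The normalization \eqref{eq:cond scaling filter} gives $\lvert\hat g_s(\theta)\rvert^2\le\lvert\hat g_s(\theta)\rvert^2+\lvert\hat g_s(\theta+\pi)\rvert^2=2$ for every $\theta$, hence $\lVert\hat g_s\rVert_\infty\le\sqrt2$; Fourier inversion $g_s[n]=\tfrac1{2\pi}\int_{-\pi}^{\pi}\hat g_s(\theta)e^{in\theta}\,\d\theta$ then yields the pointwise bound $\lvert g_s[n]\rvert\le\lVert\hat g_s\rVert_\infty\le\sqrt2$, and likewise $\lvert g_w[n]\rvert\le\sqrt2$.

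Next I would fix the two anchor values. By assumption $\hat g_s(0)=\sqrt2$, and evaluating \eqref{eq:cond scaling filter} at $\theta=0$ forces $\lvert\hat g_s(\pi)\rvert^2=2-\lvert\hat g_s(0)\rvert^2=0$, so $\hat g_s(\pi)=0$; by \eqref{eq:wavelet from scaling filter} this gives $\hat g_w(0)=\overline{\hat g_s(\pi)}=0$. Thus $\hat g_s(\theta)-\sqrt2=\hat g_s(\theta)-\hat g_s(0)$ and $\hat g_w(\theta)=\hat g_w(\theta)-\hat g_w(0)$ are both increments away from $\theta=0$. The common estimate is then, for any finitely supported $g\in\ell^2(\ZZ)$,
\begin{align*}
  \lvert\hat g(\theta)-\hat g(0)\rvert
  =\Bigl\lvert\sum_n g[n]\,(e^{-in\theta}-1)\Bigr\rvert
  \le\sum_n\lvert g[n]\rvert\,\lvert e^{-in\theta}-1\rvert
  \le\lvert\theta\rvert\sum_n\lvert n\rvert\,\lvert g[n]\rvert,
\end{align*}
using $\lvert e^{-in\theta}-1\rvert=2\lvert\sin(n\theta/2)\rvert\le\lvert n\theta\rvert$. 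Applying this to $g_s$ and bounding $\sum_{n=0}^{M-1}\lvert n\rvert\,\lvert g_s[n]\rvert\le\sqrt2\sum_{n=0}^{M-1}n=\sqrt2\,\tfrac{M(M-1)}{2}\le\tfrac{M^2}{\sqrt2}$ yields the first inequality, and applying it to $g_w$ and bounding $\sum_{n=2-M}^{1}\lvert n\rvert\,\lvert g_w[n]\rvert\le\sqrt2\sum_{n=2-M}^{1}\lvert n\rvert=\sqrt2\bigl(\tfrac{(M-2)(M-1)}{2}+1\bigr)\le\tfrac{M(M+1)}{\sqrt2}$ yields the second, where the last step is the elementary inequality $(M-2)(M-1)+2\le M(M+1)$, valid for all $M\ge1$.

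There is no genuine obstacle here: the lemma is a routine Lipschitz estimate. The only points deserving care are selecting the correct anchors—$\hat g_s(0)=\sqrt2$ and $\hat g_w(0)=0$, the latter resting on $\hat g_s(\pi)=0$—and reproducing the advertised constants exactly. The latter comes out cleanly precisely because one uses the uniform tap bound $\lVert g\rVert_\infty\le\sqrt2$ from \eqref{eq:cond scaling filter} together with the arithmetic-series values of $\sum_n\lvert n\rvert$ over the two supports; a tighter Cauchy--Schwarz estimate would give a smaller power of $M$ but not the closed form $\tfrac{M^2}{\sqrt2}$, $\tfrac{M(M+1)}{\sqrt2}$ stated in the lemma.
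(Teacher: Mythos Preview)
Your proof is correct and follows essentially the same approach as the paper: both establish the tap bound $\lVert g\rVert_\infty\le\sqrt2$ from \eqref{eq:cond scaling filter}, use the anchors $\hat g_s(0)=\sqrt2$ and $\hat g_w(0)=0$, and then control the increment via $\sum_n |n|\,|g[n]|$. The only cosmetic difference is that the paper phrases the last step as bounding $\lVert\hat g'\rVert_\infty$ followed by the mean value inequality, whereas you use the equivalent direct estimate $|e^{-in\theta}-1|\le|n\theta|$ and compute the arithmetic sum over the support of $g_w$ explicitly before bounding.
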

\begin{proof}
By \cref{eq:cond scaling filter},~$\lVert \hat g_s \rVert_\infty=\sqrt2$ and~$\hat g_s(0) = \sqrt2$.
Hence,
\begin{align*}
  \lVert \hat g_s' \rVert_\infty
\leq \left( \sum_{n=0}^{M-1} n \right) \lVert g_s\rVert_\infty
= \frac{M(M-1)}2 \lVert g_s\rVert_\infty
\leq \frac{M(M-1)}2 \lVert \hat g_s\rVert_\infty
\leq \frac{M^2}{\sqrt2},
\end{align*}
where we used that~$\lVert f\rVert_\infty \leq \frac1{2\pi} \lVert \hat f\rVert_1 \leq \lVert \hat f\rVert_\infty$ for any trigonometric polynomial.
Therefore,
\begin{align*}
  \lvert \hat g_s(\theta) - \sqrt2 \rvert
\leq \lvert \hat g_s(\theta) - \hat g_s(0) \rvert
\leq \lVert \hat g_s' \rVert_\infty \, \lvert\theta\rvert
\leq \frac{M^2}{\sqrt2} \, \lvert\theta\rvert.
\end{align*}
Now consider the corresponding wavelet filter~$g_w$ which by \cref{eq:wavelet from scaling filter,eq:cond scaling filter} satisfies~$\lVert \hat g_w \rVert_\infty=\sqrt2$ and~$\hat g_w(0) = 0$ and is supported in $\{2-M,\dots,1\}$.
Then, similarly as above,
\begin{align*}
  \lVert \hat g_w' \rVert_\infty
\leq \left( \sum_{n=2-M}^1 \lvert n\rvert \right) \lVert g_w\rVert_\infty
\leq \frac{M(M+1)}2 \lVert \hat g_w\rVert_\infty
\leq \frac{M(M+1)}{\sqrt2},
\end{align*}
so we obtain
\begin{align*}
  \lvert \hat g_w(\theta) \rvert
= \lvert \hat g_w(\theta) - \hat g_w(0) \rvert
\leq \lVert \hat g_w' \rVert_\infty \, \lvert\theta\rvert
\leq \frac{M(M+1)}{\sqrt2} \, \lvert\theta\rvert.
\end{align*}
\end{proof}

In practice, the bounds in \cref{lem:filter fourier bounds} can be pessimistic.
\change{In principle, if the number of vanishing moments of the wavelets increase, one expects better dependence of the bounds on the size of the support, although we are not aware of better bounds than those in \cref{lem:filter fourier bounds} for approximate Hilbert pair wavelets.}

We now proceed to prove the lemmas in \cref{subsec:wavelet approximations}.
Our main tool is the following technical lemma.

\begin{lem}\label{lem:technical}
Let~$\chi \in H^{-K}(\RR)$ such that~$\hat\chi\in L^\infty(\RR)$ and there exists a constant~$C>0$ such that~$\lvert \hat\chi(\omega) \rvert \leq C \lvert\omega\rvert^K$ for all~$\lvert\omega\rvert\leq\pi$.
Define~$C_\chi := (C^2 + \lVert\hat\chi\rVert^2_\infty / 3)^{1/2}$.
Then, for all~$f\in H^K(\RR)$ and~$j\in\ZZ$ we have that
\begin{align*}
  \sum_{k\in\ZZ} \lvert\braket{\chi_{j,k}, f}\rvert^2
\leq 2^{-2Kj} C_\chi^2 \lVert f^{(K)} \rVert^2,
\end{align*}
where~$\chi_{j,k}(x) := 2^{\frac j2} \chi(2^j x - k)$.
Similarly, for all~$f\in H^K(\SS)$ and~$j\geq0$ we have that
\begin{align*}
  \sum_{k=1}^{2^j} \lvert\braket{\chi_{j,k}^{\per}, f}\rvert^2
\leq 2^{-2Kj} C_\chi^2 \lVert f^{(K)} \rVert^2,
\end{align*}
where~$\chi^{\per}_{j,k}(x) = \sum_{m \in \ZZ} \chi_{j,k}(x + m)$.
\end{lem}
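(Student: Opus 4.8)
The plan is to pass to the Fourier domain, recognize $\sum_k \lvert\langle\chi_{j,k},f\rangle\rvert^2$ as a Parseval identity, and thereby reduce the whole statement to a single uniform estimate on a periodization of $\lvert\hat\chi\rvert^2/\lvert\,\cdot\,\rvert^{2K}$. For the line, I first record from $\chi_{j,k}(x)=2^{j/2}\chi(2^jx-k)$ that $\widehat{\chi_{j,k}}(\omega)=2^{-j/2}e^{-i2^{-j}k\omega}\hat\chi(2^{-j}\omega)$. Plancherel (in the paper's convention) together with the substitution $\xi=2^{-j}\omega$ then gives
\[
\langle\chi_{j,k},f\rangle=\frac{2^{j/2}}{2\pi}\int_{\RR}e^{ik\xi}\,\overline{\hat\chi(\xi)}\,\hat f(2^j\xi)\,d\xi,
\]
which is $2^{j/2}$ times the $(-k)$-th Fourier coefficient of the $2\pi$-periodization of $\xi\mapsto\overline{\hat\chi(\xi)}\hat f(2^j\xi)$. (The pairing is well defined since $f\in H^K$ and $\chi\in H^{-K}$.) Parseval for Fourier series on $[-\pi,\pi]$ then converts the sum over $k$ into
\[
\sum_{k\in\ZZ}\lvert\langle\chi_{j,k},f\rangle\rvert^2=\frac{2^{j}}{2\pi}\int_{-\pi}^{\pi}\Bigl\lvert\sum_{m\in\ZZ}\overline{\hat\chi(\xi+2\pi m)}\,\hat f\bigl(2^j(\xi+2\pi m)\bigr)\Bigr\rvert^2 d\xi.
\]

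The heart of the argument, and the step I expect to be the main (if elementary) obstacle, is a uniform bound on the periodization
\[
A(\theta):=\sum_{m\in\ZZ}\frac{\lvert\hat\chi(\theta+2\pi m)\rvert^2}{\lvert\theta+2\pi m\rvert^{2K}}\leq C_\chi^2\qquad(\theta\in(-\pi,\pi]).
\]
I would prove this by splitting off the $m=0$ term: since $\lvert\theta\rvert\leq\pi$, the hypothesis $\lvert\hat\chi(\omega)\rvert\leq C\lvert\omega\rvert^K$ gives $\lvert\hat\chi(\theta)\rvert^2/\lvert\theta\rvert^{2K}\leq C^2$ (and forces $\hat\chi(0)=0$, so $\theta=0$ is harmless). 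For $m\neq0$ I bound the numerator by $\lVert\hat\chi\rVert_\infty^2$ and use $\lvert\theta+2\pi m\rvert\geq 2\pi\lvert m\rvert-\pi=\pi(2\lvert m\rvert-1)\geq\pi\lvert m\rvert$, so that
\[
\sum_{m\neq0}\frac{\lvert\hat\chi(\theta+2\pi m)\rvert^2}{\lvert\theta+2\pi m\rvert^{2K}}\leq\frac{\lVert\hat\chi\rVert_\infty^2}{\pi^{2K}}\sum_{m\neq0}\frac{1}{\lvert m\rvert^{2K}}=\frac{2\zeta(2K)}{\pi^{2K}}\,\lVert\hat\chi\rVert_\infty^2\leq\frac{\lVert\hat\chi\rVert_\infty^2}{3},
\]
where the final inequality holds because $2\zeta(2K)/\pi^{2K}$ is maximal at $K=1$, where it equals $2\zeta(2)/\pi^2=\tfrac13$. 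Adding the two contributions yields exactly $A(\theta)\leq C^2+\lVert\hat\chi\rVert_\infty^2/3=C_\chi^2$, which is precisely where the definition of $C_\chi$ originates.

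With this in hand I finish the line case by a single Cauchy–Schwarz in the $m$-sum, organized so that the weight $\lvert\xi+2\pi m\rvert^{2K}$ is distributed between the two factors:
\[
\Bigl\lvert\sum_m\overline{\hat\chi(\xi+2\pi m)}\hat f(2^j(\xi+2\pi m))\Bigr\rvert^2\leq A(\xi)\sum_m\lvert\xi+2\pi m\rvert^{2K}\lvert\hat f(2^j(\xi+2\pi m))\rvert^2.
\]
Bounding $A(\xi)\leq C_\chi^2$, unfolding $\int_{-\pi}^\pi\sum_m$ back to $\int_\RR$, and substituting $\omega=2^j\xi$ (which produces the factor $2^{-2Kj}$) leaves $\tfrac{C_\chi^2 2^{-2Kj}}{2\pi}\int_\RR\lvert\omega\rvert^{2K}\lvert\hat f(\omega)\rvert^2 d\omega$, which equals $C_\chi^2 2^{-2Kj}\lVert f^{(K)}\rVert^2$ because $\widehat{f^{(K)}}(\omega)=(i\omega)^K\hat f(\omega)$. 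This is the claimed bound.

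For the circle the only change is that the continuous Fourier/Parseval machinery is replaced by its discrete counterpart. Using the periodization identity $\widehat{\chi^{\per}_{j,k}}[n]=\widehat{\chi_{j,k}}(2\pi n)$, writing $n=r+2^js$ with $r\in\{0,\dots,2^j-1\}$, and applying Parseval for the size-$2^j$ discrete Fourier transform (the sum over $k=1,\dots,2^j$ runs over a full period) collapses $\sum_k\lvert\langle\chi^{\per}_{j,k},f\rangle\rvert^2$ to $\sum_r\bigl\lvert\sum_s\overline{\hat\chi(2\pi2^{-j}(r+2^js))}\,\hat f[r+2^js]\bigr\rvert^2$. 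The inner sum is once more a periodization over the coset $\{2\pi2^{-j}r+2\pi s\}_s$; choosing for each coset its representative in $(-\pi,\pi]$ reduces the required bound to exactly the estimate $A(\theta)\leq C_\chi^2$ established above. The same weighted Cauchy–Schwarz, now matching against $\sum_n\lvert2\pi n\rvert^{2K}\lvert\hat f[n]\rvert^2=\lVert f^{(K)}\rVert^2$ and again producing the factor $2^{-2Kj}$, completes the periodic case.
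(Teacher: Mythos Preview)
Your proposal is correct and follows essentially the same route as the paper: pass to the Fourier side, recognize the $k$-sum as a Parseval identity for a periodization, apply Cauchy--Schwarz in the $m$-sum with weight $\lvert\xi+2\pi m\rvert^{2K}$, and bound the resulting periodized quantity $A(\theta)$ by splitting off $m=0$ (using the hypothesis $\lvert\hat\chi(\omega)\rvert\leq C\lvert\omega\rvert^K$) and estimating the tail by $\lVert\hat\chi\rVert_\infty^2\cdot 2\zeta(2K)/\pi^{2K}\leq\lVert\hat\chi\rVert_\infty^2/3$. The only difference is organizational: you isolate the bound $A(\theta)\leq C_\chi^2$ as a standalone sublemma, whereas the paper establishes it inline; the arithmetic and the periodic case are handled identically.
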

\begin{proof}
For $f\in H^K(\RR)$, we start with
\begin{align}
\nonumber
  \sum_{k \in \ZZ} \left\lvert\braket{\chi_{j,k}, f} \right\rvert^2
  &= \sum_{k \in \ZZ} \left\lvert \frac1{2\pi}\braket{ \widehat{\chi_{j,k}}, \widehat{f}} \right\rvert^2 \\
\nonumber
  &= \sum_{k \in \ZZ} \left\lvert \frac1{2\pi}\int_{-\infty}^{\infty} 2^{-j/2}e^{i\omega2^{-j}k} \overline{\hat\chi(2^{-j}\omega)} \hat f(\omega) d\omega \right\rvert^2 \\
\label{eq:series}
  &= \sum_{k\in\ZZ} \left\lvert \frac1{2\pi} \int_{-\infty}^\infty 2^{j/2}  \overline{\hat\chi(\omega)}  \hat f(2^j\omega) e^{i\omega k} d\omega \right\rvert^2.
\end{align}
We can interpret this as the squared norm of the Fourier coefficients of the~$2\pi$-periodic function defined by
\begin{align*}
  F(\theta) := \sum_{m\in\ZZ} 2^{j/2}  \overline{\hat\chi(\theta + 2\pi m)} \hat f(2^j(\theta + 2\pi m)),
\end{align*}
provided the latter is square integrable.
To see this and obtain a quantitative upper bound, we note that, for every $\theta\in[-\pi,\pi]$,
\begin{align}
\nonumber
\lvert F(\theta)\rvert^2
&\leq 2^j \sum_{m\in\ZZ} \left\lvert \frac{\hat\chi(\theta + 2\pi m)}{(\theta + 2\pi m)^K} \right\rvert^2 \sum_{m \in\ZZ} \left\lvert (\theta + 2\pi m)^K \hat f(2^j(\theta + 2\pi m)) \right\rvert^2 \\
&= 2^{-(2K-1)j} \sum_{m\in\ZZ} \left\lvert \frac{\hat\chi(\theta + 2\pi m) }{(\theta + 2\pi m)^K} \right\rvert^2 \sum_{m\in\ZZ} \left\lvert (2^j(\theta + 2\pi m))^K \hat f(2^j(\theta + 2\pi m)) \right\rvert^2
\label{eq:F theta abs squared}
\end{align}
by the Cauchy-Schwarz inequality.
To bound the left-hand side series, we split off the term for~$m=0$ and use the assumptions on~$\hat\chi$ to bound, for $|\theta|\leq\pi$,
\begin{align}
\nonumber
\sum_{m\in\ZZ} \left\lvert \frac{\hat\chi(\theta + 2\pi m)}{(\theta + 2\pi m)^K} \right\rvert^2
&= \left\lvert \frac{\hat\chi(\theta)}{\theta^K} \right\rvert^2 + \sum_{m\neq0} \left\lvert \frac{\hat\chi(\theta + 2\pi m)}{(\theta + 2\pi m)^K} \right\rvert^2
\leq C^2 + \sum_{m\neq0} \frac{\lvert \hat\chi(\theta + 2\pi m)\rvert^2}{\lvert\theta + 2\pi m\rvert^{2K}} \\
\label{eq:def C_chi}
&\leq C^2 + \lVert \hat\chi\rVert_\infty^2 \sum_{m=1}^\infty \frac{2}{(\pi m)^{2K}}
\leq C^2 + \frac{\lVert\hat\chi\rVert^2_\infty}3
= C_\chi^2
\end{align}
If we plug this into \cref{eq:F theta abs squared} then we obtain
\begin{align*}
  \lvert F(\theta)\rvert^2
\leq 2^{-(2K-1)j} C_\chi^2 \sum_{m\in\ZZ} \left\lvert (2^j(\theta + 2\pi m))^K \hat f(2^j(\theta + 2\pi m)) \right\rvert^2
\end{align*}
and hence
\begin{align*}
  \frac1{2\pi} \int_{-\pi}^\pi |F(\theta)|^2 d\theta
&\leq 2^{-(2K - 1)j} \frac{C_\chi^2}{2\pi} \int_{-\infty}^\infty \left\lvert (2^j \omega)^K \hat f(2^j\omega) \right\rvert^2 d\omega \\
&= 2^{-2Kj} \frac{C_\chi^2}{2\pi} \int_{-\infty}^\infty \left\lvert \omega^K \hat f(\omega) \right\rvert^2 d\omega
= 2^{-2Kj} C_\chi^2 \lVert f^{(K)} \rVert^2,
\end{align*}
which is finite since $f\in H^K(\RR)$.
This shows that $F \in L^2(\RR/2\pi\ZZ)$.
By Parseval's theorem we can thus bound~\cref{eq:series} by
\begin{align*}
  \sum_{k \in \ZZ} \left\lvert\braket{\chi_{j,k}, f} \right\rvert^2
\leq 2^{-2Kj} C_\chi^2 \lVert f^{(K)} \rVert^2
\leq 2^{-2Kj} C_\chi^2 \lVert f^{(K)} \rVert^2
\end{align*}
as desired.

The proof for $f\in H^K(\SS)$ proceeds similarly.
First note that $\widehat{g^{\per}}(m) = \hat g(2\pi m)$ if we periodize a function~$g\in L^2(\RR)$ by~$g^{\per}(x) := \sum_{n\in\ZZ} g(x+n)$, so
\begin{align}\label{eq:sum}
  \sum_{k=1}^{2^j} \left\lvert\braket{\chi^{\per}_{j,k}, f} \right\rvert^2
  = \sum_{k=1}^{2^j} \left\lvert\braket{ \widehat{\chi^{\per}_{j,k}}, \widehat{f}} \right\rvert^2
  = \sum_{k=1}^{2^j} \left\lvert \sum_{m\in\ZZ} 2^{-j/2}e^{i2\pi m2^{-j}k} \overline{\hat\chi(2^{-j}2\pi m)} \hat f(m) \right\rvert^2
\end{align}
which we recognize as squared norm of the inverse discrete Fourier transform of a vector~$v$ with~$2^j$ components
\begin{align*}
  v_l := 2^{j/2} \sum_{m\in\ZZ} \overline{\hat\chi(2\pi m + 2\pi 2^{-j} l)} \hat f(2^j m + l),
\end{align*}
where it is useful to take~$l\in\{-2^{j-1}+1,\dots,2^{j-1}\}$.
To see that the components of this vector are well-defined and obtain a quantitative bound, we estimate
\begin{align*}
  \lvert v_l\rvert^2
&= 2^j \left\lvert \sum_{m\in\ZZ} \overline{\hat\chi(2\pi m + 2\pi 2^{-j} l)} \hat f(2^j m + l) \right\rvert^2 \\
&\leq 2^j \sum_{m\in\ZZ} \left\lvert \frac {\hat\chi(2\pi m + 2\pi 2^{-j} l)}{(2\pi m + 2\pi 2^{-j} l)^K} \right\rvert^2 \sum_{m\in\ZZ} \left\lvert (2\pi m + 2\pi 2^{-j} l)^K \hat f(2^j m + l) \right\rvert^2 \\
&= 2^{-(2K-1)j} \sum_{m\in\ZZ} \left\lvert \frac {\hat\chi(2\pi m + 2\pi 2^{-j} l)}{(2\pi m + 2\pi 2^{-j} l)^K} \right\rvert^2 \sum_{m\in\ZZ} \left\lvert (2\pi(2^j m + l))^K \hat f(2^j m + l) \right\rvert^2.
\end{align*}
Since~$\lvert2\pi 2^{-j} l\rvert\leq\pi$, we can upper-bound the left-hand side series precisely as in \cref{eq:def C_chi},
\begin{align*}
  \lvert v_l\rvert^2
\leq 2^{-(2K-1)j} C_\chi^2 \sum_{m\in\ZZ} \left\lvert (2\pi(2^j m + l))^K \hat f(2^j m + l) \right\rvert^2,
\end{align*}
and obtain
\begin{align*}
  \lVert v\rVert_2^2
\leq 2^{-(2K-1)j} C_\chi^2 \sum_{n\in\ZZ} \left\lvert (2\pi n)^K \hat f(n) \right\rvert^2
= 2^{-(2K-1)j} C_\chi^2 \lVert f^{(K)} \rVert^2,
\end{align*}
which is finite since~$f\in H^K(\SS)$.
As before we conclude by using the Plancherel formula in \cref{eq:sum} and plugging in the upper bound.
\begin{align*}
  \sum_{k=1}^{2^j} \left\lvert\braket{\chi^{\per}_{j,k}, f} \right\rvert^2
= 2^{-j} \sum_{k=1}^{2^j} \lvert v_k\rvert^2
\leq 2^{-2Kj} C_\chi^2 \lVert f^{(K)} \rVert^2,
\end{align*}
which concludes the proof.
\end{proof}

We next use \cref{lem:technical} to prove \cref{lem:UV} and \cref{lem:sampling error}, which are wavelet approximation results for sufficiently smooth functions.

\begin{proof}[Proof of \cref{lem:UV}]
For~$f\in H^K(\RR)$ and~$j\in\ZZ$, we have
\begin{align*}
  \lVert P_j f - f \rVert^2 = \sum_{l > j} \sum_{k \in \ZZ} \lvert \braket{\psi_{l,k},f}\rvert^2.
\end{align*}
because the wavelets form an orthonormal basis.
We would like to bound the inner series by using \cref{lem:technical}.
For this, note that since~$\hat{g}_s$ is a trigonometric polynomial with a zero of order~$K$ at~$\theta = \pi$, there exists a constant~$C$ such that
\begin{align}\label{eq:continuity bound g_w}
  \frac1{\sqrt2}\lvert \hat{g}_w(\theta) \rvert = \frac1{\sqrt2}\lvert \hat{g}_s(\theta + \pi) \rvert \leq C \lvert \theta \rvert^K.
\end{align}
Using \cref{eq:relation_filter_wavelet_fourier} and~$\lVert\hat\phi\rVert_\infty=1$, it follows that
\begin{align*}
  \lvert \hat\psi(\omega) \rvert = \lvert \frac{1}{\sqrt{2}}\hat{g}_w(\frac{\omega}{2}) \hat{\phi}(\frac{\omega}{2}) \rvert \leq \frac{C}{2^K} \lvert \omega \rvert^K.
\end{align*}
Since moreover $\lVert\hat\psi\rVert_\infty=1$, we can invoke \cref{lem:technical} with~$\chi=\psi$ and obtain that
\begin{align*}
  \lVert P_j f - f \rVert^2
\leq \sum_{l > j} 2^{-2Kl} C_{\UV}^2 \lVert f^{(K)} \rVert^2
\leq 2^{-2Kj} C_{\UV}^2 \lVert f^{(K)} \rVert^2,
\end{align*}
where~$C_{\UV}^2 = C^2/4^K + 1/3 \leq C^2 + 1/3$.

In the same way we find that, for any~$f\in H^K(\SS)$ and~$j\geq0$,
\begin{align*}
  \lVert P^{\per}_j f - f \rVert^2
= \sum_{l > j} \sum_{k=1}^{2^l} \lvert \braket{\psi^{\per}_{l,k},f}\rvert^2
\leq 2^{-2Kj} C_{\UV}^2 \lVert f^{(K)} \rVert^2,
\end{align*}
again by \cref{lem:technical}.

For the last assertion, we use \cref{lem:filter fourier bounds} to see that, for~$K=1$, \cref{eq:continuity bound g_w} always holds with~$C=M(M+1)/2$, hence we have~$C_{\UV} \leq 2M^2$.
\end{proof}

\begin{proof}[Proof of \cref{lem:sampling error}]
The trigonometric polynomial~$\hat g_s$ satisfies~$\hat g_s(0)=\sqrt2$, so there is a constant~$C>0$ such that
\begin{align}\label{eq:continuity bound g_s}
  \lvert\frac1{\sqrt2} \hat g_s(\theta) - 1\rvert \leq C \lvert \theta \rvert
\end{align}
for $\theta\in[-\pi,\pi]$. 
Using the infinite product formula~\eqref{eq:inf_product}, it follows that, for all~$|\omega|\leq\pi$,
\begin{align}\label{eq:phi hat linear bound}
  \lvert \hat\phi(\omega) - 1 \rvert
&= \lvert \prod_{k=1}^\infty \frac1{\sqrt2} \hat g_s(2^{-k}\omega) - 1 \rvert
\leq \sum_{k=1}^\infty \lvert \frac1{\sqrt2} \hat g_s(2^{-k}\omega) - 1 \rvert
\leq \sum_{k=1}^\infty \frac{C}{\sqrt2} 2^{-k} \lvert \omega \rvert
= \frac{C}{\sqrt2} \lvert\omega\rvert
\end{align}
using a telescoping sum and the fact that $\lvert\hat g_s\rvert\leq\sqrt2$ (in fact, this holds for all $\omega\in\RR$, but we will not need this).
Now recall from Sobolev embedding theory that $\hat f \in L^1(\RR)$ for any~$f\in H^1(\RR)$.
Thus, the continuous representative of~$f$ can be computed by the inverse Fourier transform, i.e.,
\begin{align*}
  f(x) = \frac1{2\pi} \int_{-\infty}^\infty \hat f(\omega) e^{i\omega x} d\omega
\end{align*}
for all $x\in\RR$.
As a consequence,
\begin{align*}
  \lVert \alpha_j f - f_j \rVert^2
= \sum_{k\in\ZZ} \left\lvert\braket{\phi_{j,k}, f} - 2^{-j/2} f(2^{-j} k)\right\rvert^2
= \sum_{k\in\ZZ} \left\lvert\braket{\chi_{j,k}, f}\right\rvert^2
\end{align*}
where~$\chi := \phi - \delta_0$.
Now, $\hat\chi = \hat\phi - \one$, hence $\lVert\hat\chi\rVert_\infty \leq 2$.
Together with the bound in \cref{eq:phi hat linear bound} we obtain from \cref{lem:technical} that
\begin{align*}
  \lVert \alpha_j f - f_j \rVert^2 \leq 2^{-2j} C_\phi \lVert f' \rVert^2,
\end{align*}
where $C_\phi := C^2 + \frac43$.
The proof for $H^1(\SS)$ proceeds completely analogously.
Finally, \cref{lem:filter fourier bounds} shows that if the scaling filter is supported in $\{0,\dots,M-1\}$ then \cref{eq:continuity bound g_s,eq:phi hat linear bound} always hold with~$C=M^2/2$.
Thus, $C_\phi \leq 2 M^2$.
\end{proof}

Finally, we prove \cref{lem:IR}, which is an approximation result for compactly supported functions.

\begin{proof}[Proof of \cref{lem:IR}]
Let us denote by~$S$ the support of~$f$.
Since the scaling functions for fixed~$j$ form an orthonormal basis of~$V_j$, and using Cauchy-Schwarz, we find that
\begin{align*}
  \lVert P_j f\rVert^2
= \sum_{k\in\ZZ} \lvert\braket{\phi_{j,k}, f}\rvert^2
\leq 
\lVert f\rVert^2 \sum_{k\in\ZZ} \int_S \lvert\phi_{j,k}(x) \rvert^2 \, dx
= \lVert f\rVert^2 \int_{2^j S} \sum_{k\in\ZZ} \lvert\phi(y - k)\rvert^2 \, dy.
\end{align*}
This allows us to conclude that
\begin{align*}
  \lVert P_j f\rVert^2 \leq \lVert f\rVert^2 2^j C_{\IR}^2 D(f),
\end{align*}
which confirms the claim.
If~$\phi$ is bounded and supported on an interval of width~$M$, we can bound $\sum_{k\in\ZZ} \lvert\phi(y - k)\rvert^2 \leq M \lVert\phi\rVert_\infty^2$.
\end{proof}

\end{appendix}
\phantomsection
\addcontentsline{toc}{section}{References}
\bibliographystyle{hunsrtnat}
\bibliography{library}

\begin{thebibliography}{54}
\expandafter\ifx\csname natexlab\endcsname\relax\def\natexlab#1{#1}\fi
\expandafter\ifx\csname url\endcsname\relax
  \def\url#1{{\tt #1}}\fi

\bibitem[Or{\'u}s(2014)]{orus2014practical}
Rom{\'a}n Or{\'u}s.
\newblock A practical introduction to tensor networks: Matrix product states
  and projected entangled pair states.
\newblock {\em Annals of Physics}, 349:\penalty0 117--158, 2014.

\bibitem[Haegeman et~al.(2013)Haegeman, Osborne, Verschelde, and
  Verstraete]{haegeman2013entanglement}
Jutho Haegeman, Tobias~J Osborne, Henri Verschelde, and Frank Verstraete.
\newblock Entanglement renormalization for quantum fields in real space.
\newblock {\em Physical Review Letters}, 110:\penalty0 100402, 2013,
  \href{https://arxiv.org/abs/1102.5524}{arXiv:1102.5524}.

\bibitem[Verstraete and Cirac(2010)]{verstraete2010continuous}
Frank Verstraete and J~Ignacio Cirac.
\newblock Continuous matrix product states for quantum fields.
\newblock {\em Physical Review Letters}, 104:\penalty0 190405, 2010,
  \href{https://arxiv.org/abs/1002.1824}{arXiv:1002.1824}.

\bibitem[Brockt et~al.(2012)Brockt, Haegeman, Jennings, Osborne, and
  Verstraete]{brockt2012continuum}
Christoph Brockt, Jutho Haegeman, David Jennings, Tobias~J Osborne, and Frank
  Verstraete.
\newblock The continuum limit of a tensor network: a path integral
  representation.
\newblock 2012, \href{https://arxiv.org/abs/1210.5401}{arXiv:1210.5401}.

\bibitem[Cotler et~al.(2019)Cotler, Mozaffar, Mollabashi, and
  Naseh]{cotler2019entanglement}
Jordan~S Cotler, M~Reza~Mohammadi Mozaffar, Ali Mollabashi, and Ali Naseh.
\newblock Entanglement renormalization for weakly interacting fields.
\newblock {\em Physical Review D}, 99\penalty0 (8):\penalty0 085005, 2019.

\bibitem[Haegeman et~al.(2010)Haegeman, Cirac, Osborne, Verschelde, and
  Verstraete]{haegeman2010applying}
Jutho Haegeman, J~Ignacio Cirac, Tobias~J Osborne, Henri Verschelde, and Frank
  Verstraete.
\newblock Applying the variational principle to (1+ 1)-dimensional quantum
  field theories.
\newblock {\em Physical review letters}, 105\penalty0 (25):\penalty0 251601,
  2010.

\bibitem[Ganahl et~al.(2017)Ganahl, Rinc{\'o}n, and
  Vidal]{ganahl2017continuous}
Martin Ganahl, Juli{\'a}n Rinc{\'o}n, and Guifre Vidal.
\newblock Continuous matrix product states for quantum fields: An energy
  minimization algorithm.
\newblock {\em Physical review letters}, 118\penalty0 (22):\penalty0 220402,
  2017.

\bibitem[Vidal(2007)]{vidal2007entanglement}
Guifr{\'e} Vidal.
\newblock Entanglement renormalization.
\newblock {\em Physical Review Letter}, 99:\penalty0 220405, 2007,
  \href{https://arxiv.org/abs/cond-mat/0512165}{arXiv:cond-mat/0512165}.

\bibitem[Vidal(2008)]{vidal2008class}
Guifr{\'e} Vidal.
\newblock Class of quantum many-body states that can be efficiently simulated.
\newblock {\em Physical review letters}, 101\penalty0 (11):\penalty0 110501,
  2008, \href{https://arxiv.org/abs/quant-ph/0610099}{arXiv:quant-ph/0610099}.

\bibitem[Evenbly and Vidal(2013)]{evenbly2013quantum}
Glen Evenbly and Guifr{\'e} Vidal.
\newblock Quantum criticality with the multi-scale entanglement renormalization
  ansatz.
\newblock In {\em Strongly Correlated Systems}, pages 99--130. Springer, 2013,
  \href{https://arxiv.org/abs/1109.5334}{arXiv:1109.5334}.

\bibitem[Kim and Swingle(2017)]{kim2017robust}
Isaac~H Kim and Brian Swingle.
\newblock Robust entanglement renormalization on a noisy quantum computer.
\newblock 2017, \href{https://arxiv.org/abs/1711.07500}{arXiv:1711.07500}.

\bibitem[Evenbly and White(2016)]{evenbly2016entanglement}
Glen Evenbly and Steven~R White.
\newblock Entanglement renormalization and wavelets.
\newblock {\em Physical Review Letter}, 116:\penalty0 140403, 2016,
  \href{https://arxiv.org/abs/1602.01166}{arXiv:1602.01166}.

\bibitem[Haegeman et~al.(2018)Haegeman, Swingle, Walter, Cotler, Evenbly, and
  Scholz]{haegeman2018rigorous}
Jutho Haegeman, Brian Swingle, Michael Walter, Jordan Cotler, Glen Evenbly, and
  Volkher~B Scholz.
\newblock Rigorous free-fermion entanglement renormalization from wavelet
  theory.
\newblock {\em Physical Review X}, 8:\penalty0 011003, 2018,
  \href{https://arxiv.org/abs/1707.06243}{arXiv:1707.06243}.

\bibitem[Witteveen and Walter(2021)]{witteveen2020wavelet}
Freek Witteveen and Michael Walter.
\newblock {Bosonic entanglement renormalization circuits from wavelet theory}.
\newblock {\em SciPost Phys}, 10:\penalty0 143, 2021.

\bibitem[Maldacena(1999)]{maldacena1999large}
Juan Maldacena.
\newblock The large-{$N$} limit of superconformal field theories and
  supergravity.
\newblock {\em International Journal of Theoretical Physics}, 38:\penalty0
  1113--1133, 1999,
  \href{https://arxiv.org/abs/hep-th/9711200}{arXiv:hep-th/9711200}.

\bibitem[Swingle(2012)]{swingle2012entanglement}
Brian Swingle.
\newblock Entanglement renormalization and holography.
\newblock {\em Physical Review D}, 86:\penalty0 065007, 2012,
  \href{https://arxiv.org/abs/0905.1317}{arXiv:0905.1317}.

\bibitem[Bao et~al.(2015)Bao, Cao, Carroll, Chatwin-Davies, Hunter-Jones,
  Pollack, and Remmen]{bao2015consistency}
Ning Bao, ChunJun Cao, Sean~M Carroll, Aidan Chatwin-Davies, Nicholas
  Hunter-Jones, Jason Pollack, and Grant~N Remmen.
\newblock Consistency conditions for an {AdS} multiscale entanglement
  renormalization ansatz correspondence.
\newblock {\em Physical Review D}, 91\penalty0 (12):\penalty0 125036, 2015.

\bibitem[Milsted and Vidal(2018{\natexlab{a}})]{milsted2018geometric}
Ashley Milsted and Guifr{\'e} Vidal.
\newblock Geometric interpretation of the multi-scale entanglement
  renormalization ansatz.
\newblock {\em arXiv preprint arXiv:1812.00529}, 2018{\natexlab{a}}.

\bibitem[Pastawski et~al.(2015)Pastawski, Yoshida, Harlow, and
  Preskill]{pastawski2015holographic}
Fernando Pastawski, Beni Yoshida, Daniel Harlow, and John Preskill.
\newblock Holographic quantum error-correcting codes: Toy models for the
  bulk/boundary correspondence.
\newblock {\em Journal of High Energy Physics}, 2015:\penalty0 1--55, 2015,
  \href{https://arxiv.org/abs/1503.06237}{arXiv:1503.06237}.

\bibitem[Yang et~al.(2016)Yang, Hayden, and Qi]{yang2016bidirectional}
Zhao Yang, Patrick Hayden, and Xiao-Liang Qi.
\newblock Bidirectional holographic codes and sub-{AdS} locality.
\newblock {\em Journal of High Energy Physics}, 2016:\penalty0 175, 2016,
  \href{https://arxiv.org/abs/1510.03784}{arXiv:1510.03784}.

\bibitem[Hayden et~al.(2016)Hayden, Nezami, Qi, Thomas, Walter, and
  Yang]{hayden2016holographic}
Patrick Hayden, Sepehr Nezami, Xiao-Liang Qi, Nathaniel Thomas, Michael Walter,
  and Zhao Yang.
\newblock Holographic duality from random tensor networks.
\newblock {\em Journal of High Energy Physics}, 2016:\penalty0 9, 2016,
  \href{https://arxiv.org/abs/1601.01694}{arXiv:1601.01694}.

\bibitem[Nezami and Walter(2016)]{nezami2016multipartite}
Sepehr Nezami and Michael Walter.
\newblock Multipartite entanglement in stabilizer tensor networks.
\newblock 2016, \href{https://arxiv.org/abs/1608.02595}{arXiv:1608.02595}.

\bibitem[Battle(1999)]{battle1999wavelets}
Guy Battle.
\newblock {\em Wavelets and renormalization}.
\newblock World Scientific, 1999.

\bibitem[Qi(2013)]{qi2013exact}
Xiao-Liang Qi.
\newblock Exact holographic mapping and emergent space-time geometry.
\newblock 2013, \href{https://arxiv.org/abs/1309.6282}{arXiv:1309.6282}.

\bibitem[Lee(2017)]{lee2017generalized}
Ching~Hua Lee.
\newblock Generalized exact holographic mapping with wavelets.
\newblock {\em Physical Review B}, 96:\penalty0 245103, 2017,
  \href{https://arxiv.org/abs/1609.06241}{arXiv:1609.06241}.

\bibitem[Singh and Brennen(2016)]{singh2016holographic}
Sukhwinder Singh and Gavin~K Brennen.
\newblock Holographic construction of quantum field theory using wavelets.
\newblock 2016, \href{https://arxiv.org/abs/1606.05068}{arXiv:1606.05068}.

\bibitem[Evenbly and White(2018)]{evenbly2018representation}
Glen Evenbly and Steven~R White.
\newblock Representation and design of wavelets using unitary circuits.
\newblock {\em Physical Review A}, 97\penalty0 (5):\penalty0 052314, 2018,
  \href{https://arxiv.org/abs/1605.07312}{arXiv:1605.07312}.

\bibitem[Francesco et~al.(2012)Francesco, Mathieu, and
  S{\'e}n{\'e}chal]{francesco2012conformal}
Philippe Francesco, Pierre Mathieu, and David S{\'e}n{\'e}chal.
\newblock {\em Conformal Field Theory}.
\newblock Springer Science \& Business Media, 2012.

\bibitem[Selesnick(2002)]{selesnick2002design}
Ivan~W Selesnick.
\newblock The design of approximate {Hilbert} transform pairs of wavelet bases.
\newblock {\em IEEE Trans. Sig. Process.}, 50\penalty0 (5):\penalty0
  1144--1152, 2002.

\bibitem[Selesnick et~al.(2005)Selesnick, Baraniuk, and
  Kingsbury]{selesnick2005dual}
Ivan~W Selesnick, Richard~G Baraniuk, and Nick~C Kingsbury.
\newblock The dual-tree complex wavelet transform.
\newblock {\em IEEE signal processing magazine}, 22\penalty0 (6):\penalty0
  123--151, 2005.

\bibitem[Yu and Ozkaramanli(2005)]{yu2005hilbert}
Runyi Yu and Huseyin Ozkaramanli.
\newblock Hilbert transform pairs of orthogonal wavelet bases: Necessary and
  sufficient conditions.
\newblock {\em IEEE Transactions on Signal Processing}, 53\penalty0
  (12):\penalty0 4723--4725, 2005.

\bibitem[Chaudhury and Unser(2009)]{chaudhury2009construction}
Kunal~Narayan Chaudhury and Michael Unser.
\newblock Construction of {H}ilbert transform pairs of wavelet bases and
  {G}abor-like transforms.
\newblock {\em IEEE Transactions on Signal Processing}, 57\penalty0
  (9):\penalty0 3411--3425, 2009.

\bibitem[Chaudhury and Unser(2010)]{chaudhury2010hilbert}
Kunal~Narayan Chaudhury and Michael Unser.
\newblock On the {H}ilbert transform of wavelets.
\newblock {\em IEEE transactions on signal processing}, 59\penalty0
  (4):\penalty0 1890--1894, 2010.

\bibitem[Achard et~al.(2017)Achard, Gannaz, Clausel, and Roueff]{achard2017new}
Sophie Achard, Ir{\`e}ne Gannaz, Marianne Clausel, and Fran{\c{c}}ois Roueff.
\newblock New results on approximate {H}ilbert pairs of wavelet filters with
  common factors.
\newblock 2017, \href{https://arxiv.org/abs/1710.09095}{arXiv:1710.09095}.

\bibitem[Cardy(1986)]{cardy1986operator}
John~L Cardy.
\newblock Operator content of two-dimensional conformally invariant theories.
\newblock {\em Nuclear Physics B}, 270:\penalty0 186--204, 1986.

\bibitem[Pfeifer et~al.(2009)Pfeifer, Evenbly, and
  Vidal]{pfeifer2009entanglement}
Robert~NC Pfeifer, Glen Evenbly, and Guifr{\'e} Vidal.
\newblock Entanglement renormalization, scale invariance, and quantum
  criticality.
\newblock {\em Physical Review A}, 79\penalty0 (4):\penalty0 040301, 2009,
  \href{https://arxiv.org/abs/0810.0580}{arXiv:0810.0580}.

\bibitem[Kim and Kastoryano(2017)]{kim2017entanglement}
Isaac~H Kim and Michael~J Kastoryano.
\newblock Entanglement renormalization, quantum error correction, and bulk
  causality.
\newblock {\em Journal of High Energy Physics}, 2017:\penalty0 40, 2017,
  \href{https://arxiv.org/abs/1701.00050}{arXiv:1701.00050}.

\bibitem[Fuchs(1995)]{fuchs1995affine}
J{\"u}rgen Fuchs.
\newblock {\em Affine {Lie} algebras and quantum groups: An introduction, with
  applications in conformal field theory}.
\newblock Cambridge University Press, 1995.

\bibitem[Wassermann(1998)]{wassermann1998operator}
Antony Wassermann.
\newblock Operator algebras and conformal field theory {III}. fusion of
  positive energy representations of {$LSU(N)$} using bounded operators.
\newblock {\em Inventiones mathematicae}, 133\penalty0 (3):\penalty0 467--538,
  1998.

\bibitem[Zini and Wang(2018)]{zini2018conformal}
Modjtaba~Shokrian Zini and Zhenghan Wang.
\newblock Conformal field theories as scaling limit of anyonic chains.
\newblock {\em Communications in Mathematical Physics}, 363:\penalty0 877--953,
  2018, \href{https://arxiv.org/abs/1706.08497}{arXiv:1706.08497}.

\bibitem[K{\"o}nig and Scholz(2016)]{konig2016matrix}
Robert K{\"o}nig and Volkher~B Scholz.
\newblock Matrix product approximations to multipoint functions in
  two-dimensional conformal field theory.
\newblock {\em Physical review letters}, 117\penalty0 (12):\penalty0 121601,
  2016, \href{https://arxiv.org/abs/1601.00470}{arXiv:1601.00470}.

\bibitem[K{\"o}nig and Scholz(2017)]{konig2017matrix}
Robert K{\"o}nig and Volkher~B Scholz.
\newblock Matrix product approximations to conformal field theories.
\newblock {\em Nuclear Physics B}, 920:\penalty0 32--121, 2017,
  \href{https://arxiv.org/abs/1509.07414}{arXiv:1509.07414}.

\bibitem[Preskill(2018)]{preskill2018quantum}
John Preskill.
\newblock Quantum computing in the {NISQ} era and beyond.
\newblock 2018, \href{https://arxiv.org/abs/1801.00862}{arXiv:1801.00862}.

\bibitem[Bratteli and Robinson(2003)]{bratteli2003operator}
Ola Bratteli and Derek~W Robinson.
\newblock {\em Operator Algebras and Quantum Statistical Mechanics~2}.
\newblock Springer, 2003.

\bibitem[Carey and Ruijsenaars(1987)]{carey1987fermion}
Alan Carey and Simon Ruijsenaars.
\newblock On fermion gauge groups, current algebras and {K}ac-{M}oody algebras.
\newblock {\em Acta Applicandae Mathematica}, 10:\penalty0 1--86, 1987.

\bibitem[Lundberg(1976)]{lundberg1976quasi}
Lars-Erik Lundberg.
\newblock Quasi-free ``second quantization''.
\newblock {\em Communications in Mathematical Physics}, 50:\penalty0 103--112,
  1976.

\bibitem[Araki(1971)]{araki1971quasifree}
Huzihiro Araki.
\newblock On quasifree states of {CAR} and {Bogoliubov} automorphisms.
\newblock {\em Publications of the Research Institute for Mathematical
  Sciences}, 6:\penalty0 385--442, 1971.

\bibitem[Mallat(2008)]{mallat2008wavelet}
St{\'e}phane Mallat.
\newblock {\em A Wavelet Tour of Signal Processing}.
\newblock Academic Press, 2008.

\bibitem[Wojtaszczyk(1997)]{wojtaszczyk1997mathematical}
Przemyslaw Wojtaszczyk.
\newblock {\em A Mathematical Introduction to Wavelets}.
\newblock Cambridge University Press, 1997.

\bibitem[Selesnick(2001)]{selesnick2001hilbert}
Ivan~W Selesnick.
\newblock Hilbert transform pairs of wavelet bases.
\newblock {\em IEEE Signal Processing Letters}, 8:\penalty0 170--173, 2001.

\bibitem[Bravyi and Kitaev(2002)]{bravyi2002fermionic}
Sergey~B Bravyi and Alexei~Y Kitaev.
\newblock Fermionic quantum computation.
\newblock {\em Annals of Physics}, 298:\penalty0 210--226, 2002,
  \href{https://arxiv.org/abs/quant-ph/0003137}{arXiv:quant-ph/0003137}.

\bibitem[Jozsa and Miyake(2008)]{jozsa2008matchgates}
Richard Jozsa and Akimasa Miyake.
\newblock Matchgates and classical simulation of quantum circuits.
\newblock In {\em Proceedings of the Royal Society of London A: Mathematical,
  Physical and Engineering Sciences}, volume 464, pages 3089--3106, 2008,
  \href{https://arxiv.org/abs/0804.4050}{arXiv:0804.4050}.

\bibitem[Corboz and Vidal(2009)]{corboz2009fermionic}
Philippe Corboz and Guifr{\'e} Vidal.
\newblock Fermionic multiscale entanglement renormalization ansatz.
\newblock {\em Physical Review B}, 80:\penalty0 165129, 2009,
  \href{https://arxiv.org/abs/0907.3184}{arXiv:0907.3184}.

\bibitem[Milsted and Vidal(2018{\natexlab{b}})]{milsted2018tensor}
Ashley Milsted and Guifr{\'e} Vidal.
\newblock Tensor networks as conformal transformations.
\newblock 2018{\natexlab{b}},
  \href{https://arxiv.org/abs/1805.12524}{arXiv:1805.12524}.

\end{thebibliography}

\end{document}